\documentclass[11pt]{article}

\usepackage[margin=1.02in]{geometry}
\usepackage{amsmath,amssymb,amsthm,mathtools,bbm}
\usepackage{microtype}
\usepackage{enumitem}
\usepackage{booktabs}
\usepackage{algorithm,algpseudocode}
\usepackage{xcolor}

\newcommand{\bs}{\sigma}
\newcommand{\A}{\mathcal{A}}
\usepackage{hyperref}
\usepackage{bookmark}
\usepackage[nameinlink,capitalize,noabbrev]{cleveref}
\usepackage{amsmath}

\DeclareMathOperator*{\argmin}{arg\,min}
\newcommand{\ip}[2]{\langle #1,#2\rangle}
\newcommand{\R}{\mathbb{R}}
\newcommand{\GOE}{\operatorname{GOE}}
\newcommand{\cN}{\mathcal{N}}
\newcommand{\F}{\mathrm{F}}
\hypersetup{
  colorlinks=true,
  linkcolor=black,
  citecolor=black,
  urlcolor=black,
  pdftitle={Stable and Online Algorithms for Random Matrix Discrepancy},
  pdfauthor={Eren C. Kizildag and Shuangping Li},
  pdfsubject={Gaussian matrix discrepancy}
}

\newtheorem{theorem}{Theorem}[section]
\newtheorem{proposition}[theorem]{Proposition}
\newtheorem{lemma}[theorem]{Lemma}
\newtheorem{corollary}[theorem]{Corollary}
\newtheorem{definition}[theorem]{Definition}
\newtheorem{remark}[theorem]{Remark}

\crefname{theorem}{theorem}{theorems}
\Crefname{theorem}{Theorem}{Theorems}
\crefname{proposition}{proposition}{propositions}
\Crefname{proposition}{Proposition}{Propositions}
\crefname{lemma}{lemma}{lemmas}
\Crefname{lemma}{Lemma}{Lemmas}
\crefname{corollary}{corollary}{corollaries}
\Crefname{corollary}{Corollary}{Corollaries}
\crefname{remark}{remark}{remarks}
\Crefname{remark}{Remark}{Remarks}
\newcommand{\distr}{\stackrel{d}{=}}

\newcommand{\cF}{\mathcal F}
\newcommand{\cS}{\mathcal S}

\newcommand{\E}{\mathbb E}
\newcommand{\Pp}{\mathbb P}

\newcommand{\dist}{d_{\mathrm H}}

\newcommand{\eps}{\varepsilon}

\newcommand{\svec}{\operatorname{svec}}
\newcommand{\mat}{\operatorname{mat}}
\newcommand{\Sym}{\operatorname{Sym}}
\newcommand{\bY}{\overline{Y}}
\newcommand{\ind}{\mathbbm{1}}

\definecolor{addedblue}{RGB}{0,0,0}
\newcommand{\newedit}[1]{{\color{addedblue}#1}}
\usepackage[T1]{fontenc}
\usepackage{tikz}
\usepackage{graphicx}
\usetikzlibrary{arrows.meta}
\makeatletter
\providecommand{\theHALG@line}{}
\renewcommand{\theHALG@line}{\thealgorithm.\arabic{ALG@line}}
\makeatother
\newcommand{\norm}[1]{\left\|#1\right\|}
\newcommand{\op}{\mathrm{op}}
\newcommand{\Tr}{\operatorname{Tr}}
\newcommand{\Var}{\operatorname{Var}}

\newcommand{\Round}{\operatorname{Round}}
\newcommand{\convp}{\xrightarrow{\Pp}}
\newcommand{\ellzero}{\ell_0}
\hypersetup{pdftitle={Stable and Online Algorithms for Random Matrix Discrepancy},
 pdfauthor={Eren C. Kizildag and Shuangping Li},
 pdfsubject={Algorithms and structural barriers for Gaussian matrix discrepancy}}
\title{\newedit{Stable and Online Algorithms for Random Matrix Discrepancy}}
\author{{\sf Eren C. K{\i}z{\i}lda\u{g}}\thanks{Department of Statistics, University of Illinois Urbana-Champaign; e-mail: {\tt kizildag@illinois.edu}.} 
\and
{\sf Shuangping Li}\thanks{Department of Statistics and Data Science, Yale University; e-mail: {\tt shuangping.li@yale.edu}.}}
\date{}
\begin{document}
\pagenumbering{gobble}
\maketitle
\begin{abstract}
\color{addedblue}
We study the average-case matrix discrepancy problem: given independent normalized $d\times d$ Gaussian orthogonal ensemble matrices $A_1,\dots,A_N$ and a fixed margin $\kappa>0$, find signs $\sigma_1,\dots,\sigma_N\in\{-1,1\}$ such that the operator norm of $\sum_{i=1}^N \sigma_i A_i$ is at most $\kappa\sqrt{N}$. Focusing on the proportional regime $N/d^2\to \tau\in(0,\infty)$ as $d\to\infty$ followed by the small-margin limit $\kappa\downarrow 0$, we characterize the density required by stable offline algorithms and by online algorithms.

In the offline setting, we construct a polynomial-time \emph{recenter-and-round} algorithm that is noise-stable and succeeds whenever $\tau=\Omega(\frac{1}{\kappa^2\log(1/\kappa)})$, along with a matching lower bound for all stable algorithms. In the online setting where each sign must be chosen irrevocably upon observing the corresponding matrix, we determine the exact limiting performance of the \emph{Frobenius-greedy} algorithm, establishing that it succeeds when $\tau>\tau_{\rm FG}(\kappa)\sim \frac{\pi}{4\kappa^2}$, as well as a matching lower bound for all online algorithms by conditioning on a revealed prefix. At the core of our algorithms lies rotational symmetry, which enables us to transfer Frobenius norm control into operator norm guarantees.

Together, our results identify the algorithmic phase transition points for random matrix discrepancy: $\Theta(\frac{1}{\kappa^2\log(1/\kappa)})$ for stable offline algorithms and $\Theta(\frac{1}{\kappa^2})$ for online algorithms. Both thresholds lie far above the satisfiability scale $\Theta(\log(1/\kappa))$, as shown by Maillard~\cite{maillard2025}.
\end{abstract}
\newpage
\tableofcontents
{\color{addedblue}
\newpage
\pagenumbering{arabic}
\section{Introduction}\label{sec:intro}
Discrepancy minimization is a central problem in combinatorics and theoretical computer science, with applications in convex geometry, numerical integration, design of randomized experiments, and rounding in optimization~\cite{matousek1999geometric,chazelle2000discrepancy,harshaw2024balancing}. It asks how to partition a collection
of objects into two groups that are as balanced as possible. In its classical vector formulation, given $v_1,\dots,v_N\in\R^m$
with $\|v_i\|_\infty\le 1$, one seeks signs
$\sigma_i\in\{-1,1\}$ attaining minimal discrepancy, 
\[
 \operatorname{disc}(v_1,\ldots,v_N)
 :=\min_{\sigma\in\{-1,1\}^N}
 \left\|\sum_{i=1}^N\sigma_i v_i\right\|_\infty.
\]
\paragraph{Prior work} Choosing the signs independently and uniformly at random yields an $O(\sqrt{N\log m})$ discrepancy, which becomes $O(\sqrt{N\log N})$ for $m=\Theta(N)$. Removing the $\sqrt{\log N}$ factor turned out to be a major challenge. Building on the partial coloring technique of Beck~\cite{beck1981roth}, Spencer’s celebrated \emph{`six standard deviations suffice’} theorem~\cite{spencer1985} improved this guarantee to $O(\sqrt{N\log(2m/N)})$ ($m\ge N$). For the square case, this is optimal up to constants as there are $N$ vectors in $[-1,1]^N$ with discrepancy $\Omega(\sqrt{N})$.\footnote{See, e.g.,~\cite[Section 1.5]{chazelle2000discrepancy} for a construction based on Hadamard matrices.}

Spencer’s original proof was non-constructive. Obtaining efficient (i.e., polynomial-time) algorithms for finding signings achieving small discrepancy required decades of further progress. In a 2010 breakthrough, Bansal~\cite{bansal2010constructive} gave the first polynomial-time construction attaining Spencer's bound up to constants for $m=\Theta(N)$, while Lovett and Meka~\cite{lovettMeka2015} subsequently developed a constructive partial coloring framework attaining the full Spencer bound, and Rothvo{\ss}~\cite{rothvoss2017convex}
extended this framework to general symmetric convex sets. The connection to convex geometry is also central to Banaszczyk's vector-balancing theorem~\cite{banaszczyk1998} and its algorithmic realization through the Gram--Schmidt walk~\cite{bansal2019gramschmidt}.

Another longstanding problem in discrepancy theory was the Koml\'os conjecture (recorded by Spencer~\cite{Spencer1987TenLectures}), which asserts that there exists a universal $C>0$
such that every collection $v_1,\dots,v_N\in\R^m$ with $\|v_i\|_2\le 1$ admits signs $\sigma_i\in\{-1,1\}$ satisfying $\|\sum_{i=1}^N \sigma_i v_i\|_\infty\le C$. Following recent progress by Bansal and
Jiang~\cite{bansal2025decoupling,bansal2026exposition}, Guo, Fang, and Lu~\cite{guo2026directional} resolved the conjecture; see also a simplified proof by Karingula and Lovett~\cite{karingula2026elementary}  and an extension to complex-valued signs ($\sigma_i\in\mathbb{C},|\sigma_i|=1$) by Guillen and Kobzar~\cite{guillenKobzar2026complex}. 

\paragraph{Matrix discrepancy} Matrix discrepancy is a natural generalization of Spencer’s vector balancing setup to matrices: given symmetric matrices $A_1,\dots,A_N\in\R^{d\times d}$ with $\|A_i\|_{\rm op}\le 1$, one seeks signs $\sigma_i\in\{-1,1\}$ attaining minimal discrepancy, defined by the operator norm of the signed sum:
\begin{equation}\label{eq:MatrixDiscrepancy}
     \operatorname{disc}_{\rm op}(A_1,\ldots,A_N)
 =\min_{\sigma\in\{-1,+1\}^N}
 \left\|\sum_{i=1}^N\sigma_i A_i\right\|_{\rm op}.
 \end{equation}
Taking $A_i=\operatorname{diag}(v_i)$ recovers Spencer's original setup. As in the vector discrepancy, choosing $\sigma_i$ uniformly at random yields an $O(\sqrt{N\log d})$ discrepancy by matrix concentration inequalities~\cite{ahlswede2002strong,tropp2012userfriendly}. The \emph{Matrix Spencer conjecture}, formulated in works of
Zouzias~\cite{zouzias2012matrix} and
Meka~\cite{meka2014discrepancy}, asks guarantees analogous to vector case—i.e., whether  one can improve upon the random colorings and subsequently obtain a $O(\sqrt{N\max\{1,\log(d/N)\}}$ bound. Namely, the Matrix Spencer conjecture is a generalization of Spencer’s theorem  from diagonal to arbitrary symmetric matrices. 

This extension introduces a substantial geometric obstacle. For a symmetric matrix $M$, $\|M\|_{\rm op} = \sup_{\|u\|_2 =1}|u^T Mu|$, 
so one must simultaneously control  an infinite family of quadratic forms in every direction.  Unlike diagonal matrices, general symmetric input matrices need not share an eigenbasis. Thus, the operator-norm constraint cannot
be reduced to balancing $d$ fixed coordinate directions in general. Further, even for a single Gaussian signing, the probability that all eigenvalues lie in a prescribed interval is governed by a spectral large-deviation problem—contrasting the Gaussian vector discrepancy with independent coordinates. Extending available methods (e.g., partial coloring) to this setting has therefore
motivated new geometric estimates for the associated discrepancy bodies; see~\cite{reisRothvoss2020,dadushJiangReis2022}. 

Despite substantial progress for structured inputs~\cite{levyRamadasRothvoss2017,bandeiraKuniskyMixonZeng2024,bandeiraBolcskei2026,akbasSra2026algebraic}, the general Matrix Spencer conjecture remained a major open problem for over a decade. In the square case ($d=N$), Hopkins, Raghavendra and Shetty~\cite{hopkinsRaghavendraShetty2022} resolved the conjecture under an additional assumption $\|A_i\|_F\le N^{1/4}$ using ideas from quantum communication complexity. Dadush, Jiang and Reis~\cite{dadushJiangReis2022} developed a mirror descent framework to obtain partial colorings, resolving the conjecture for block-diagonal matrices with block side $O(N/d)$, while Bansal, Jiang, and Meka~\cite{bansalJiangMeka2022} resolved the conjecture when each matrix has rank at most $O(N/\log^3 N)$ using refined matrix concentration inequalities~\cite{bandeiraBoedihardjoVanHandel2023}.

A series of breakthroughs beginning in August 2026 completely resolved the Matrix Spencer conjecture and brought substantial algorithmic advances. Akbas and Sra~\cite{akbasSra2026proof} gave a constructive proof in the square case; their exposition was subsequently simplified by Bansal and Kook~\cite{bansalKook2026simpler}. Soon after, Song and Zhang~\cite{songZhang2026matrix} obtained a randomized algorithm with almost-linear runtime in the dense input size measured in the real-arithmetic model. Kathuria~\cite{kathuria2026matrix} established the conjectured rectangular dependence with a randomized polynomial-time algorithm and, in a companion work~\cite{kathuria2026kadison}, gave a deterministic polynomial-time algorithm for Weaver's discrepancy theorem, both in the real-arithmetic model; see also the deterministic algorithm of Ezeunala and Jiang~\cite{ezeunalaJiang2026}. Originally introduced by Weaver~\cite{weaver2004}, the latter model was instrumental in the resolution of Kadison--Singer problem from functional analysis by
Marcus, Spielman, and Srivastava~\cite{marcusSpielmanSrivastava2015}; see also rank-one discrepancy bounds~\cite{kyngLuhSong2020} and worst-case hardness results~\cite{spielmanZhang2022weaver}. Even more recently, Akbas and Sra~\cite{akbasSra2026boolean}
developed small-ball inequalities for Boolean matrix series, recovering various high-profile problems including Kadison--Singer, Matrix Spencer, and Koml\'os conjectures as corollaries.

\paragraph{Average-case setting}
Another line of research studies  discrepancy minimization for random inputs rather than worst-case instances. Specifying an input distribution allows one to investigate typical optimum, phase transitions, and the limits of efficient algorithms in finer detail.

For independent standard normal vectors $v_1,\dots,v_N\sim\mathcal{N}(0,I_d)$, nonconstructive arguments yield that $\operatorname{disc}(v_1,\dots,v_N) =\Theta(\sqrt{N}\,2^{-N/d})$ with high probability (whp) whenever $d=o(N)$~\cite{turner2020balancing,costello2009balancing,karmarkar1986probabilistic}. By contrast, the best known polynomial-time algorithm achieves only $2^{-\Omega(\log^2 N/d)}$, provided $d$ grows sufficiently slow~\cite{turner2020balancing,yakir1996differencing,karmarkar1982differencing}. Substantial separation between these existential and algorithmic guarantees illustrates a conjectured \emph{statistical--computational gap}: finding near-optimal signings efficiently may be hard. For rigorous evidences supporting this conjecture, see~\cite{gamarnik2023algorithmic,vafa2025symmetric}. Discrete input distributions have also received considerable attention; e.g.,~\cite{altschuler2022discrepancy} obtained constant-discrepancy guarantees for Bernoulli/Poisson random vectors.

A closely related model is the \emph{symmetric binary perceptron (SBP)}~\cite{aubinPerkinsZdeborova2019} which asks for a signing with discrepancy at most $\kappa\sqrt{N}$ (for fixed $\kappa>0$) in the proportional regime $d=\Theta(N)$. For this model, prior work characterized the satisfiability threshold precisely and yielded sharp algorithmic lower bounds for restricted classes of algorithms. We discuss SBP and these guarantees below.

We pursue this program for random matrix discrepancy. In this setting, the objective is to minimize the discrepancy $\operatorname{disc}_{\rm op}(A_1,\ldots,A_N)$~\eqref{eq:MatrixDiscrepancy} of a collection of independent random matrices~\cite{kuniskyZhang2023}. We adopt the satisfiability version due to Maillard~\cite{maillard2025}: given independent, normalized Gaussian Orthogonal Ensemble (GOE) matrices in the proportional regime $N/d^2 \to \tau\in(0,\infty)$ as $d\to\infty$ and a fixed margin $\kappa>0$, find signs such that $\|\sum_{i=1}^N \sigma_iA_i\|_{\rm op}\le \kappa\sqrt{N}$. For any signing independent of the input, $N^{-1/2}\sum_{i=1}^N \sigma_i A_i$ is itself a normalized GOE matrix. Consequently, $N^{-1/2}\|\sum_{i=1}^N \sigma_i A_i\|_{\rm op}\to 2$ in probability as $d\to\infty$~\cite{anderson2010introduction}. We therefore focus on margins $\kappa<2$, which requires an improvement over random independent signings.

Motivated by a line of work on binary perceptron~\cite{gamarnikKizildagPerkinsXu2022,gamarnikKizildagPerkinsXu2023,liSchrammZhou2024}, we focus on the small margin limit, $\kappa\downarrow 0$ following $d\to\infty$. For all sufficiently small fixed $\kappa>0$, Maillard's results~\cite{maillard2025} imply that signings $\sigma\in\{-1,1\}^N$ with $\|\sum_i \sigma_i A_i\|_{\op}\le \kappa\sqrt{N}$ exist provided $\tau = \Omega_\kappa(\log\frac1\kappa)$. This leads to a central algorithmic question: 
\begin{center}
    \emph{For a prescribed margin $\kappa$, at what densities $\tau$ do polynomial-time algorithms succeed whp?}
    \end{center}
In the small-margin limit, we completely resolve this question for stable offline algorithms and online algorithms, by devising polynomial-time algorithms within these classes and proving lower bounds of matching order. Our results identify the location of algorithmic phase transition at $\Theta((\kappa^2\log(1/\kappa))^{-1})$ for stable offline algorithms and $\Theta(\kappa^{-2})$ for online algorithms, both of which lie far above the logarithmic satisfiability scale $\Theta(\log\frac1\kappa)$.
\paragraph{Model and satisfiability} We now discuss the model in detail. Let $A_1,\dots,A_N\in\R^{d\times d}$ be iid matrices drawn from the normalized Gaussian Orthogonal Ensemble, $A_i\sim \GOE(d)$. That is, for each $i$, $A_i$ has independent normal upper triangular entries: (a) $(A_i)_{k\ell}=(A_i)_{\ell k}\sim\cN(0,1/d)$ for $1\le k<\ell \le d$, and (b) $(A_i)_{kk}\sim\cN(0,2/d)$ for $1\le k\le d$. For a fixed margin $\kappa>0$, we seek 
signs $\sigma\in\{-1,1\}^N$ satisfying
\begin{equation}\label{eq:target} \left\|\sum_{i=1}^N\sigma_i A_i\right\|_{\rm op} \le \kappa\sqrt{N}. \end{equation}
For any $\sigma\in\{-1,1\}^N$ and $\kappa<2$, the event~\eqref{eq:target} has probability $\exp(-\Theta(d^2))$~\cite{maillard2025}, motivating the \emph{proportional regime} $N/d^2\to\tau\in(0,\infty)$ as $d\to\infty$, as the regime of interest. Below,  we refer $\tau$ as the \emph{density}. In what follows, we first fix $\kappa$ and $\tau$ and let $d\to\infty$; we then study the resulting density bounds as $\kappa\downarrow 0$. 

Maillard's satisfiability bounds~\cite{maillard2025} imply that, for every $\epsilon\in(0,1)$ and sufficiently small fixed $\kappa>0$,  there are no signings satisfying~\eqref{eq:target} when $\tau<(1-\epsilon)\tau_{\rm SAT}(\kappa)$, while they exist when $\tau>(1+\epsilon)\tau_{\rm SAT}(\kappa)$, both whp as $d\to\infty$, where $\tau_{\rm SAT}(\kappa)\sim \log(1/\kappa)/(2\log 2) = \Theta_\kappa(\log(1/\kappa))$.\footnote{This guarantee is asymptotic as $\kappa\downarrow 0$; for a fixed $\kappa>0$, exact location of the satisfiability threshold remains open~\cite{maillard2025}.} Thus, signings with discrepancy at most $\kappa\sqrt{N}$ exist as soon as $\tau$ grows only logarithmically in $1/\kappa$. This argument is nonconstructive; the main focus of our paper is on how much additional density is required to identify such signs in polynomial time.

\subsection{Our results}\label{sec:informal}
We study algorithmic phase transitions for random matrix discrepancy in both \emph{offline} and \emph{online} settings. In the offline setting, the algorithm may inspect the entire input before choosing the signs. Our particular focus is on \emph{stable} offline algorithms, for which tiny input perturbations lead to only small changes in the output (Definition~\ref{def:strong}). This class encompasses a broad range of algorithms for random computational problems; see below. In the online setting, which has been extensively studied in discrepancy theory, the input matrices arrive sequentially, and the algorithm must choose each sign irrevocably upon observing the corresponding matrix.

We now informally state our main contributions. We say that an algorithm \emph{succeeds} if it produces signs satisfying~\eqref{eq:target} whp over its input and internal randomness as $d\to\infty$. 
\begin{samepage}
\begin{theorem}[Offline stable algorithms, informal]\label{thm:informal-stable}
There exist universal constants $0<c<C<\infty$ such that, for every sufficiently small fixed $\kappa>0$:
\begin{enumerate}[label=(\roman*),leftmargin=*]
\item If $\tau\ge C/[\kappa^2\log(1/\kappa)]$, the polynomial-time recenter-and-round algorithm (Algorithm~\ref{alg:recenter}) succeeds and is stable in the sense of Definition~\ref{def:strong}.
\item If $\tau\le c/[\kappa^2\log(1/\kappa)]$, no stable algorithm succeeds, regardless of its running time.
\end{enumerate}
\end{theorem}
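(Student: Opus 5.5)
The plan is to work throughout in the vectorized picture. Write each $A_i$ through $\svec(A_i)\in\R^{n}$ with $n=d(d+1)/2$, scaled so that the Frobenius inner product becomes Euclidean. The GOE normalization then makes $g_i:=\sqrt{d}\,\svec(A_i)$ a standard Gaussian vector, up to the harmless diagonal factor. The problem turns into a vector discrepancy problem with $N\approx \tau n\cdot 2$ vectors in dimension $n\approx d^2/2$. The new feature is that the constraint is the operator norm of $\mat(\sum_i\sigma_i g_i)$ rather than an $\ell_\infty$ norm.

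The reduction rests on rotational symmetry. Suppose the signing is produced by an $O(d)$-equivariant procedure, meaning that conjugating every $A_i$ by a Haar orthogonal matrix leaves $\sigma$ unchanged in law. Then the residual $M=\sum_i\sigma_iA_i$ is conjugation-invariant. If moreover $M$ has a Gaussian-like isotropic structure, or its Frobenius norm is controlled and its spectrum is semicircle-like, then $\|M\|_{\op}\approx \frac{2}{d}\|M\|_\F\cdot\sqrt{d}/\sqrt{d}$, i.e.\ $\|M\|_{\op}\approx 2\|M\|_\F/\sqrt d$. So it suffices to get $\|M\|_\F\le (\kappa/2)\sqrt{Nd}$, which is a Frobenius/$\ell_2$ target.

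For (i), the recenter-and-round algorithm works in two stages. First, solve a least-squares relaxation: find $x\in[-1,1]^N$ with $\sum_i x_iA_i=0$ and most coordinates of $x$ saturated. This is possible because $N\gg n$, so the affine constraint has codimension $n$; one can start from a stable projected point such as $x=$ a clipped, rescaled projection of $\mathrm{sign}$-type data onto $\ker$. Second, round $x$ randomly and independently, $\Pp(\sigma_i=1)=(1+x_i)/2$. The residual is then a weighted sum $\sum_i(\sigma_i-x_i)A_i$. Conditionally on $x$, by the rotational argument above its operator norm is about $2\sqrt{\sum_i(1-x_i^2)}$. The requirement is therefore $\sum_i (1-x_i^2)\le \kappa^2N/4$. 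The $\log(1/\kappa)$ gain comes from the recentering step. The vector $y=$ minimum-norm solution of the relaxation has Gaussian coordinates of scale $\sqrt{N/n}$-adjusted. Scaling it so that only a fraction of coordinates are unsaturated gives $\E[1-\min(|y|,1)^2]\approx$ Gaussian mass near zero. The density must then satisfy $n/N\lesssim \kappa^2\log(1/\kappa)$, via the standard truncated-Gaussian computation as in the symmetric binary perceptron. Stability holds because $x$ is a Lipschitz function of the input and the rounding uses independent coins.

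For (ii), the plan is to adapt the overlap gap property and ensemble argument of Gamarnik--K{\i}z{\i}lda\u{g}--Perkins--Xu to the matrix setting. Interpolate $A^{(t)}_i=\cos t\,A_i+\sin t\,A'_i$. Show by a first-moment bound that, for $\tau\le c/[\kappa^2\log(1/\kappa)]$, no pair of solutions for correlated instances has normalized Hamming distance in a window $[\beta_1,\beta_2]$ with $\beta_1\asymp$ something like $\kappa^2\tau$-scale. A stable algorithm's output moves continuously in small steps along the interpolation, so it would have to cross this window, which is a contradiction. The pair first-moment bound uses the probability that two correlated GOE matrices both have norm at most $\kappa\sqrt N$. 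That probability is $\exp(-d^2\,\Theta(\log(1/\kappa)))$, computed via the large-deviation rate of the spectral measure (Ben Arous--Guionnet). The rate is compared against the entropy $N h(\beta)$.

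I expect the main obstacle to be this pair large-deviation estimate with the right dependence on $\kappa$ and on the overlap. One has to handle correlated matrices $M_1,M_2$ with a shared component of norm $\sqrt{(1-2\beta)N}$ and independent parts of size $\sqrt{2\beta N}$. The approach I would try first is to bound the joint probability by requiring the independent part $M_1-M_2$ to have norm at most $2\kappa\sqrt N$. That is a single-GOE small-ball event with variance $\beta$, and it costs roughly $\exp(-\frac{d^2}{4}\log(\beta/\kappa^2))$. This yields the window for $\beta$ between $\kappa^2$ and order $\kappa^2\log(1/\kappa)$-scale overlaps, with matching $\tau$.
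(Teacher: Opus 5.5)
\textbf{Part (i).} There are two gaps. First, you invoke the wrong symmetry. Invariance of $M=\sum_i\sigma_iA_i$ under conjugation by $O(d)$ only randomizes the eigenvectors, so it says nothing about $\|M\|_{\op}/\|M\|_\F$. For example, a rank-one $\lambda vv^\top$ with Haar $v$ is conjugation-invariant and has ratio $1$. Your ``if moreover'' clause is therefore the entire content of the step. The paper instead uses invariance under $G\mapsto QG$ for every $Q\in O(D)$, i.e.\ under all Frobenius isometries of $\Sym_d$. This invariance holds because the QP sees $G$ only through $\|Gz\|_2$ and has a unique minimizer. Consequently $X=G\widehat\sigma$ is rotationally invariant in $\R^D$, its direction is uniform on $S^{D-1}$ and independent of $\|X\|_2$, and $\mat(X/\|X\|_2)\distr W/\|W\|_\F$ with $W\sim\GOE(d)$. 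Your projection, clipping and independent-coin procedure is also $O(D)$-invariant, so this part is repairable once you use the right group.

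Second, and more seriously, your fractional step is not justified. Clipping a vector $y\in\ker G$ destroys $Gx=0$. Once the scale is large enough that few coordinates are unsaturated, $\mathrm{clip}(y)$ is essentially $\mathrm{sign}(y)$, and there is no reason for $G\,\mathrm{sign}(y)$ to be small. Moreover, the truncated-Gaussian heuristic you sketch produces no logarithm: the unsaturated mass at scale $s$ is $\asymp1/s$. What you need is a point with $Gx\approx0$ and $\sum_i(1-x_i^2)\lesssim D/\log\alpha$. In the paper the $\sqrt{\log\alpha}$ gain comes from extreme columns, in three steps:
\begin{enumerate}
\item A VC argument shows that every direction $u$ has at least $pN/2$ signed columns with $\ip{u}{\xi_ig_i}\ge t=\frac12\sqrt{\log\alpha}$, robustly over a Hamming ball around $\xi$.
\item The zonotope containment $qtB_2^D\subseteq K_q$ then cancels a residual $R$ with flip budget $R/(2t)$.
\item Iterating QP plus rounding gives $R_{k+1}\lesssim\sqrt{DR_k/t}+D/t$, which contracts to $R=O(D/\sqrt{\log\alpha})$ within $O(\log\alpha)$ rounds.
\end{enumerate}

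\textbf{Part (ii).} A pairwise OGP cannot be proved by a first moment in this regime. The number of pairs at Hamming distance $\delta N$ is $2^N\binom N{\delta N}$, not $e^{Nh(\delta)}$; you dropped the factor $e^{N\log 2}$ for choosing the first solution. Here $N\approx2\tau D$, while the joint small-ball probability is, by your own estimate, only $e^{-\Theta(D\log(1/\kappa))}$. So the expected number of pairs is exponentially large whenever $\tau\gg\log(1/\kappa)$, in particular throughout $\tau\asymp1/(\kappa^2\log(1/\kappa))$. A first-moment pair OGP can at best reach the satisfiability scale. Your fallback bound through $M_1-M_2$ is worse still, since it discards the cost of $M_1$ entirely.

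The missing idea is an $m$-replica OGP with $m\gtrsim\tau$. The paper takes $m\ge\max\{50,\lceil32\tau\log2/\ellzero\rceil\}$, $1-\beta=8\kappa^2$ and $\eta=4\kappa^2/m$. It combines $B\succeq\frac{1-\beta}2I_m$ with $\|M\|_\F\le\sqrt d\|M\|_{\op}$ and a Gaussian density bound; no spectral large deviations are needed. Each replica then costs $e^{-D\ellzero/2}$ against entropy $e^{N(H(p_{\rm ov})+\gamma_\kappa)}$, and $m$ such factors overwhelm $e^{N\log2}$.

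Correspondingly, the reduction is no longer a single-path crossing argument. It needs three ingredients: the ensemble version over correlated instances $G_i(\theta)$, a chaos statement at the independent endpoints, and a multicolor Ramsey argument over $T=R_{Q+1}(m)$ interpolation paths. Together these extract $m$ feasible outputs with all pairwise overlaps in $[\beta-\eta,\beta]$.
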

\end{samepage}
Starting from independent random signs, Algorithm~\ref{alg:recenter} iteratively solves a strongly convex quadratic program over $[-1,1]^N$ that balances the Euclidean residual against the expected number of sign flips, then applies independent, mean-preserving rounding to obtain binary signs. The quadratic regularization term makes the objective function strongly convex, ensuring a unique minimizer and enabling the noise-stability analysis. Our analysis exploits a key geometric property of independent Gaussian columns: with high probability, every direction admits sufficiently many columns with large positive projections. Proved using learning-theoretic tools (VC inequality and Sauer’s lemma), this property enables fractional cancellations with a small number of flips, and obtaining a contraction estimate for the residual.
See Section~\ref{sec:offline-overview} for details.
\begin{figure}[H]
\centering
\begingroup
\color{black}
\definecolor{phaseHard}{RGB}{219,105,75}
\definecolor{phaseSuccess}{RGB}{33, 158, 33}
\definecolor{phaseGap}{RGB}{226,229,232}
\begin{tikzpicture}[x=1.06cm,y=0.82cm,font=\small,text=black,
 guide/.style={densely dashed,draw=black!60,line width=0.5pt},
 rowname/.style={anchor=east,font=\small\bfseries},
 bartext/.style={align=center,inner sep=1pt},
 scaletext/.style={anchor=north,align=center,inner sep=1pt}]
\draw[guide] (1.6,4.50) -- (1.6,0.14);
\draw[guide] (6.8,3.30) -- (6.8,0.14);
\draw[guide] (9,2.10) -- (9,0.14);
\node[rowname] at (-0.22,4) {Satisfiability};
\path[fill=phaseHard!33] (0,3.50) rectangle (1.48,4.50);
\path[fill=phaseSuccess!30] (1.72,3.50) rectangle (12,4.50);
\draw[black!35] (0,3.50) rectangle (12,4.50);
\draw[guide] (1.6,3.50) -- (1.6,4.50);
\node[bartext] at (0.74,4) {UNSAT};
\node[bartext] at (6.86,4) {SAT};
\node[rowname] at (-0.22,2.8) {Stable};
\path[fill=phaseHard!33] (0,2.30) rectangle (6.68,3.30);
\path[fill=phaseSuccess!30] (6.92,2.30) rectangle (12,3.30);
\draw[black!35] (0,2.30) rectangle (12,3.30);
\draw[guide] (6.8,2.30) -- (6.8,3.30);
\node[bartext] at (3.34,2.8) {Hard for stable algorithms};
\node[bartext] at (9.46,2.8) {Stable algorithms succeed};
\node[rowname] at (-0.22,1.6) {Online};
\path[fill=phaseHard!33] (0,1.10) rectangle (8.88,2.10);
\path[fill=phaseSuccess!30] (9.12,1.10) rectangle (12,2.10);
\draw[black!35] (0,1.10) rectangle (12,2.10);
\draw[guide] (9,1.10) -- (9,2.10);
\node[bartext] at (4.44,1.6) {Hard for all online algorithms};
\node[bartext] at (10.56,1.6) {Greedy succeeds};
\draw[->,line width=0.65pt] (0,0.45) -- (12.3,0.45);
\node[anchor=west,inner sep=1pt] at (12.34,0.45) {$\tau$};
\node[scaletext] at (1.6,0.05) {$\Theta\!\left(\log\frac1\kappa\right)$};
\node[scaletext] at (6.8,0.05) {$\Theta\!\left(\frac1{\kappa^2\log(1/\kappa)}\right)$};
\node[scaletext] at (9,0.05) {$\Theta(\kappa^{-2})$};
\end{tikzpicture}
\endgroup
\caption{Algorithmic phase transitions for stable offline and online algorithms.}\label{fig:phases}
\end{figure}

\begin{samepage}
\begin{theorem}[Online algorithms, informal]\label{thm:informal-online}
There exists a universal constant $c>0$ such that,
for every sufficiently small fixed $\kappa>0$:
\begin{enumerate}[label=(\roman*),leftmargin=*]
\item The Frobenius-greedy algorithm
(Algorithm~\ref{alg:Frobenius-greedy})
runs in $O(Nd^2)$ time, linear in the size of its input $(A_1,\dots,A_N)$, and succeeds whenever
$\tau>\tau_{\rm FG}(\kappa)\sim \pi/(4\kappa^2)$
as $\kappa\downarrow0$.
\item If $\tau\le c/\kappa^2$, every randomized online algorithm has exponentially small success probability.
\end{enumerate}
In fact, we determine the exact limiting performance of Frobenius-greedy for every fixed $\tau>0$.
\end{theorem}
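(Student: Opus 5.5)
The plan has two parts: part (i) is an exact analysis of the Frobenius norm of the greedy residual, converted into an operator norm bound by rotational symmetry; part (ii) is a prefix-conditioning argument.

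\textbf{Part (i): Frobenius-greedy.} Write $S_t=\sum_{i\le t}\sigma_iA_i$. The Frobenius-greedy rule picks $\sigma_t=-\mathrm{sign}\ip{S_{t-1}}{A_t}_{\F}$. Then
\[
\|S_t\|_\F^2=\|S_{t-1}\|_\F^2+\|A_t\|_\F^2-2|\ip{S_{t-1}}{A_t}_\F|.
\]
Conditionally on $S_{t-1}$, the inner product $\ip{S_{t-1}}{A_t}_\F$ is Gaussian with variance $2\|S_{t-1}\|_\F^2/d$, and $\|A_t\|_\F^2=d+1+o(1)$ concentrates. So $x_t:=\|S_t\|_\F^2/d^2$ follows, up to martingale fluctuations of order $d^{-1}$ per step, the ODE
\[
\dot x=1-\tfrac{4}{\sqrt{\pi}}\sqrt{x}
\]
on the time scale $s=t/d^2$. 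Its solution converges to the fixed point $x^\*=\pi/16$. The martingale part is controlled by Freedman's inequality over $\tau d^2$ steps, which gives uniform convergence to the ODE solution $x(s)$ via a standard Wormald/Kurtz-type argument. This yields the exact limit $\|S_N\|_\F^2/d^2\to x(\tau)$.

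To pass to the operator norm, I use the key symmetry. The greedy signs depend on the $A_i$ only through Frobenius inner products, which are invariant under simultaneous conjugation by a Haar orthogonal $O$. Since the GOE law is conjugation invariant, $S_N\overset{d}{=}OS_NO^\top$ with $O$ independent of $S_N$. I then need more than the Frobenius norm: the empirical spectral distribution of $S_N/\sqrt d$. The natural claim is that $S_N$ is a linear combination whose spectrum is semicircular. Decompose
\[
S_N=\sum_i\sigma_iA_i,
\]
where $\sigma_i$ is a function of the past. By time-reversal, each $A_i$ enters only through its projection onto $S_{i-1}$. Writing $A_i=\alpha_iS_{i-1}/\|S_{i-1}\|_\F+A_i^\perp$ shows that $S_N$ is a free-ish sum of fresh GOE components plus a deterministic-direction drift. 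Tracking this is the main obstacle. My first attempt would show inductively that $S_t=\sqrt{x_t}\,d\cdot W_t$ with $W_t$ close to a normalized GOE: the drift $-c\,S_{t-1}$ merely rescales, and the fresh components $A_i^\perp$ add GOE noise. Hence the limiting spectrum is a semicircle of radius $2\sqrt{x(\tau)}\cdot d/\sqrt{N}\cdot\sqrt N$, which gives
\[
\|S_N\|_\op/\sqrt N\to 2\sqrt{x(\tau)/\tau}.
\]
This is below $\kappa$ iff $\tau>\tau_{\rm FG}(\kappa)$. Since $x(\tau)\to\pi/16$, we get $\tau_{\rm FG}\sim 4(\pi/16)/\kappa^2=\pi/(4\kappa^2)$. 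Concentration of $\|\cdot\|_\op$ around the semicircle edge would follow from Gaussian concentration applied to the fresh parts.

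\textbf{Part (ii): online lower bound.} Condition on the revealed prefix $A_1,\dots,A_{N-m}$ and the signs chosen so far, with $m=\delta d^2$. The remaining $m$ matrices are fresh GOE. Whatever partial sum $P$ was reached, the final sum is $P+\sum_{i>N-m}\sigma_iA_i$. In an online algorithm each future sign depends only on the past, so the increments $\sigma_iA_i$ are still conditionally GOE given the past. By a first-moment bound over the $2^m$ sign patterns, $P+G\sqrt m$ must land in the operator-norm ball of radius $\kappa\sqrt N$. Here $G$ is GOE, and the probability of this event is at most $\exp(-c'd^2\cdot\min(1,\ldots))$ by the GOE large deviation for the spectrum being confined. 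Choosing $m$ with $m\log2<c'd^2$, this is the union over patterns. To make that bound beat $2^m$, note that confining a GOE of scale $\sqrt m$ to radius $\kappa\sqrt N$ costs order $d^2\log(\sqrt m/(\kappa\sqrt N))$ when $m\gg\kappa^2N$. Taking $m=C\kappa^2N$ with $\kappa^2\tau\le c$ gives an entropy $m\log 2=C\kappa^2\tau d^2$ that is smaller than the cost $\Omega(d^2)$. A more careful version handles adaptivity by revealing one step at a time and applying a supermartingale argument to the log-volume.
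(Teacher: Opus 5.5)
\textbf{The genuine gap is in part (i): the passage from the Frobenius norm to the operator norm.}

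Conjugation invariance $S_N\distr OS_NO^\top$ only constrains the eigenvectors. Since $\|S_N\|_{\op}$ is itself conjugation invariant, this symmetry gives no information about it. Your fallback, showing inductively that $S_t$ stays close to a scaled GOE, is exactly the step you call the main obstacle, and it is not carried out. Three things stand in the way:
\begin{itemize}
\item $A_i^\perp$ is not a GOE matrix;
\item the component along $S_{i-1}$ is signed adaptively;
\item semicircle convergence of the spectrum would not by itself identify the top eigenvalue.
\end{itemize}

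The missing idea is that the symmetry you noticed is far larger than conjugation. The greedy rule uses only Frobenius inner products, so it is equivariant under every orthogonal map of $(\Sym_d,\langle\cdot,\cdot\rangle_\F)\cong\R^D$. The GOE law is invariant under all of these maps, because $g_i=\sqrt{d/2}\,\svec(A_i)\sim\cN(0,I_D)$. Concretely, $T(x,g)=x-\mathrm{sign}\langle x,g\rangle\,g$ satisfies $T(Qx,Qg)=QT(x,g)$ for every $Q\in O(D)$. By induction, $x_N=\sqrt{d/2}\,\svec(S_N)$ is $O(D)$-invariant. Its direction is therefore uniform on $\mathbb S^{D-1}$ and independent of $\|x_N\|_2$. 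Equivalently, $S_N/\|S_N\|_\F\distr W/\|W\|_\F$ with $W\sim\GOE(d)$, and $\sqrt d\,\|W\|_{\op}/\|W\|_\F\to2$ in probability. This converts your Frobenius limit directly into the exact operator-norm limit, with no free probability needed.

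\textbf{Smaller points in part (i).}
\begin{itemize}
\item Your ODE and final formula agree with the paper: under $y(2s)=2x(s)$, your $\dot x=1-\tfrac4{\sqrt\pi}\sqrt x$ becomes $y'=1-\sqrt{8/\pi}\sqrt y$, and $2\sqrt{x(\tau)/\tau}=\Phi(\tau)$. However, the normalization must be $x_t=\|S_t\|_\F^2/d^3$, since $\|S_N\|_\F^2=\Theta(d^3)$.
\item The drift is not Lipschitz at $x=0$, where the process starts, so Wormald/Kurtz does not apply off the shelf. The paper instead uses that the drift is nonincreasing to prove a comparison bound $\sup_{t\le T}|Y-y|\le2\sup_{t\le T}|e|$.
\item The ``iff'' with $\tau_{\rm FG}$ also needs $\Phi$ to be decreasing. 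This follows from concavity of $x$ together with $x(0)=0$.
\end{itemize}

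\textbf{Part (ii).} This has the paper's structure: condition on the seed and prefix, then take a union bound over the $2^m$ fixed suffix patterns, each giving a suffix sum $\distr\sqrt m\,W$ independent of the conditioning. Two corrections are needed.
\begin{itemize}
\item Your claim that $\sigma_iA_i$ is conditionally GOE is false, since $\sigma_i$ may depend on $A_i$. It is also unnecessary: the union bound over fixed patterns already handles adaptivity.
\item What you do need is a small-ball bound uniform over the $\cF$-measurable shift $P$; a centered GOE large deviation does not apply directly to $P+\sqrt m\,W$. You can use Anderson's inequality. Alternatively, as the paper does, pass to the Frobenius norm via $\|B\|_\F\le\sqrt d\,\|B\|_{\op}$ and bound the Gaussian density by $(2\pi)^{-D/2}$. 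This gives at most $\bigl(e\kappa^2Nd/(m(d+1))\bigr)^{D/2}\approx2^{-D}$ for $m\approx4e\kappa^2N$, which beats $2^m$ once $\tau\le1/(10e\kappa^2)$.
\end{itemize}
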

\end{samepage}
Our online algorithm is a greedy procedure, using the Euclidean sign rule also studied in~\cite{czervinskiEtAl2026greedy}. Suppose that the first $t-1$ signs yield the partial sum $S_{t-1}:=\sum_{j=1}^{t-1}\sigma_j A_j$. Upon observing $A_t$, the algorithm chooses $\sigma_t\in\{-1,1\}$ to minimize the Frobenius norm $\|S_{t-1}+\sigma_tA_t\|_F$. Using fluid-limit (i.e., differential equations method) and martingale arguments, we explicitly characterize the evolution of this matrix. We then exploit rotational symmetry to convert the Frobenius-norm bounds into operator norm guarantees. See Section~\ref{sec:online-overview} for details. 

Consequently, in the density window $1/(\kappa^2\log\frac1\kappa)\ll \tau \ll 1/\kappa^2$, our stable algorithm succeeds, while any online algorithm has exponentially small success probability. This behavior mirrors symmetric binary perceptron, see below. 
\paragraph{Comparison with symmetric binary perceptron (SBP)} Our results reveal a close correspondence between the small-margin algorithmic landscape of random matrix discrepancy and that of the SBP. Given independent $v_1,\dots,v_N\sim\cN(0,I_d)$ with $d/N\to\alpha\in(0,\infty)$, the SBP seeks a signing $\sigma\in\{-1,1\}^N$ satisfying $\|\sum_{i=1}^N\sigma_i v_i\|_\infty\le\kappa\sqrt{N}$. Its satisfiability threshold is known exactly for every fixed $\kappa>0$~\cite{perkinsXu2024frozen,abbeLiSly2021contiguity}: $\alpha_{\rm SAT}(\kappa) =-\log 2/\log\mathbb{P}(|\cN(0,1)|\le\kappa) \sim\log2/\log(1/\kappa)$ as $\kappa\downarrow 0$. In the small margin regime, overlap-gap arguments rule out stable algorithms at densities $\alpha>C\kappa^2\log\frac1\kappa$~\cite{gamarnikKizildagPerkinsXu2022} and online algorithms at $\alpha>C'\kappa^2$~\cite{gamarnikKizildagPerkinsXu2023} for some absolute $C,C'>0$. Up to constants, both of these scales are sharp, with matching stable and online algorithms~\cite{huangSellkeSun2026projection,wengiel2026online,bansalSpencer2020}. The margin limit was also important in pinning down the algorithmic landscape for asymmetric binary perceptron~\cite{liSchrammZhou2024} (see also related works~\cite{joLee2026online,montanariZhongZhou2024tractability}). The structure of the solution space in binary perceptron models and its relationship to algorithmic tractability have also been extensively studied~\cite{baldassi2015subdominant,perkinsXu2024frozen,abbeLiSly2021contiguity,abbeLiSly2022binary,barbierElAlaouiKrzakalaZdeborova2024,gongHuangLiSellke2026stable}.

To compare these results with ours, recall that the space of $d\times d$ symmetric matrices has dimension $D=d(d+1)/2$. The corresponding dimension-to-sign ratio is therefore $D/N\to(2\tau)^{-1}$. Under this convention, our stable offline and online transition scales (Theorems~\ref{thm:informal-stable} and~\ref{thm:informal-online}) respectively become $\Theta(\kappa^2\log(1/\kappa))$ and $\Theta(\kappa^2)$, thereby matching the corresponding algorithmic thresholds of the SBP. Furthermore, Maillard's satisfiability scale~\cite{maillard2025} becomes $\log 2/\log(1/\kappa)$, agreeing with the SBP even at the level of the leading constant. Despite replacing coordinate-wise balance by a noncommutative spectral constraint, random matrix discrepancy therefore exhibits the same order and hierarchy of the small margin algorithmic scales as the SBP.

More recently, Huang, Sellke and Sun~\cite{huangSellkeSun2026projection} studied the empirical projection distributions attainable by Lipschitz algorithms through stochastic control, and obtained exact variational characterizations of algorithmic thresholds for various perceptron models with bounded continuous activations; see also the earlier stochastic-control characterization by Montanari and Zhou~\cite{montanariZhou2024projections} and Sun’s ICM survey~\cite{sun2026randomized}. For the SBP, their arguments settle the threshold for stable Lipschitz algorithms at every fixed $\kappa>0$, suggesting a natural direction for matrix discrepancy: characterizing its algorithmic thresholds at fixed positive $\kappa>0$ beyond the small margin asymptotics. 
 
\paragraph{Related work}
The study of random matrix discrepancy was initiated by Kunisky and Zhang~\cite{kuniskyZhang2023}, who obtained upper and lower bounds on the discrepancy~\eqref{eq:MatrixDiscrepancy}, and analyzed the matrix hyperbolic cosine (MHC) algorithm of Zouzias~\cite{zouzias2012matrix}. Their bounds identify the order of the typical discrepancy for normalized GOE inputs in the proportional regime $N=\Theta(d^2)$. Kunisky, Oertel, Wengiel, and Zhang~\cite{kuniskyOertelWengielZhang2025} subsequently established sharp discrepancy bounds when $d^2=o(N/\log N)$ and improved the analysis of the MHC algorithm. Maillard~\cite{maillard2025} studied the satisfiability variant analogous to the SBP, obtaining lower and upper bounds on the satisfiability threshold in the proportional regime that agree to leading order as $\kappa\downarrow 0$. Random rank-one inputs were studied in~\cite{caiChenZhang2025weaver}.

On the algorithmic side, the improved MHC analysis yields a terminal discrepancy bound of order $O(d\log d)$ for normalized GOE inputs, which is $O(\sqrt{N}\log N)$ when $N=\Theta(d^2)$. This estimate alone does not certify a fixed-margin guarantee $\kappa\sqrt{N}$ at any fixed proportional density. Very recently, Chen and El Alaoui~\cite{chenElAlaoui2026} developed an approximate message passing (AMP) algorithm for random matrix discrepancy, together with a novel free-probabilistic state evolution describing its limiting iterates through a correlated Gaussian semicircular process. For $N/d^2\to \tau$, their algorithm achieves a discrepancy of $g_{\rm AMP}(\tau)\sqrt{N}$ whp for an explicit $g_{\rm AMP}(\tau)\in(0,2)$ and small $\tau$, $0<\tau<\tau^*\approx 2.87$. In particular, their algorithm yields signs satisfying~\eqref{eq:target} for margins $\kappa\in(g_{\rm AMP}(\tau),2)$. Our work complements these results by addressing the small-margin limit, where $\kappa\downarrow0$ and the density $\tau$ grows: we pinpoint the density required
by stable offline and online algorithms up to constants, through polynomial-time algorithms and matching lower bounds.

\paragraph{Online discrepancy} Online discrepancy, where the arriving vectors must be assigned signs irrevocably, has a long history dating back to Spencer's balancing games~\cite{spencer1977balancing}. For vectors with Euclidean norm at most one, an adaptive adversary can force $\Omega(\sqrt{N})$ discrepancy even in dimension two. Against an oblivious adversary, which fixes the input sequence in advance, substantially stronger guarantees are possible. Alweiss, Liu, and Sawhney~\cite{alweiss2021discrepancy} developed a self-balancing walk that maintains $O(\log(dN))$ discrepancy for all signed prefix sums whp, provided $\|v_i\|_2\le 1$. Kulkarni, Reis, and Rothvo{\ss}~\cite{kulkarni2024optimal} established the existence of an online algorithm attaining
the optimal $O(\sqrt{\log N})$ guarantee; see also an improved running time guarantee by Aden-Ali~\cite{aden2026optimal}. 

For the random arrivals,  Bansal and Spencer~\cite{bansalSpencer2020} gave an online algorithm achieving $O(\sqrt{d})$ discrepancy whp for independent uniform vectors in $\{-1,1\}^d$, with a bound independent of the time horizon;
see~\cite{wengiel2026online} for an extension to sub-Gaussian inputs. For Gaussian arrivals with $N \asymp d$, \cite{fiedler2026meanfield} characterized the
optimal asymptotic constant through a mean-field stochastic control problem. More general stochastic arrivals were studied in~\cite{bansalJiangSinglaSinha2020,bansal2021online}; the latter obtained $O(\log^4(dN))$ discrepancy for independent samples from an arbitrary distribution supported on the Euclidean unit ball. For a more recent work on online vector balancing with iid binary sparse vectors, see~\cite{altschuler2026threshold}. 

Contrasting online vector balancing, the history of online matrix discrepancy is quite recent. Zouzias~\cite{zouzias2012matrix} introduced the matrix hyperbolic cosine algorithm, whose average-case performance was subsequently analyzed in~\cite{kuniskyZhang2023,kuniskyOertelWengielZhang2025}. To the best of our knowledge, our results yield the first sharp characterization of an online algorithmic thresholds for the matrix discrepancy problem under random arrivals.
\paragraph{Overlap geometry and algorithmic hardness} Our algorithmic lower bounds are based on the  overlap gap property (OGP) framework, pioneered by Gamarnik and Sudan~\cite{gamarnik2014limits,gamarnikSudan2017limits}; see~\cite{gamarnik2021overlap,gamarnik2022disordered,gamarnik2025turing} for surveys. Pairwise OGP forbids intermediate overlaps between pairs of near-optimal solutions, while ensemble or coupled versions impose overlap restrictions on solutions to correlated
instances~\cite{chenGamarnikPanchenkoRahman2019suboptimality,gamarnikJagannathWein2024hardness}. Multi-OGP extends this viewpoint to forbidden overlap
patterns among larger collections of
solutions~\cite{rahmanVirag2017half,gamarnikSudan2017sequential,wein2021optimal,breslerHuang2021phase,gamarnik2023algorithmic}, and branching OGP excludes hierarchical overlap patterns~\cite{huangSellke2025lipschitz,huang2023algorithmic}. Coupling and interpolation arguments convert these geometric obstructions into lower bounds for local, low-degree, AMP, and Lipschitz algorithms by exploiting
suitable notions of input
stability~\cite{gamarnikSudan2017limits,gamarnikJagannathWein2024hardness,gamarnikJagannath2021amp,huangSellke2025lipschitz}. More recently, related geometric obstructions have also been developed
for online algorithms~\cite{gamarnik2025optimal,dhawan2025sharp,dhawan2026algorithmic,du2025algorithmic,bhamidi2026finding,gamarnik2026optimal}. Our lower bound uses an ensemble multi-OGP for feasible signings, following the
perceptron arguments~\cite{gamarnikKizildagPerkinsXu2022,gamarnikKizildagPerkinsXu2023,liSchrammZhou2024}.

\section{Main Results and Technical Overview}\label{sec:overview}
We next formally define the offline and online algorithms, along with rigorous performance guarantees, lower bounds, and an overview of our technical arguments. See Sections~\ref{sec:offline-proofs} and~\ref{sec:online-proofs} for complete proofs.
\subsection{Offline algorithm: construction and stability}\label{sec:offline-overview}
Dubbed \emph{recenter-and-round}, our offline algorithm (Algorithm~\ref{alg:recenter}) starts with independent random signs and repeatedly improves the current signing by solving a strongly convex quadratic program and rounding the output.
\paragraph{Gaussian vectorization}
Let $D:=d(d+1)/2$ and $\Sym_d$ be the set of all $d\times d$ real symmetric 
matrices. For any $M\in \Sym_d$, define the operator $\svec:\Sym_d\to \R^D$ by 
\begin{equation}\label{eq:svec}
\svec(M) = \bigl((M_{aa})_{a=1}^d,(\sqrt{2}M_{ab})_{1\le a<b\le d}\bigr)\in\R^D.
\end{equation}
Given iid $A_1,\dots,A_N\sim \GOE(d)$, define
\begin{equation}\label{eq:vectorization}
 g_i=\sqrt{d/2}\,\svec(A_i),\qquad G=[g_1\ \cdots\ g_N].
\end{equation}
Then
\begin{equation*}
    \sum_i\sigma_i A_i=\sqrt{2/d}\,\mat(G\sigma),
\end{equation*}
where $\mat=\svec^{-1}$ (i.e., $\mat(\svec(M))=M$). Set $\alpha_d:=N/D$, so that $\alpha_d\to \alpha=2\tau \in(0,\infty)$.
Figure~\ref{fig:gaussian-vectorization} illustrates this transformation.
\begin{figure}[!htbp]
\centering
\begingroup
\resizebox{\linewidth}{!}{%
\definecolor{gvLavender}{HTML}{C8C4F0}
\definecolor{gvCoral}{HTML}{DF9692}
\definecolor{gvMuted}{HTML}{EFEDEF}
\begin{tikzpicture}[x=1cm,y=1cm,>=Stealth,
  every node/.style={font=\normalsize},
  arrow/.style={->,line width=.85pt},
  bracket/.style={line width=.8pt,line cap=round}]
\path[use as bounding box] (-.12,.20) rectangle (15.75,5.60);

\node at (1.66,5.42) {$A_i$};
\foreach \r in {1,...,4} {
  \foreach \c in {1,...,4} {
    \pgfmathsetmacro{\xx}{(\c-1)*.83}
    \pgfmathsetmacro{\yy}{4.34-(\r-1)*.83}
    \ifnum\r=\c
      \fill[gvLavender] (\xx,\yy) rectangle ++(.78,-.78);
      \node at (\xx+.39,\yy-.39) {$a_{\r\c}$};
    \else
      \ifnum\r<\c
        \fill[gvCoral] (\xx,\yy) rectangle ++(.78,-.78);
        \node at (\xx+.39,\yy-.39) {$a_{\r\c}$};
      \else
        \fill[gvMuted] (\xx,\yy) rectangle ++(.78,-.78);
        \node[text=black!35] at (\xx+.39,\yy-.39) {$a_{\c\r}$};
      \fi
    \fi
  }
}
\draw[bracket] (-.06,4.42)--(-.20,4.42)--(-.20,.99)--(-.06,.99);
\draw[bracket] (3.33,4.42)--(3.47,4.42)--(3.47,.99)--(3.33,.99);

\draw[arrow] (3.80,2.705)--(4.99,2.705);
\node at (6.64,5.42) {$g_i$};
\node at (5.47,2.705) {$\sqrt{\frac d2}$};
\foreach \k/\t in {
  0/a_{11},1/a_{22},2/a_{33},3/a_{44},
  4/\sqrt2\,a_{12},5/\sqrt2\,a_{13},6/\sqrt2\,a_{14},
  7/\sqrt2\,a_{23},8/\sqrt2\,a_{24},9/\sqrt2\,a_{34}} {
  \ifnum\k<4
    \fill[gvLavender] (6.07,4.99-\k*.47) rectangle ++(1.14,-.44);
  \else
    \fill[gvCoral] (6.07,4.99-\k*.47) rectangle ++(1.14,-.44);
  \fi
  \node[font=\small] at (6.64,4.77-\k*.47) {$\t$};
}
\draw[bracket] (5.99,5.06)--(5.85,5.06)--(5.85,.25)--(5.99,.25);
\draw[bracket] (7.29,5.06)--(7.43,5.06)--(7.43,.25)--(7.29,.25);

\draw[arrow] (7.78,2.705)--(9.17,2.705);
\node at (12.62,5.42) {$G$};
\foreach \x in {9.80,10.81,14.73} {
  \foreach \k in {0,...,9} {
    \ifnum\k<4
      \fill[gvLavender,opacity=.30] (\x,4.99-\k*.47) rectangle ++(.62,-.44);
    \else
      \fill[gvCoral,opacity=.30] (\x,4.99-\k*.47) rectangle ++(.62,-.44);
    \fi
    \fill[black!35] (\x+.31,4.77-\k*.47) circle (.020);
  }
}
\foreach \k in {0,...,9} {
  \ifnum\k<4
    \fill[gvLavender] (12.73,4.99-\k*.47) rectangle ++(.62,-.44);
  \else
    \fill[gvCoral] (12.73,4.99-\k*.47) rectangle ++(.62,-.44);
  \fi
  \fill (13.04,4.77-\k*.47) circle (.024);
}
\draw[line width=.7pt] (12.69,5.03) rectangle (13.39,.28);
\node at (12.04,2.705) {$\cdots$};
\node at (14.02,2.705) {$\cdots$};
\draw[bracket] (9.64,5.06)--(9.50,5.06)--(9.50,.25)--(9.64,.25);
\draw[bracket] (15.51,5.06)--(15.65,5.06)--(15.65,.25)--(15.51,.25);
\end{tikzpicture}%
}
\endgroup
\caption{Gaussian vectorization for $d=4$. Read the diagonal entries and one copy of each off-diagonal entry into a vector. Each matrix $A_i$ becomes one column $g_i$ of $G$.}
\label{fig:gaussian-vectorization}
\end{figure}

\paragraph{Recenter-and-round algorithm}
Set $K=K_d:=\lceil \log_2 \alpha_d\rceil +1$. Given a current signing $\sigma\in\{-1,1\}^N$, consider the strongly convex program
\begin{equation}\label{eq:QP-original}
  z(G,\sigma)
  :=\arg\min_{z\in[-1,1]^N}\Psi_{G,\sigma}(z),
\end{equation}
where
\begin{equation}\label{eq:Psi}
  \Psi_{G,\sigma}(z)
  :=\frac1{2D}\norm{Gz}_2^2
  +\sum_{i=1}^N\left[p_i(z;\sigma)+\frac12p_i(z;\sigma)^2\right],
  \qquad
  p_i(z;\sigma):=\frac{1-\sigma_i z_i}{2}.
\end{equation}
Note that
\begin{equation}\label{eq:objective}
      \Psi_{G,\sigma}(z)
=\frac1{2D}\norm{Gz}_2^2+\frac18\norm z_2^2-\frac34\ip{\sigma}{z}+\frac{5N}{8},
\end{equation}
with Hessian $\nabla_z^2\Psi_{G,\sigma}(z)=G^\top G/D+I_N/4 \succeq I_N/4$. Thus, $\Psi_{G,\sigma}$ is strongly convex and the minimizer in~\eqref{eq:QP-original} is unique. For $z\in[-1,1]^N$, let $\operatorname{Round}(z;U)\in\{\pm1\}^N$ be the coordinate-wise mean-preserving rounding:
\begin{equation}\label{eq:rounding}
  \operatorname{Round}(z;U)_i
  =\begin{cases}
    +1,&U_i\le(1+z_i)/2,\\
    -1,&U_i>(1+z_i)/2,
  \end{cases}
\end{equation}
where $U_1,\ldots,U_N$ are independent uniform random variables on $[0,1]$. In particular, $\sigma_i$ is flipped with probability $p_i(z;\sigma)$. 

Figure~\ref{fig:recenter-round} illustrates the recentering and rounding steps.
\begin{figure}[!htbp]
\centering
\begingroup
\resizebox{\linewidth}{!}{%
\begin{tikzpicture}[x=1cm,y=1cm,>=Stealth,
  every node/.style={font=\small},
  dot/.style={circle,fill=black,inner sep=1.9pt},
  fractional/.style={circle,fill=white,draw=black,line width=.85pt,inner sep=2pt},
  move/.style={->,black,line width=1.15pt},
  level/.style={draw=rrcoral!85!black,line width=.55pt}]
\definecolor{rrlavender}{RGB}{177,173,233}
\definecolor{rrcoral}{RGB}{231,107,105}
\def\s{1.65}
\def\cx{.38077}
\def\cy{.38077}

\begin{scope}[shift={(1.8,2.55)}]
  \node[font=\small\bfseries] at (0,2.35) {Recenter};
  \fill[rrlavender!48] (-\s,-\s) rectangle (\s,\s);
  \begin{scope}
    \clip (-\s,-\s) rectangle (\s,\s);
    \begin{scope}[shift={(\cx,\cy)},rotate=-45]
      \fill[rrcoral,opacity=.19] (0,0) ellipse (1.90 and 1.178);
      \fill[rrcoral,opacity=.20] (0,0) ellipse (1.28 and .794);
      \fill[rrcoral,opacity=.24] (0,0) ellipse (.66 and .409);
      \foreach \a/\b in {.66/.409,1.28/.794,1.90/1.178,2.52/1.563}
        \draw[level] (0,0) ellipse [x radius=\a cm,y radius=\b cm];
    \end{scope}
  \end{scope}
  \draw[black!65,line width=.7pt] (-\s,-\s) rectangle (\s,\s);
  \foreach \x/\y in {-1.65/-1.65,-1.65/1.65,1.65/-1.65}
    \fill[black!45] (\x,\y) circle (1.3pt);
  \draw[move,shorten <=4pt,shorten >=5pt] (\s,\s) -- (\cx,\cy);
  \node[dot] at (\s,\s) {};
  \node[anchor=south east,inner sep=2pt] at (1.59,1.73) {$\sigma^{(k)}$};
  \node[fractional] at (\cx,\cy) {};
  \node[anchor=north east,inner sep=3pt] at (\cx,\cy-.02) {$z^{(k)}$};
\end{scope}

\begin{scope}[shift={(7.0,2.55)}]
  \node[font=\small\bfseries] at (0,2.35) {Round};
  \fill[rrlavender!48] (-\s,-\s) rectangle (\s,\s);
  \draw[black!65,line width=.7pt] (-\s,-\s) rectangle (\s,\s);
  \foreach \x/\y in {-1.65/-1.65,1.65/1.65,1.65/-1.65}{
    \draw[->,black!35,dashed,line width=.65pt,shorten <=5pt,shorten >=3pt]
      (\cx,\cy) -- (\x,\y);
    \fill[black!45] (\x,\y) circle (1.5pt);
  }
  \draw[move,shorten <=5pt,shorten >=4pt] (\cx,\cy) -- (-\s,\s);
  \node[fractional] at (\cx,\cy) {};
  \node[anchor=west,inner sep=4pt] at (\cx+.03,\cy-.02) {$z^{(k)}$};
  \node[dot] at (-\s,\s) {};
  \node[anchor=south west,inner sep=2pt] at (-1.59,1.73) {$\sigma^{(k+1)}$};
\end{scope}

\begin{scope}[shift={(12.2,2.55)}]
  \node[font=\small\bfseries] at (0,2.35) {Repeat};
  \fill[rrlavender!48] (-\s,-\s) rectangle (\s,\s);
  \begin{scope}
    \clip (-\s,-\s) rectangle (\s,\s);
    \begin{scope}[shift={(-.99,.99)},rotate=-45]
      \fill[rrcoral,opacity=.19] (0,0) ellipse (1.90 and 1.178);
      \fill[rrcoral,opacity=.20] (0,0) ellipse (1.28 and .794);
      \fill[rrcoral,opacity=.24] (0,0) ellipse (.66 and .409);
      \foreach \a/\b in {.66/.409,1.28/.794,1.90/1.178,2.52/1.563,3.14/1.947}
        \draw[level] (0,0) ellipse [x radius=\a cm,y radius=\b cm];
    \end{scope}
  \end{scope}
  \draw[black!65,line width=.7pt] (-\s,-\s) rectangle (\s,\s);
  \foreach \x/\y in {-1.65/-1.65,1.65/1.65,1.65/-1.65}
    \fill[black!45] (\x,\y) circle (1.3pt);
  \draw[move,shorten <=4pt,shorten >=5pt] (-\s,\s) -- (-.99,.99);
  \node[dot] at (-\s,\s) {};
  \node[anchor=south west,inner sep=2pt] at (-1.59,1.73) {$\sigma^{(k+1)}$};
  \node[fractional] at (-.99,.99) {};
  \node[anchor=north west,inner sep=3pt] at (-.92,.98) {$z^{(k+1)}$};
\end{scope}

\draw[->,line width=.8pt,black!65] (3.75,2.55) -- (5.00,2.55);
\draw[->,line width=.8pt,black!65] (8.95,2.55) -- (10.20,2.55);
\end{tikzpicture}%
}
\endgroup
\caption{A simple illustration of recentering and rounding}
\label{fig:recenter-round}
\end{figure}

We now describe the algorithm in more detail. Starting with a random sign vector $\xi$, Algorithm~\ref{alg:recenter} iteratively updates the current signing $\sigma^{(k)}\in\{-1,1\}^N$ to $\sigma^{(k+1)}$ by first solving the quadratic program~\eqref{eq:QP-original}. The objective function is carefully designed: $\|Gz\|_2^2/(2D)$ favors a small residual, $\sum_i p_i(z;\sigma)$ controls the (expected) number of bit flips and subsequently the rounding noise, and $\sum_i p_i(z;\sigma)^2/2$ ensures the optimizer of~\eqref{eq:QP-original} is unique.
The optimization step is followed by a mean-preserving rounding~\eqref{eq:rounding}, yielding $\sigma^{(k+1)}\in\{-1,1\}^N$. Note that the algorithm is indeed \emph{offline}: at any given step, it has access to all columns of $G$ and may revise every previously chosen sign.

We analyze~\Cref{alg:recenter} with exact minimizers. For a polynomial-time implementation in real arithmetic, each minimizer may instead be computed to an $\ell_2$ error of $D^{-10}$ by standard convex programming methods.\footnote{Couple the exact and approximate versions using the same initial signing and rounding uniforms. Before their first disagreement, the probability that some rounded coordinate differs in a round is at most $\frac12\|\widetilde z-z\|_1\le\frac12\sqrt N D^{-10}$. A union bound over $K_d=O_\alpha(1)$ rounds gives output agreement with probability $1-o(1)$. Applying the same bound to both Gaussian marginals transfers stability uniformly in the correlation.} All discrepancy and stability guarantees below also hold for this implementation.

\begin{remark}[Approximate minimizers]
We analyze~\Cref{alg:recenter} with exact minimizers, but feasible $\ell_2$-approximations of accuracy $D^{-10}$ suffice for a polynomial-time implementation in real arithmetic. Using the same initial signing and rounding uniforms, the conditional probability of disagreement in each round, given agreement so far, is at most $\sqrt N D^{-10}/2$. A union bound over $K_d=O_\alpha(1)$ rounds therefore transfers the discrepancy guarantee. Applying the coupling to both Gaussian inputs also transfers stability, uniformly in their correlation.
\end{remark}

Algorithm~\ref{alg:recenter} is defined for $\alpha_d\ge 2$ and the rigorous guarantees hold for sufficiently large $\alpha_d$, see below. Outputting a random independent $\sigma\in\{-1,1\}^N$ for $\alpha_d<2$ specifies it on all inputs. 
\begin{algorithm}[H]
\color{addedblue}
\caption{Recenter-and-round}\label{alg:recenter}
\algrenewcommand{\algorithmicrequire}{\textbf{Input:}}
\begin{algorithmic}[1]
\Require $G\in\R^{D\times N}$ with $\alpha_d = N/D\ge 2$.
\State Draw an independent uniform vector $\xi\in\{-1,+1\}^N$. 
\State Set $\sigma^{(0)}\gets\xi$  and $K_d\gets\lceil\log_2 \alpha_d\rceil+1$.
\For{$k=0,\ldots,K_d-1$}
\State Compute the unique minimizer of $z^{(k)}=\arg\min_{z\in[-1,1]^N}\Psi_{G,\sigma^{(k)}}(z)$; see~\eqref{eq:QP-original}.
\State Draw independent $U^{(k)}=(U_i^{(k)}:i\in[N])\sim \operatorname{Unif}([0,1]^N)$, set $\sigma^{(k+1)}\gets\operatorname{Round}(z^{(k)};U^{(k)})$.
\EndFor
\State \Return $\widehat\sigma=\sigma^{(K_d)}$.
\end{algorithmic}
\end{algorithm}

\paragraph{Discrepancy guarantee}
Our first main result establishes that for small enough $\kappa>0$, Algorithm~\ref{alg:recenter} finds signs satisfying~\eqref{eq:target} whp provided $\tau \ge C_+/(\kappa^2\log\frac1\kappa)$, for an absolute $C_+>0$.
\begin{theorem}[Recenter-and-round algorithm: discrepancy upper bound]\label{thm:stable-upper}
There exists universal constants $\tau_0,C,C_+>0$ such that the following holds. Let $A_1,\dots,A_N\sim \GOE(d)$ be independent random matrices with $N/d^2\to \tau\ge \tau_0$, $G\in\R^{D\times N}$ be the matrix with columns $g_i$ where $g_i = \sqrt{d/2}\svec(A_i)$ per~\eqref{eq:vectorization}, and $\widehat{\sigma}$ be the output of~\Cref{alg:recenter} with input $G$. Then, as $d\to\infty$,
\begin{equation}\label{eq:spectral-upper}
 \Pp\!\left[\frac1{\sqrt N}\left\|\sum_{i=1}^N\widehat\sigma_iA_i\right\|_{\op}
 \le\frac C{\sqrt{\tau\log(2+\tau)}}\right]\longrightarrow1.
\end{equation}
Consequently, for every sufficiently small $\kappa>0$, Algorithm~\ref{alg:recenter} succeeds as long as $\tau \ge C_+/(\kappa^2\log\frac1\kappa)$.
\end{theorem}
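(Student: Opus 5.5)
Two reductions frame the proof. First, a Euclidean contraction shows $\|G\widehat\sigma\|_2^2\le C' D^2/\alpha$ up to logarithmic factors, namely $\|G\widehat\sigma\|_2\lesssim D/\sqrt{\alpha\log\alpha}$. Second, a rotational-symmetry argument converts this Frobenius control into operator-norm control. For the conversion, note that $\|\sum_i\widehat\sigma_iA_i\|_F=\sqrt{2/d}\,\|G\widehat\sigma\|_2$. Hence the Frobenius bound gives $\|\sum_i\widehat\sigma_i A_i\|_F\lesssim \sqrt{D/d}\cdot\sqrt{D/(\alpha\log\alpha)}$. Since $D/\alpha=N/\alpha^2$, this is $\sqrt{N}\cdot\sqrt{d}/\sqrt{\alpha\log\alpha}$. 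A Frobenius bound alone would lose a factor $\sqrt d$ relative to the operator norm. The algorithm, however, is equivariant under conjugation $A_i\mapsto OA_iO^\top$: the objective \eqref{eq:objective} depends on $G$ only through $G^\top G$, which is invariant under this conjugation. The law of the input is also invariant. Therefore the output matrix $M=\sum_i\widehat\sigma_iA_i$ is conjugation invariant in law, i.e. $M\distr OMO^\top$ for Haar $O$ independent of everything. Conditionally on its spectrum, the eigenvectors of $M$ are Haar. I would then show that the empirical spectral distribution concentrates with no outliers, so that $\|M\|_{\op}\le C\|M\|_F/\sqrt d$. To do this, I would write $M=\widehat\sigma$-dependent yet decompose $M=M_0+\sum_k(\sigma^{(k+1)}-\sigma^{(k)})$-contributions and bound the top eigenvalue using the fact that each quadratic form $u^\top Mu$ equals $\ip{G^\top\svec(uu^\top)}{\widehat\sigma}$ (suitably scaled). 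This is the delicate step (see below).

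For the Euclidean contraction, I would track $r_k=\|G\sigma^{(k)}\|_2^2/D$, with $r_0\approx N$ for random initial signs. In each round, the program \eqref{eq:QP-original} trades residual against flip mass. I would exhibit a feasible fractional $z$ of the following form: take the residual direction $v=G\sigma^{(k)}/\|G\sigma^{(k)}\|$, and, using the geometric lemma announced in the overview (every unit direction $v$ has at least $m$ columns with $\ip{g_i}{\sigma_i v}\ge t$, proved via VC/Sauer and a union bound over a net), move those coordinates partially toward $-\sigma_i$. This reduces $\ip{v}{Gz}$ by roughly $mt\cdot$(fraction), at a flip cost of order the fraction times $m$. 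Gaussian tails give $t\approx\sqrt{2\log(N/m)}$, and this is the source of the $\log\alpha$ gain. The minimizer's objective is no worse, and this yields the bound $\|Gz^{(k)}\|^2/D+\sum_ip_i\le$ (contracted residual). Rounding is mean preserving with independent coordinates. Hence $\E\|G\sigma^{(k+1)}\|^2=\|Gz^{(k)}\|^2+\sum_i(1-z_i^2)\|g_i\|^2$, and the second term is bounded by $4D\sum_i p_i$. Concentration then gives, whp, $r_{k+1}\le \max(r_k/2,\,C D/(\alpha\log\alpha)\cdot\text{const})$. After $K_d=\lceil\log_2\alpha_d\rceil+1$ halvings, starting from $r_0\approx N=\alpha D$, we reach the floor $r\lesssim D/\log\alpha$. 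Rescaling by $\sqrt{2/d}$ then gives the Frobenius bound $\|M\|_F^2\lesssim d\,N/(\alpha\log\alpha)\cdot$const, that is, $\|M\|_F/\sqrt{d}\lesssim \sqrt N/\sqrt{\tau\log\tau}$.

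The main obstacle is the operator-norm transfer: rotational invariance of the law of $M$ does not by itself forbid a few large eigenvalues. My first attempt would be the following. Condition on the invariant part and use that the output depends on the data only through the Gram matrix $G^\top G$ and the uniforms. Then $M=O\Lambda O^\top$, where $O$ is Haar and independent of $\widehat\sigma$ and of the Gram structure. Write $A_i=O B_i O^\top$, where the $B_i$ have the same joint law. The spectrum of $M$ is a function of $(B_i,\widehat\sigma)$, so Haar averaging alone gives nothing on eigenvalues. So instead I would bound outliers directly. For a fixed unit $u$, the random vector $(u^\top A_iu)_i$ is a Gaussian vector, namely a fixed linear functional of $G$. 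Using a net over $u$ together with the per-direction column-projection lemma, I would argue that the objective penalizes any direction carrying excess mass, so that outliers would contradict optimality up to rounding noise. If this fails, the fallback is to add an explicit spectral check: randomize the final round and use matrix Bernstein on the rounding noise, which has operator norm $O(\sqrt{\sum_i p_i/d})$, for the stochastic part.
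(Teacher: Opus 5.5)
There is a genuine gap, and it sits exactly at the step you flag as delicate: the transfer from the Euclidean residual to the operator norm is never carried out. You use only invariance under conjugation $A_i\mapsto OA_iO^\top$ with $O\in O(d)$. As you correctly note, that subgroup leaves the spectrum of $M=\sum_i\widehat\sigma_iA_i$ unconstrained.

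The missing idea is that the algorithm is invariant under the \emph{full} orthogonal group $O(D)$ of the vectorized space. The columns $g_i=\sqrt{d/2}\,\svec(A_i)$ are iid $\cN(0,I_D)$, so $QG\distr G$ for every $Q\in O(D)$, not only for the image of $O(d)$. You already observe that $\Psi_{G,\sigma}$ depends on $G$ only through $\|Gz\|_2^2$. Uniqueness of the strongly convex minimizer then gives $\A(QG,\omega)=\A(G,\omega)$ for every such $Q$ with a common seed.

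Hence $X=G\widehat\sigma$ satisfies $QX\distr X$. Since $X\neq0$ almost surely, $X=RU$ with $U$ uniform on $\mathbb S^{D-1}$ and independent of $R=\|X\|_2$. Then $\mat(U)\distr W/\|W\|_F$ with $W\sim\GOE(d)$, so $\sqrt d\,\|\mat(U)\|_{\op}\convp2$. Therefore $\|M\|_{\op}=\sqrt{2/d}\,R\,\|\mat(U)\|_{\op}=(2+o(1))\|M\|_F/\sqrt d$ whp. No outlier analysis is needed, because the law of $\svec(M)$ is an independent radius times a uniform direction.

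Neither of your proposed remedies can replace this. Optimality of the quadratic program cannot exclude spikes, because the program only sees $\|Gz\|_2$. A residual concentrated on one rank-one direction and a spread-out residual with the same Frobenius norm look identical to it, yet their operator norms differ by a factor of order $\sqrt d$. The fallback modifies the algorithm. Moreover, a matrix concentration bound would only control the rounding noise $\sum_i e_iA_i$. The fractional part $\sum_i z_iA_i$, for which you have only Frobenius control, carries the same difficulty as before.

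The Euclidean part also needs repair, although these points are fixable:
\begin{itemize}
\item Your initial target $D/\sqrt{\alpha\log\alpha}$ is off by $\sqrt\alpha$. The algorithm's floor is $R\asymp D/t$, where the rounding noise $\sqrt{DR/t}$ balances $R$; this is what your later computation actually uses.
\item A recursion $r_{k+1}\le\max(r_k/2,\cdot)$ only reaches $r=O(D)$ in $K_d$ rounds and loses the logarithm. The true bound is the square-root contraction $R_{k+1}\lesssim\sqrt{DR_k/t}+D/t$.
\item Pushing mass only along the residual direction leaves the orthogonal components of the selected columns uncontrolled. The paper avoids this by exact cancellation, $s/2\in qtB_2^D\subseteq K_q(v)$, using the cap property in all directions.
\item The caps are established for the independent initial signs $\xi$. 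Applying them to the data-dependent current signing requires staying within Hamming distance $pN/8$ of $\xi$ and tracking cumulative flips.
\end{itemize}
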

See below for a proof overview, and Section~\ref{sec:offline-proofs} for the complete proof.
\paragraph{Noise stability}
We now formally define the stability condition. Given iid $A_1,\dots,A_N\sim \GOE(d)$, the algorithm takes $G\in\R^{D\times N}$ with columns $g_i =\sqrt{d/2}\svec(A_i)\sim\cN(0,I_D)$ as its input. Let $\omega\in \Omega$ denote the internal randomness of the algorithm, sampled independently of the input. Couple two independent Gaussian inputs $G,H\in\R^{D\times N}$ by 
$G^\rho:=\rho G +\sqrt{1-\rho^2}H$.
\begin{definition}[Noise stability]\label{def:strong}
Given $N/D\to \alpha\in(0,\infty)$, a family $\A_{D,N}:\R^{D\times N}\times \Omega\to \{-1,1\}^N$ of randomized algorithms is stable if, for any $\eps>0$, there exists a $\rho_0=\rho_0(\eps,\alpha)<1$ such that
\[
 \lim_{d\to\infty}\ \sup_{\rho\in[\rho_0,1)}
 \Pp\!\left(\dist\big(\A_{D,N}(G,\omega),\A_{D,N}(G^\rho,\omega)\big)>\eps N\right)=0,
\]
where $\dist$ denotes Hamming distance. Here, $\A_{D,N}$ is permitted to depend on $\kappa>0$.
\end{definition}
Informally, $\A_{D,N}$ is stable if, uniformly for all $\rho\ge \rho_0$, the outputs on correlated copies $G$ and $G^\rho$ are close in Hamming distance. That is, the normalized Hamming distance between two outputs is at most $\varepsilon$ whp. 

\begin{theorem}[Stability of recenter-and-round]\label{thm:algorithm-stability}
For every fixed $\alpha\ge \alpha_0$, \Cref{alg:recenter} is noise stable in the sense of \Cref{def:strong}.
\end{theorem}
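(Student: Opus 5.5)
We argue by induction over the $K_d=O_\alpha(1)$ rounds, coupling the two executions on $G$ and $G^\rho$ with the same initial signing $\xi$ and the same rounding uniforms $U^{(0)},\dots,U^{(K_d-1)}$. Write $\sigma^{(k)},z^{(k)}$ for the iterates on input $G$ and $\tilde\sigma^{(k)},\tilde z^{(k)}$ for those on input $G^\rho$. Initially $\sigma^{(0)}=\tilde\sigma^{(0)}$.

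The plan is to show that, with probability $1-o(1)$ and uniformly in $\rho\in[\rho_0,1)$,
\[
\|z^{(k)}-\tilde z^{(k)}\|_2\le \delta_k\sqrt N
\quad\text{and}\quad
\dist(\sigma^{(k+1)},\tilde\sigma^{(k+1)})\le \eta_{k+1}N,
\]
where $\delta_k,\eta_{k+1}\to0$ as $\rho_0\to1$ and as $\eta_k\to 0$. Since $K_d$ is bounded, iterating these bounds gives $\eta_{K_d}\le\eps$ once $\rho_0$ is close enough to $1$.

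\emph{Step 1 (Lipschitz dependence of the minimizer).} The objective $\Psi_{G,\sigma}$ is $\tfrac14$-strongly convex over the convex set $[-1,1]^N$. Let $z=z(G,\sigma)$ and $\tilde z=z(G',\sigma')$. Add the two variational inequalities
\[
\ip{\nabla\Psi_{G,\sigma}(z)}{\tilde z-z}\ge0,
\qquad
\ip{\nabla\Psi_{G',\sigma'}(\tilde z)}{z-\tilde z}\ge0,
\]
and use strong monotonicity to get
\[
\tfrac14\|z-\tilde z\|_2\le \|\nabla\Psi_{G,\sigma}(\tilde z)-\nabla\Psi_{G',\sigma'}(\tilde z)\|_2 .
\]
From \eqref{eq:objective}, the gradient difference equals
\[
\tfrac1D(G^\top G-G'^\top G')\tilde z-\tfrac34(\sigma-\sigma').
\]
Its second term has norm $\tfrac34\cdot 2\sqrt{\dist(\sigma,\sigma')}$. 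For the first term, write
\[
G^\top G-G'^\top G'=G^\top(G-G')+(G-G')^\top G'.
\]
Standard Gaussian operator-norm bounds give $\|G\|_{\op},\|G'\|_{\op}\le C(\sqrt N+\sqrt D)$ whp. Also $G-G^\rho=(1-\rho)G-\sqrt{1-\rho^2}H$, so $\|G-G^\rho\|_{\op}\le C\sqrt{1-\rho}\,(\sqrt N+\sqrt D)$ whp. Combined with $\|\tilde z\|_2\le\sqrt N$, this yields
\[
\|z-\tilde z\|_2\le C_\alpha\sqrt{1-\rho}\,\sqrt N+6\sqrt{\dist(\sigma,\sigma')}.
\]
These operator-norm events are uniform in $\rho$: they involve only $\|G\|_{\op}$ and $\|H\|_{\op}$, since $\|G-G^\rho\|_{\op}\le(1-\rho)\|G\|_{\op}+\sqrt{1-\rho^2}\|H\|_{\op}$. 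Hence no union bound over $\rho$ is needed.

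\emph{Step 2 (rounding with shared uniforms).} Given $z$ and $\tilde z$, the coordinate $i$ is rounded differently exactly when $U_i$ falls between $(1+z_i)/2$ and $(1+\tilde z_i)/2$. This happens with probability $|z_i-\tilde z_i|/2$, independently across coordinates, because $U^{(k)}$ is independent of $(z^{(k)},\tilde z^{(k)})$. Therefore the expected Hamming distance is $\tfrac12\|z-\tilde z\|_1\le\tfrac12\sqrt N\|z-\tilde z\|_2$. Hoeffding's inequality, applied conditionally, gives concentration within $o(N)$. Thus $\eta_{k+1}\le \tfrac12\delta_k+o(1)$ whp.

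\emph{Step 3 (closing the recursion).} Combining the two steps gives
\[
\eta_{k+1}\le C_\alpha\sqrt{1-\rho_0}+3\sqrt{\eta_k}+o(1),
\qquad \eta_0=0.
\]
Because there are only $K_d\le\lceil\log_2\alpha\rceil+2$ rounds, $\eta_{K_d}$ is bounded by a fixed continuous function of $\sqrt{1-\rho_0}$ that vanishes at $0$. This function is an iterated square root, so the bound is poor but finite. Choosing $\rho_0(\eps,\alpha)$ accordingly completes the proof.

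The main obstacle is that the bound is only in terms of $\sqrt{\dist}$. This loss is harmless exactly because the number of rounds is bounded, so it is worth double-checking that $K_d$ is indeed $O_\alpha(1)$. Checking that every whp event is uniform in $\rho$ is routine via the decomposition above.
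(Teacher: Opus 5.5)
Your proposal is correct and follows the paper's proof essentially step for step. The paper likewise adds the two variational inequalities and uses $\frac14$-strong convexity to get $\|z-\widetilde z\|_2\le 3\|\sigma-\widetilde\sigma\|_2+\frac4D(\|G\|_{\op}+\|\widetilde G\|_{\op})\|G-\widetilde G\|_{\op}\sqrt N$, couples the rounding through common uniforms with a conditional Hoeffding bound, and iterates $x_{k+1}\le 3\sqrt{x_k}+C_\alpha\sqrt{1-\rho}+\gamma$ over the boundedly many rounds. Your observation that the operator-norm events involve only $\|G\|_{\op}$ and $\|H\|_{\op}$ is a slightly cleaner way to see uniformity in $\rho$ than the paper's per-$\rho$ Gaussian bound, but the argument is the same.
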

Several remarks are in order. The stability condition in~\Cref{def:strong} implies finite-parameter variants of stability considered in works~\cite{gamarnikKizildagPerkinsXu2022,gamarnikKizildagPerkinsXu2023,liSchrammZhou2024}, adapted to matrix discrepancy in~\Cref{def:alt-stability}. We adopt~\Cref{def:strong} both to record the stronger
stability property of our algorithm and to provide cleaner lower bound arguments.

Next, the order of quantifiers in~\cref{def:strong} is crucial: for each prescribed Hamming tolerance $\eps>0$, a single correlation
value $\rho_0<1$ must make the probability vanish as
$d\to\infty$, uniformly over $\rho\in[\rho_0,1)$. This is strictly stronger than requiring, for every $\eps>0$, that $\lim_{\rho\uparrow1}\limsup_{d\to\infty}\Pp[\dist(\A(G,\omega),\A(G^\rho,\omega))>\eps N]=0$. For example, consider the deterministic algorithm
$\widehat{\A}(G)=\operatorname{sign}(G_{11})\boldsymbol{1}_N$,
where $\boldsymbol{1}_N$ is the all-ones vector. Its outputs on $G$ and $G^\rho$ disagree either in every coordinate
or in none, according to whether $G_{11}$ and $(G^\rho)_{11}$
have different signs. Consequently, for every $0<\eps<1$ and
$\rho\in[0,1)$, $\Pp[\dist(\widehat{\A}(G),\widehat{\A}(G^\rho,\omega))>\eps N] = \arccos(\rho)/\pi$. This probability tends to zero as $\rho\uparrow1$, but is independent
of the dimension and strictly positive at every fixed $\rho<1$.
Thus $\widehat{\A}$ satisfies the weaker condition but is not
stable in the sense of \cref{def:strong}. 

\paragraph{Proof techniques: fractional correction and repeated rounding}
Our algorithm alternates fractional correction with randomized rounding
to reduce the Euclidean residual $\|G\sigma\|_2$ while controlling the number of sign flips. Starting from an independent random sign vector $\xi$, set
$t=\frac12\sqrt{\log\alpha_d}$ and
$p:=\Pp[Z\ge t]$ for $Z\sim\cN(0,1)$, where
$\alpha_d=N/D\to\alpha$. For sufficiently large fixed $\alpha$, a uniform Gaussian-cap estimate gives, whp, $\inf_{u\in\mathbb{S}^{D-1}}|i\in[N]:\ip{u}{\xi_i g_i}\ge t| \ge pN/2$, see~\Cref{lem:caps}. In other words, simultaneously for every unit direction $u$, at least $pN/2$ of the signed columns $\xi_i g_i$ have projection at least $t$ onto $u$. We call the corresponding indices \emph{useful} in direction $u$. If a signing $\sigma$ differs from $\xi$ in at most $pN/8$ coordinates, then in any given direction at most $pN/8$ indices can cease to be useful. Thus, every direction still has at least $pN/2-pN/8=3pN/8$ useful indices, so this geometric property persists throughout the Hamming ball of radius $pN/8$ around $\xi$.

\begin{figure}[H]
\centering
\begingroup
\resizebox{.98\linewidth}{!}{%
\definecolor{glLavender}{HTML}{C8C4F0}
\definecolor{glCoral}{HTML}{EF817C}
\begin{tikzpicture}[x=1cm,y=1cm,>=Stealth,
 every node/.style={font=\normalsize},
 useful/.style={->,draw=glCoral!85!black,line width=1.05pt},
 other/.style={->,draw=glLavender!80!black,opacity=.55,line width=.8pt},
 dot/.style={circle,fill=black,inner sep=1.8pt}]
\path[use as bounding box] (-.08,-.05) rectangle (17.90,6.70);
\node[font=\small\bfseries] at (3.56,6.45) {For every unit direction $u$};

\coordinate (L0) at (1.35,3.79);
\draw[->,black!48,line width=.65pt] (L0)--(6.94,3.79);
\node[anchor=west,inner sep=4pt] at (6.94,3.79) {$u$};
\draw[black!50,dashed,line width=.7pt] (3.88,1.84)--(3.88,5.76);
\node[anchor=north,inner sep=4pt] at (3.88,1.80) {$t$};
\draw[other] (L0)--(.28,4.86);
\draw[other] (L0)--(2.12,5.53);
\draw[other] (L0)--(2.89,2.04);
\draw[other] (L0)--(.30,2.41);
\draw[useful] (L0)--(4.99,5.14);
\draw[useful] (L0)--(5.58,4.39);
\draw[useful] (L0)--(5.13,2.98);
\draw[useful] (L0)--(4.66,2.18);
\node[dot] at (L0) {};
\node[anchor=north east,inner sep=4pt] at (L0) {$0$};
\node[align=center,font=\small,text=glCoral!65!black] at (5.52,5.99) {at least $\lceil q\rceil$ vectors};
\node[font=\small] at (3.65,1.10) {$v_i=\sigma_i g_i$};

\node[font=\Large] at (8.10,3.79) {$\Longrightarrow$};

\begin{scope}[shift={(12.00,3.79)}]
  \filldraw[fill=glLavender,fill opacity=.60,draw=black!60,line width=.75pt]
    (-1.80,-.90)--(-1.60,1.35)--(-.25,1.90)--(1.60,1.50)--(2.15,.35)
    --(1.70,-1.25)--(.25,-1.85)--(-1.10,-1.65)--cycle;
  \filldraw[fill=glCoral,fill opacity=.42,draw=glCoral!70!black,line width=.7pt]
    (0,0) circle (1.45);
  \node[font=\small] at (0,2.25) {$K_q$};
  \node at (0,0) {$B$};
\end{scope}
\node[font=\small,align=center] at (12.10,1.05)
  {$K_q=\left\{\sum_i a_i v_i:\ 0\le a_i\le1,\ \sum_i a_i\le q\right\}$};
\node[font=\small,align=center] at (12.10,.35)
  {$B=\left\{y\in\mathbb{R}^D: \|y\|_2\le qt\right\}$};
\end{tikzpicture}%
}
\endgroup
\caption{Ball containment. If every unit direction has at least $\lceil q\rceil$ vectors with projection at least $t>0$, then $K_q$ contains the ball $B$ of radius $qt$ centered at zero. Thus every $y$ with $\|y\|_2\le qt$ can be written as $y=\sum_i a_i v_i$, with $0\le a_i\le1$ and $\sum_i a_i\le q$.}
\label{fig:geometric-lemma-proof}
\end{figure}

We then use a geometric lemma to turn this directional property into fractional adjustments of the signs that cancel the residual. Specifically, if every direction has at least $\lceil q\rceil$ useful vectors $v_i$, then
every $\tilde v\in\R^D$ with $\|\tilde v\|_2\le qt$ admits a
representation $\tilde{v} = \sum_{i\le N}a_i v_i$, where $a_i\in[0,1]$ and $\sum_{i\le N}a_i\le q$; see \Cref{lem:zonotope}. For a current signing $\sigma$ in a neighborhood of the sign vector $\xi$, let $v_i=\sigma_i g_i$, $s=G\sigma$,
$R=\|s\|_2$, and $q=R/(2t)$. Provided
$\lceil q\rceil\le3pN/8$, applying~\Cref{lem:zonotope} to $\tilde{v} = s/2$ produces a comparison point $z_i^\circ = \sigma_i(1-2a_i)$ which satisfies
$Gz^\circ=0$ and $\sum_i p_i(z^\circ;\sigma)\le q$.
Figure~\ref{fig:geometric-lemma-proof} illustrates this ball containment.

Consequently, the unique optimizer $z$ of~\eqref{eq:QP-original}
satisfies
$\Psi_{G,\sigma}(z)\le\Psi_{G,\sigma}(z^\circ)\le3q/2$, and hence
\[
    \|Gz\|_2^2\le3Dq,
    \qquad
    \sum_i p_i(z;\sigma)\le\frac32q.
\]
Thus, although the optimizer need not cancel the residual exactly,
both its fractional residual and its expected rounding cost are small.

We next control the noise introduced by rounding, conditional on the
input and all previous rounds. Writing
$\sigma'=\operatorname{Round}(z;U)$ and $e_i=\sigma_i'-z_i$,
the errors are conditionally independent and centered, with variances
$\nu_i=1-z_i^2\in[0,1]$. Their covariance satisfies
$C_z=G\operatorname{diag}(\nu_i)G^\top\preceq GG^\top$.
Thus, bounds on $\|G\|_{\rm op}$ and the column norms
(\Cref{lem:covariance}), together with a fourth-moment calculation,
control $\|\sum_i e_i g_i\|_2$ even when $z$ depends on the entire
past; see \Cref{lem:rounding}. The leading constants in this bound
are universal; dependence on the fixed aspect ratio enters only
through the failure probability. Combining it with the
fractional estimates gives,
\[
    R_{k+1}\le C\sqrt{\frac{DR_k}{t}}+\frac{CD}{t},
    \qquad
    d_H(\sigma^{(k+1)},\sigma^{(k)})
    \le\frac{3R_k}{2t}+\frac{D}{t^2},
\]
for a universal $C>0$  with high conditional probability, where $R_k=\|G\sigma^{(k)}\|_2$. For  sufficiently large fixed $\alpha$, crude bounds on the residual and the cumulative number of flips keep the trajectory within the prescribed Hamming neighborhood of the sign vector $\xi$ whp, ensuring that sufficiently many useful columns remain
available at every round. Lastly, with $R_0=O(D\sqrt{\alpha})$, the recurrence yields after $R_K = O(D/t) = O(D/\sqrt{\log \alpha})$ whp for $K=O(\log\alpha)$, proving the Euclidean estimate~\eqref{eq:euclidean-upper}.

\paragraph{Rotational symmetry and the spectral guarantee}
The preceding argument controls the Euclidean residual $\|G\widehat{\sigma}\|_2$ of the output $\widehat{\sigma}$ of~\Cref{alg:recenter}. The key insight is that the algorithm preserves rotational symmetry, allowing us to upgrade this Euclidean estimate to the claimed operator-norm guarantee.

For every orthogonal $Q\in\R^{D\times D}$, the identity $\Psi_{QG,\sigma}(z)=\Psi_{G,\sigma}(z)$ and uniqueness of the minimizer imply that, with the same random seed $\omega$, $\A(QG,\omega)=\A(G,\omega)$. Combined with $QG\distr G$, this
shows that $X=G\widehat{\sigma}$ has a rotationally invariant law;
see~\Cref{lem:uniform-direction}. Since $X\ne0$ almost surely, we may write $X=RU$, where $R=\|X\|_2$ and $U$ is uniform on $\mathbb{S}^{D-1}$ and independent of $R$. Moreover,
$\mat(U)\distr W/\|W\|_F$ for $W\sim\GOE(d)$, so $\sqrt{d}\left\|\mat(U)\right\|_{\op}\convp 2$. Together with $\sum_i\widehat{\sigma}_iA_i=\sqrt{2/d}\,R\mat(U)$, this converts the radial estimate $R=O(D/\sqrt{\log\alpha})$ into the claimed operator norm bound, even though the residual $X$ itself need not be Gaussian.

\begin{remark}[Comparison with vector discrepancy] This contrasts sharply with the classical $\ell_\infty$ vector discrepancy. For a uniform direction $U$, note that $\sqrt{D/(2\log D})\|U\|_\infty \convp 1$, consequently, the preceding radial estimate yields only $\|G\widehat{\sigma}\|_\infty/\sqrt{N} = O(
        \sqrt{\log D/(\alpha\log\alpha)})$.  At fixed aspect ratio $N/D\to\alpha$, this gives an $O_\alpha(\sqrt{N\log N})$ bound, falling short of establishing $O_\alpha(\sqrt N)$. In particular, the
Euclidean-to-$\ell_\infty$ conversion incurs a
$\sqrt{\log D}$ loss that has no analogue in the matrix case.
\end{remark}
\paragraph{Stability proof}
We prove stability by comparing the fractional minimizers and coupling their rounding. Strong convexity and the variational inequalities for
the box-constrained minimizers give
\[
 \frac{\|z(G,\sigma)-z(\widetilde G,\widetilde\sigma)\|_2}{\sqrt N}
 \le 6\sqrt{\frac{d_H(\sigma,\widetilde\sigma)}N}
 +\frac4D(\|G\|_{\rm op}+\|\widetilde G\|_{\rm op})
 \|G-\widetilde G\|_{\rm op},
\]
for arbitrary inputs $G,\widetilde G$ and signings
$\sigma,\widetilde\sigma$. For an entrywise $\rho$-correlated pair $\widetilde{G}=G^\rho$, Gaussian operator norm bounds imply that the
second term is $O_\alpha(\sqrt{1-\rho})$ with high probability,
uniformly over $\rho\in[0,1)$~\cite{vershynin2018}. Coupling the two runs through the same initial signing and rounding uniforms,
and applying Hoeffding's inequality, yields, for each fixed $\gamma>0$,
\[
 x_{k+1}\le3\sqrt{x_k}+C_\alpha\sqrt{1-\rho}+\gamma,
 \qquad x_0=0,\qquad
 x_k:=\frac1N d_H(\sigma^{(k)},\widetilde\sigma^{(k)}),
\]
whp throughout the run. The number of rounds
is bounded for fixed $\alpha$; we can choose $\gamma>0$ sufficiently small and $\rho_0<1$ sufficiently close to one so that $x_k\le \epsilon$ for all $k\le K$ whp, uniformly over $\rho\in[\rho_0,1)$. This establishes noise stability.
\subsection{Online Frobenius-greedy algorithm}\label{sec:online-overview}
In the online setting, inputs arrive sequentially and the corresponding signs are chosen irrevocably. Under this model, we consider a greedy procedure. Given the partial sum $S_{i-1}:=\sum_{j\le i-1}\sigma_j A_j$, it chooses $\sigma_i$ minimizing the Frobenius norm, $\sigma_i =\argmin_{\sigma\in\{-1,1\}}\|S_{i-1}+\sigma A_i\|_F$. We call this algorithm as \emph{Frobenius-greedy}  (\Cref{alg:Frobenius-greedy}) and exactly characterize its limiting performance below.
\paragraph{Online Frobenius-greedy algorithm} Initializing with $\sigma_1=1$ and $S_1=A_1$, the greedy procedure sets
\[
\sigma_k = \argmin_{\sigma\in\{-1,1\}} \bigl\|S_{k-1}+\sigma A_k\bigr\|_{\rm F},\quad\text{where}\quad S_{k-1} = \textstyle\sum_{j=1}^{k-1}\sigma_j A_j
\]
for $k=2,\dots,N$. In particular, $\sigma_k = -{\rm sign}\ip{S_{k-1}}{A_k}_{\rm F}$ using the identity 
\[
\|S_{k-1}+\sigma A_k\|_F^2
 =\|S_{k-1}\|_F^2+\|A_k\|_F^2
  +2\sigma\langle S_{k-1},A_k\rangle_F.
  \]
  Ties are broken arbitrarily, i.e., declare $\sigma_k=1$ if $\ip{S_{k-1}}{A_k}_{\rm F}=0$ which occurs with probability zero for $A_1,\dots,A_N\sim \GOE(d)$. Note that since each step takes $O(d^2)$ time and the algorithm runs for $N$ rounds, the total running time is $O(Nd^2)$, which is linear in the size of the input $(A_1,\dots,A_N)$. In what follows, we characterize the radial evolution of this algorithm and apply rotational symmetry to obtain operator-norm guarantees.
\begin{algorithm}[H]
\caption{Frobenius-greedy algorithm}\label{alg:Frobenius-greedy}
\begin{algorithmic}[1]
\Require $A_1,\dots,A_N\in\R^{d\times d}$ 
\State Set $S_0\gets 0$, $\sigma_1 = 1$, and $S_1 = \sigma_1 A_1$.
\For{$k=2,3,\ldots,N$}
  \State Set $\sigma_k = -{\rm sign}\ip{S_{k-1}}{A_k}_{\F}$. If $\ip{S_{k-1}}{A_k}_{\F} = 0$, declare $\sigma_k=1$.
  \State Update $S_k\gets S_{k-1} + \sigma_k A_k$
\EndFor
\State \Return $\sigma_1,\dots,\sigma_N$.
\end{algorithmic}
\end{algorithm}
\paragraph{Exact performance curve}
Define $r:[0,\infty)\to [0,\sqrt{\pi/8})$ as the unique solution to the differential equation 
\begin{equation}\label{eq:diff-eq}
    \frac{d}{dt}r(t)^2 = 1-\sqrt{\frac{8}{\pi}}\,r(t),\quad r(0)=0.
\end{equation}
Equivalently,
\begin{equation}\label{eq:r-implicit}
    t = -\sqrt{\frac{\pi}{2}}\,r(t) - \frac{\pi}{4}\log\left(1-\sqrt{\frac{8}{\pi}}\,r(t)\right).
\end{equation}
This solution also has the explicit representation
\[
    r(t)=\sqrt{\frac{\pi}{8}}\left[1+W_0\!\left(-e^{-1-4t/\pi}\right)\right],\qquad t\ge0,
\]
where $W_0$ is the principal real branch of the Lambert $W$ function~\cite{corless1996lambert}, characterized by $W_0(z)e^{W_0(z)}=z$ and $W_0(z)\ge-1$ for $z\in[-e^{-1},0)$. This branch gives $0\le r(t)<\sqrt{\pi/8}$.

The following result determines the exact limiting discrepancy of Frobenius-greedy.
\begin{theorem}[Exact Frobenius-greedy curve]\label{thm:greedy-performance}
    Suppose that $A_1,\dots,A_N\sim \GOE(d)$ where $N/d^2\to \tau \in(0,\infty)$. Then, the signs $\sigma_1,\dots,\sigma_N$ generated by Algorithm~\ref{alg:Frobenius-greedy} satisfy
    \[
    \frac{1}{\sqrt{N}}\left\|\sum_{i=1}^N \sigma_i A_i\right\|_{\rm op}\xrightarrow{\Pp} \Phi(\tau)
    \]
    where $\Phi(\tau) = \sqrt{\frac{2}{\tau}}r(2\tau)$ for $r$ defined through~\eqref{eq:diff-eq}-\eqref{eq:r-implicit}. Consequently, the algorithm succeeds whp if $\kappa>\Phi(\tau)$ and fails whp if $\kappa<\Phi(\tau)$.
\end{theorem}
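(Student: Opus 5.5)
The plan is to reduce the theorem to a one-dimensional fluid limit for the Euclidean norm of the vectorized partial sum, and then upgrade that to the operator norm using rotational symmetry, as was done for \Cref{alg:recenter}.

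\textbf{Vectorization and the radial recursion.} Use the Gaussian vectorization \eqref{eq:vectorization}: $g_i=\sqrt{d/2}\,\svec(A_i)\sim\cN(0,I_D)$ are i.i.d. Let $X_k:=\sum_{j\le k}\sigma_j g_j=\sqrt{d/2}\,\svec(S_k)$. Since $\svec$ is a Frobenius isometry, $\ip{S_{k-1}}{A_k}_{\F}$ is a positive multiple of $\ip{X_{k-1}}{g_k}$. Hence the greedy rule reads $\sigma_k=-\operatorname{sign}\ip{X_{k-1}}{g_k}$, and
\[
\|X_k\|_2^2=\|X_{k-1}\|_2^2+\|g_k\|_2^2-2\bigl|\ip{X_{k-1}}{g_k}\bigr|.
\]
Let $\mathcal F_{k-1}$ be the $\sigma$-field generated by $g_1,\dots,g_{k-1}$. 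Since $g_k$ is independent of $\mathcal F_{k-1}$, conditionally on $\mathcal F_{k-1}$ we have $\ip{X_{k-1}}{g_k}\sim\cN(0,\|X_{k-1}\|_2^2)$ and $\E\|g_k\|_2^2=D$. The conditional drift is therefore
\[
\E\bigl[\|X_k\|_2^2-\|X_{k-1}\|_2^2\mid\mathcal F_{k-1}\bigr]=D-2\sqrt{2/\pi}\,\|X_{k-1}\|_2 .
\]
Rescale by setting $y_D(t):=\|X_{\lfloor tD\rfloor}\|_2^2/D^2$. Over $D$ steps the drift of $y_D$ is $1-\sqrt{8/\pi}\sqrt{y_D}$. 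This is exactly \eqref{eq:diff-eq} written for $y=r^2$, with $y(0)=0$.

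\textbf{Fluid limit.} The goal is $\sup_{t\le T}|y_D(t)-r(t)^2|\to0$ in probability for each fixed $T$. Write $\|X_k\|_2^2$ as its predictable drift plus a martingale $M_k$.
\begin{itemize}[leftmargin=*]
\item \emph{Noise.} The increment of $M_k$ has conditional variance $O(D+\|X_{k-1}\|_2^2)$. On the event $\|X_{k-1}\|_2\le CD$, which holds because $\|X_k\|_2^2$ grows by at most about $D$ per step, this is $O(D^2)$. Over $TD$ steps Doob's inequality gives $\sup_k|M_k|=O_{\Pp}(D^{3/2})=o(D^2)$.
\item \emph{Discretization.} The drift error incurred by freezing $\|X\|_2$ over blocks of length $\varepsilon D$ is negligible by the same bound.
\item \emph{Grönwall step.} The drift $f(y)=1-\sqrt{8/\pi}\sqrt y$ fails to be Lipschitz only at $y=0$. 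I would handle this either by running Grönwall on $r$ rather than $r^2$ once $y_D\ge\delta$, or by a monotone sandwich. The latter works because $f$ is decreasing, so comparison with solutions started at perturbed initial data $\pm\delta$ controls the deviation. Near $t=0$, $r(t)^2\approx t$ and trivial bounds suffice.
\end{itemize}
At $k=N$ the time is $t=N/D=\alpha_d\to2\tau$, so $\|X_N\|_2/D\convp r(2\tau)$. I expect the main technical work to be this fluid-limit step: the uniform martingale control together with the non-Lipschitz point at the origin.

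\textbf{Rotational symmetry and conversion to operator norm.} For any orthogonal $Q\in\R^{D\times D}$, the signs produced on $(Qg_i)$ coincide with those on $(g_i)$, because each $\sigma_k$ depends only on inner products. Together with $(Qg_i)\distr(g_i)$, this shows that $X_N$ is rotationally invariant, as in \Cref{lem:uniform-direction}. Since $X_N\ne0$ almost surely, we can write $X_N=R\,U$ with $U$ uniform on $\mathbb S^{D-1}$ and independent of $R=\|X_N\|_2$. Moreover $\mat(U)\distr W/\|W\|_F$ for $W\sim\GOE(d)$, so $\sqrt d\,\|\mat(U)\|_{\op}\convp2$. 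Then
\[
\Bigl\|\sum_i\sigma_iA_i\Bigr\|_{\op}=\sqrt{2/d}\,R\,\|\mat(U)\|_{\op}\approx\frac{2\sqrt2}{d}\,D\,r(2\tau)\approx\sqrt2\,d\,r(2\tau),
\]
using $D\sim d^2/2$. Dividing by $\sqrt N\sim d\sqrt\tau$ gives the limit $\sqrt{2/\tau}\,r(2\tau)=\Phi(\tau)$.

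The success and failure statements then follow immediately from convergence in probability to the constant $\Phi(\tau)$.
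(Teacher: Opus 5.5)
Your proposal is correct and follows essentially the same route as the paper. Both use vectorization, the exact radial recursion with a drift--martingale split controlled by Doob's inequality, a monotonicity-based comparison for the non-Lipschitz drift at the origin, and rotational invariance to turn the radial limit into the operator-norm limit; the paper's tracking lemma, $\sup_{t\le T}|\bY_D(t)-y(t)|\le 2\sup_{t\le T}|E_D(t)|$, is your sandwich option, and your block discretization is unnecessary because the piecewise-constant interpolation makes the drift sum an exact integral.
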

We omit the edge case $\kappa=\Phi(\tau)$. Inverting the performance curve identifies the density threshold for Frobenius-greedy.
\begin{theorem}[Threshold for $\tau$]\label{thm:curve-invert}
The function $r(t)$ implicitly defined by~\eqref{eq:r-implicit} is the unique solution to the differential equation~\eqref{eq:diff-eq}. Moreover, for $0<\kappa<2$, let $a_\kappa\in(0,1)$ be the unique solution to 
    \[
    \kappa^2 = \frac{2a_\kappa^2}{-a_\kappa-\log(1-a_\kappa)}.
    \]
    Then, the algorithm works whp provided 
    \[
    \tau>\tau_{\rm FG}(\kappa) = \frac{\pi}{8}\bigl[-a_\kappa - \log(1-a_\kappa)\bigr] = \frac{\pi a_\kappa^2}{4\kappa^2} = \frac{\pi}{4\kappa^2}\bigl(1+o_\kappa(1)\bigr),
    \]
    where the last equality regards the limit $\kappa\downarrow 0$.
\end{theorem}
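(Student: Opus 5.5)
The plan is to integrate the ODE~\eqref{eq:diff-eq} by separation of variables, use the resulting integrated relation to prove uniqueness, and then reparametrize the performance curve $\Phi$ of Theorem~\ref{thm:greedy-performance} by $a:=\sqrt{8/\pi}\,r(2\tau)\in(0,1)$. In this parametrization both $\tau$ and $\Phi(\tau)^2$ become explicit elementary functions of $a$. Throughout, write $c:=\sqrt{8/\pi}$, so that $2/c=\sqrt{\pi/2}$ and $2/c^2=\pi/4$. Define
\[
F(r):=\int_0^r\frac{2s}{1-cs}\,ds=-\frac{2}{c}r-\frac{2}{c^2}\log(1-cr),\qquad r\in[0,1/c).
\]
This is exactly the right-hand side of~\eqref{eq:r-implicit}. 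Since $F'(r)=2r/(1-cr)>0$ on $(0,1/c)$, with $F(0)=0$ and $F(r)\to\infty$ as $r\uparrow 1/c$, $F$ is a strictly increasing bijection $[0,1/c)\to[0,\infty)$. Hence~\eqref{eq:r-implicit} defines a unique $r(t)=F^{-1}(t)$, and this $r$ solves~\eqref{eq:diff-eq}: differentiating $F(r(t))=t$ gives $\frac{2r r'}{1-cr}=1$, i.e.\ $(r^2)'=1-cr$.

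For uniqueness, let $\tilde r\ge 0$ be any solution of~\eqref{eq:diff-eq} with $\tilde r(0)=0$, understood as $\tilde r^2$ being $C^1$. I will show that $\tilde r(t)\in(0,1/c)$ for all $t>0$.
\begin{itemize}
\item Near $0$, $(\tilde r^2)'(0)=1>0$, so $\tilde r>0$ immediately after $0$.
\item While $\tilde r<1/c$, the function $\tilde r^2$ is strictly increasing, so $\tilde r$ cannot return to $0$.
\item Suppose $t^*$ is the first time with $\tilde r(t^*)=1/c$. On $(0,t^*)$ the function $\tilde r$ is $C^1$, because $\tilde r=\sqrt{\tilde r^2}$ with $\tilde r^2>0$. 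There $\frac{d}{dt}F(\tilde r(t))=1$, so $F(\tilde r(t))=t$ after letting $t\downarrow 0$ and using continuity of $F$ at $0$. Letting $t\uparrow t^*$ forces $F(\tilde r)\to\infty$, a contradiction.
\end{itemize}
Thus $F(\tilde r(t))=t$ for all $t\ge0$, and so $\tilde r=r$. This handling of the singularity at $r=0$ is the only delicate step: the map $r\mapsto(1-cr)/(2r)$ is not Lipschitz there, so Picard--Lindel\"of does not apply directly. The Lambert-$W$ formula follows by solving $F(r)=t$ for $w=cr-1$.

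Next I reparametrize. For $\tau>0$ put $a=c\,r(2\tau)\in(0,1)$ and $h(a):=-a-\log(1-a)>0$. Since $\frac{2}{c}r=\frac{2}{c^2}a=\frac{\pi}{4}a$, the relation $F(r(2\tau))=2\tau$ reads
\[
\tau=\frac{\pi}{8}\,h(a).
\]
Moreover
\[
\Phi(\tau)^2=\frac{2}{\tau}\,r^2=\frac{2}{\tau}\cdot\frac{\pi a^2}{8}=g(a):=\frac{2a^2}{h(a)}.
\]
The map $a\mapsto\tau$ is a strictly increasing bijection $(0,1)\to(0,\infty)$, since $h'(a)=a/(1-a)>0$. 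I then claim that $g$ is strictly decreasing from $4$ to $0$:
\begin{itemize}
\item As $a\downarrow0$, $h(a)\sim a^2/2$, so $g\to4$. As $a\uparrow1$, $h\to\infty$, so $g\to0$.
\item $g'<0$ is equivalent to $2/a<h'/h$, i.e.\ to $k(a):=a^2-2(1-a)h(a)>0$.
\item This holds because $k(0)=0$ and $k'(a)=2a+2h-2(1-a)h'=2h>0$.
\end{itemize}

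Consequently, for $0<\kappa<2$ the equation $\kappa^2=g(a_\kappa)$ has a unique root $a_\kappa\in(0,1)$. Because $\Phi$ is strictly decreasing in $\tau$, we get $\kappa>\Phi(\tau)$ iff $a>a_\kappa$ iff $\tau>\tau_{\rm FG}(\kappa):=\frac{\pi}{8}h(a_\kappa)$. By Theorem~\ref{thm:greedy-performance}, the algorithm then succeeds whp. Substituting $h(a_\kappa)=2a_\kappa^2/\kappa^2$ gives $\tau_{\rm FG}(\kappa)=\pi a_\kappa^2/(4\kappa^2)$. Finally, as $\kappa\downarrow0$ we have $g(a_\kappa)\to0$, which forces $a_\kappa\to1$ by the monotonicity and limits of $g$. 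Hence $\tau_{\rm FG}(\kappa)=\frac{\pi}{4\kappa^2}(1+o_\kappa(1))$.
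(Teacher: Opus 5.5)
Your proof is correct. For the curve inversion and the asymptotics it essentially follows the paper. Your $k(a)=a^2-2(1-a)h(a)$ is exactly $(1-a)$ times the paper's $q(a)=aL'(a)-2L(a)$, where your $h$ is the paper's $L$. The chain $\kappa>\Phi(\tau)\iff a(\tau)>a_\kappa\iff\tau>\tau_{\rm FG}(\kappa)$ is also the same.

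The genuine difference is the uniqueness argument. The paper takes two nonnegative solutions $r_1,r_2$ and differentiates $(r_1^2-r_2^2)^2$, obtaining $-4\sqrt{2/\pi}\,(r_1-r_2)^2(r_1+r_2)\le0$, so this function vanishes identically. That argument works entirely with $r^2$ and needs only $r_i\ge0$, with no a priori confinement of the solution. It uses the same mechanism as the tracking lemma in the fluid-limit proof, namely that the drift $x\mapsto1-\sqrt{8/\pi}\sqrt{x}$ is nonincreasing. Your route instead uses a first-hitting-time argument to confine an arbitrary solution to $(0,\sqrt{\pi/8})$, then integrates to get $F(\tilde r(t))=t$. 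It is longer, but constructive: it recovers the implicit formula from any solution and gives the range $0<r<\sqrt{\pi/8}$ as a by-product.

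For the small-$\kappa$ limit, you argue qualitatively via continuity of the inverse of your $g$ at $0$. The paper uses $L(a)\le a^2/(2(1-a))$ to get the explicit bound $1-\kappa^2/4\le a_\kappa<1$; either suffices for the $o_\kappa(1)$ claim.

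One small point to tighten: differentiating $F(r(t))=t$ only verifies the ODE for $t>0$. Since $F'(0)=0$, $r(t)\sim\sqrt{t}$ is not differentiable at $0$. To get $(r^2)'(0)=1$, note $F(r)=r^2+O(r^3)$, as the paper does, or pass to the limit in $(r^2)'(t)=1-\sqrt{8/\pi}\,r(t)$ as $t\downarrow0$.
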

Figure~\ref{fig:frobenius-greedy-fluid} illustrates the limiting performance curve and radial dynamics.
\begin{figure}[H]
  \centering
  \includegraphics[width=\linewidth]{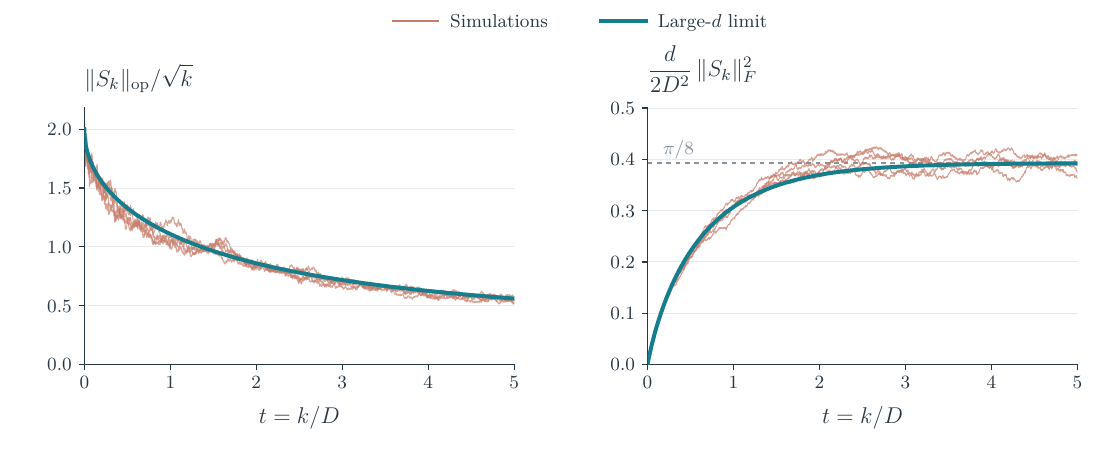}
  \caption{Frobenius-greedy dynamics at $d=45$: normalized operator-norm
  discrepancy (left) and normalized squared Frobenius norm (right).
  Thin curves show five independent runs; thick curves show the large-$d$ limits.
  The dashed line marks the equilibrium $\pi/8$.}
  \label{fig:frobenius-greedy-fluid}
\end{figure}

\paragraph{Proof ideas} Using the vectorization~\eqref{eq:vectorization}, write
$x_k=\sum_{i\le k}\sigma_i g_i$ and $R_k=\|x_k\|_2$, with $x_0=0$.
The greedy rule gives the exact radial recursion
\[
 R_k^2=R_{k-1}^2+\|g_k\|_2^2-2R_{k-1}|Z_k|,
\]
where $Z_k=\ip{u_{k-1}}{g_k}$ and
$u_{k-1}=x_{k-1}/R_{k-1}$ when $R_{k-1}>0$, with any fixed unit vector used otherwise. Conditional on $g_1,\dots,g_{k-1}$, the term
$Z_k$ is standard normal. Thus the incoming squared norm has
conditional mean $D$, while the greedy correction has conditional
mean $\sqrt{8/\pi} R_{k-1}$. For the normalized quantity $Y_k = R_k^2/D^2$, this gives the drift--martingale decomposition:
\[
Y_k = \frac1D \sum_{i=1}^k b(Y_{i-1}) + \frac{1}{D^2}\sum_{i=1}^k \eta_i,
\]
where $b:[0,\infty)\to \R$ is defined by $b(x) = 1-\sqrt{8/\pi}\sqrt{x}$ and $\eta_i =(\|g_i\|_2^2 - D) - 2R_{i-1}(|Z_i|-\sqrt{2/\pi})$.  On the rescaled time axis $t=k/D$, Doob's maximal inequality shows that the accumulated martingale
error vanishes uniformly on bounded time intervals. As for the drift term, while $b$
is not locally Lipschitz at zero, it is nonincreasing, which allows
a direct comparison of the corresponding integral equations. This establishes a \emph{fluid limit}: the interpolated trajectory
$\overline Y_D(t)=Y_{\lfloor Dt\rfloor}$ converges uniformly in
probability on bounded intervals to the unique nonnegative solution of
\[
    y'(t)=1-\sqrt{8/\pi}\sqrt{y(t)},\qquad y(0)=0.
\]
Solving this ODE gives the limiting radius, and rotational invariance
converts the radial limit into an operator-norm limit, as in the
offline analysis. In particular, we determine the entire limiting
performance curve of~\Cref{alg:Frobenius-greedy}, together with
its small-margin threshold asymptotics. The same scalar ODE governs the second moment of the optimal process in the $L^2$ mean-field problem studied by Fiedler, Jackson, Lacker, and Niles-Weed~\cite[Section~7]{fiedler2026meanfield}. Here, we establish this limit directly for Frobenius-greedy and use rotational invariance to obtain operator-norm guarantees.

\subsection{Algorithmic lower bounds}\label{sec:lower-overview}
In this section, we establish algorithmic lower bounds for stable offline and online algorithms that are sharp up to constants in the regime $\kappa\downarrow 0$. Given iid $A_1,\dots,A_N\sim \GOE(d)$ and $\kappa>0$, let $\mathcal{S}_{d,N}(\boldsymbol{A};\kappa)$ denote the set satisfying~\eqref{eq:target}:
\begin{equation*}
 \mathcal S_{d,N}(\boldsymbol{A};\kappa)
 =\left\{\sigma\in\{-1,+1\}^N:
\left\|\sum_{i=1}^N \sigma_i A_i\right\|_{\op}\le \kappa\sqrt{N}\right\}.
\end{equation*}
We now turn to the obstruction for stable algorithms.
\begin{theorem}[Lower bound for stable algorithms]\label{thm:stable-lower}
Set $c_{\rm st}=\log(4/e)/72 \approx 0.0053$.
For every sufficiently small fixed $\kappa>0$, fixed $0<\tau\le c_{\mathrm{st}}/(\kappa^2\log(1/\kappa))$ and large $d$ the following holds. No  stable algorithm (in the sense of~\Cref{def:strong}) whose input are iid $A_1,\dots,A_N\sim \GOE(d)$ with $N/d^2\to \tau$ can produce an output $\sigma\in \cS_{d,N}(\boldsymbol{A};\kappa)$ with probability tending to one. 
\end{theorem}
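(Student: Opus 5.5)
We will follow the ensemble multi-OGP strategy from the symmetric binary perceptron \cite{gamarnikKizildagPerkinsXu2022}, adapted to the matrix constraint through the vectorization~\eqref{eq:vectorization}. In that picture the constraint reads $\|\mat(G\sigma)\|_{\op}\le \kappa\sqrt{N d/2}$, with $D/N\to 1/(2\tau)$.

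\textbf{Step 1: interpolation path.} Fix $m$ (to be chosen as roughly $\Theta(\log(1/\kappa))$ multiplied by suitable constants) and a fine grid of correlations. We build an interpolation $G=G_0,G_1,\dots,G_T=H$ between $G$ and an independent copy $H$, where consecutive instances are $\rho$-correlated with $\rho\ge\rho_0$. A cleaner alternative is to resample columns block by block, but the $\rho$-interpolation matches \Cref{def:strong}.

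\textbf{Step 2: ensemble OGP via a first-moment bound.} We show that, whp, there is no $m$-tuple $(\sigma^{(1)},\dots,\sigma^{(m)})$ with $\sigma^{(j)}\in\cS(G_{t_j};\kappa)$ on instances along the path and all pairwise normalized overlaps in a window $[\beta-\eta,\beta]$, where $\beta=1-\Theta(\kappa^2\log(1/\kappa))$-type scale and $\eta$ is small. The count of such tuples is at most $\exp(N[H(\tfrac{1-\beta}{2})(m-1)+\log 2]+o(N))$. The probability for a fixed tuple is the probability that the $m$ correlated vectors $G_{t_j}\sigma^{(j)}/\sqrt N$ all land in the spectral small ball $\{\|\mat(\cdot)\|_{\op}\le\kappa\}$.

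\textbf{Step 3: spectral small-ball bound (main obstacle).} We bound this probability by passing to Frobenius norm: $\|M\|_{\op}\le\kappa$ implies $\|M\|_F^2\le d\kappa^2$. The $m$ vectors are jointly Gaussian in $\R^{D}$, with a $D$-fold tensorized covariance $\Sigma\otimes I_D$. Here $\Sigma$ has off-diagonal entries near $\beta$ times the instance correlations, so $\det\Sigma\gtrsim(1-\beta)^{m-1}$ up to constants. A chi-square / Gaussian small-ball estimate then gives probability at most $\exp\bigl(D[\tfrac m2\log(C\kappa^2/(1-\beta))]\cdot\ldots\bigr)$, more precisely $\exp(-\tfrac D2\log\det\Sigma+\tfrac{Dm}{2}\log(e\kappa^2/m\cdot\mathrm{const}))$. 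The Frobenius relaxation loses only constants at the exponential scale $D$. This is exactly why the constant $\log(4/e)$ appears: it is the residual of the Gaussian volume bound with the Frobenius ball of radius $\kappa\sqrt{d}$ in dimension $D\approx d^2/2$.

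Balancing the two estimates goes as follows. Choose $1-\beta\asymp\kappa^2$ times a constant, so that each extra solution costs entropy $N\kappa^2\log(1/\kappa)$ while contributing probability $e^{-\Theta(D)}$. With $D/N=1/(2\tau)$, the exponent is negative for all large $m$ exactly when $\tau\le c_{\rm st}/(\kappa^2\log(1/\kappa))$. We must also union-bound over the choice of $m$ indices on the path, of which there are only polynomially many in $T=O(1)$. The delicate part is obtaining the determinant lower bound uniformly over all admissible instance-correlation patterns; we will handle this by a Gershgorin/convexity argument, as in the perceptron case.

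\textbf{Step 4: contradiction with stability.} Suppose a stable $\A$ succeeds with probability $1-o(1)$. We run it with common $\omega$ on every $G_t$. By success and a union bound over the $T=O(1)$ instances, all outputs are feasible. By stability with $\eps\ll\eta$, consecutive outputs are within Hamming distance $\eps N$. At the endpoints, $\A(G,\omega)$ and $\A(H,\omega)$ are nearly independent, so their overlap is near $0$ by concentration, given that the output signs are nearly symmetric. Symmetrizing by random sign flips $\sigma\mapsto\sigma\odot s$, $A_i\mapsto s_iA_i$, preserves the law and enforces this.

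We then follow the chain-building argument of \cite{gamarnikKizildagPerkinsXu2022}. Starting from a reference output, the overlap with later outputs moves continuously from $\approx1$ to $\approx0$ in steps of size $\le 2\eps$. Greedily we can therefore select $m$ outputs whose pairwise overlaps all lie in $[\beta-\eta,\beta]$, using a Ramsey argument on the overlap graph when necessary. This tuple is forbidden by Step 2, which contradicts success with probability tending to one.
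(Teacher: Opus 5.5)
Your Steps~2--3 essentially reproduce the paper's first-moment argument: Frobenius relaxation, Gaussian density times ball volume, and a covariance lower bound $B\succeq\frac{1-\beta}{2}I_m$. With $1-\beta=8\kappa^2$, the ratio $2e\kappa^2/(1-\beta)=e/4$ is where $\log(4/e)$ comes from. Your choice of $m$ is off: it must be of order $\tau/\log(4/e)$, i.e.\ up to $\Theta(1/(\kappa^2\log(1/\kappa)))$, for the $m$-fold small-ball gain to beat the $N\log 2$ term. $\Theta(\log(1/\kappa))$ is not enough.

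\textbf{The extraction step fails.} The genuine gap is in Steps~1 and~4. On a \emph{single} interpolation path, continuity of the trajectory cannot produce $m\ge 3$ outputs with \emph{all} pairwise overlaps in $[\beta-\eta,\beta]$. Since $1-Q=2\dist/N$ is a metric, consider a trajectory that flips a fresh block of $\eps N$ coordinates at each step. It satisfies $1-Q(\sigma_s,\sigma_u)=\bigl(1-Q(\sigma_s,\sigma_t)\bigr)+\bigl(1-Q(\sigma_t,\sigma_u)\bigr)\ge 2(1-\beta)$ for $s<t<u$, whereas the window demands $1-Q\le 1-\beta+\eta<2(1-\beta)$. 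One path yields only a single pair in the window. Pairs are useless here, because $N\log 2$ dominates the pairwise first moment as soon as $\tau$ exceeds an absolute constant.

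\textbf{What is needed instead.} The missing idea is the star-shaped interpolation combined with Ramsey's theorem. Take $T=R_{Q+1}(m)$ independent branches $G_i(\theta)=\cos\theta\,G_0+\sin\theta\,G_i$ sharing the root $G_0$ and the seed, so all outputs coincide at $\theta=0$. Color the pair $(i,j)$ by the first grid index $k$ with $Q(\sigma_i(k),\sigma_j(k))\le\beta$; stability forces this overlap into $[\beta-\eta,\beta]$. A monochromatic $K_m$ then gives $m$ branches crossing at the \emph{same} $k$. That is, you get $m$ solutions of $m$ correlated instances with all pairwise overlaps in the window, which is exactly what the ensemble OGP forbids.

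\textbf{The endpoint step is also unjustified.} With a shared seed, $\A(G,\omega)$ and $\A(H,\omega)$ need not decorrelate. Symmetrizing with a shared $s$ leaves $Q$ unchanged, and an unshared $s$ destroys the stability coupling. For instance, Algorithm~\ref{alg:recenter} stays within Hamming distance $bN$ of its seed $\xi$, so its outputs on independent inputs have overlap at least $1-4b$, far from $0$. The paper makes no claim about endpoint overlaps. Instead it proves an $m$-wise chaos lemma by a first-moment bound: with high probability, no $m$ feasible solutions of independent $H_1,\dots,H_m$ have all pairwise overlaps $>\beta-\eta$. In the coloring it assigns color $0$ to pairs whose endpoint overlap stays above $\beta-\eta$, so a color-$0$ clique contradicts chaos. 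This again requires large $m$ and many branches. Pairwise chaos cannot be obtained by a first moment near the top of the density range, where $N\log2\gg D\log(1/\kappa)$.
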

\Cref{thm:stable-lower} asserts that stable algorithm fail below the threshold $c_{\rm st}/(\kappa^2\log(1/\kappa))$. Namely, in the regime $\kappa\downarrow 0$,~\Cref{alg:recenter} is optimal up to constants within the class of stable algorithms. The stability notion we adopt (\Cref{def:strong}) leads to a cleaner lower bound argument; we show in~\Cref{sec:quantitative-stability} that our algorithmic lower bound still holds for a weaker finite-parameter notion of stability (\Cref{def:alt-stability}) similar to~\cite{gamarnikKizildagPerkinsXu2022,gamarnikKizildagPerkinsXu2023,liSchrammZhou2024}.

We next formalize the class of online algorithms and state our result.
\begin{definition}\label{def:On-Line}
A randomized algorithm $\A:\R^{d\times d}\cdots \times \R^{d\times d}\times \Omega\to \{-1,1\}^N$ with input $A_1,\dots,A_N$ and random seed $\omega\in\Omega$ independent of the input is \emph{online} if for every $1\le j\le N$, $\sigma_j:=\A(A_1,\dots,A_N,\omega)_j\in\{-1,1\}$ is measurable with respect to the $\sigma$-field generated by $(A_1,\dots,A_j,\omega)$. That is, for fixed $\omega\in\Omega$, $\sigma_j$ is a function of $A_1,\dots,A_j$ only.
\end{definition}
Note that~\Cref{alg:Frobenius-greedy} is online in the sense of~\Cref{def:On-Line} as $\sigma_j$ is a function of $A_1,\dots,A_j$ only. We next establish a sharp algorithmic lower bound within the class of online algorithms.
\begin{theorem}[Lower bound for online algorithms]\label{thm:Online-LowerBd}
There exists a universal constant $C>0$ such that, for every sufficiently small fixed $\kappa>0$ and fixed $0<\tau\le C/\kappa^2$, there is a constant $c=c(\kappa,\tau)>0$ such that the following holds for all large $d$. No online algorithm (in the sense of~\Cref{def:On-Line}) whose input are iid $A_1,\dots,A_N\sim\GOE(d)$ with $N/d^2\to\tau$ can produce an output $\sigma\in\cS_{d,N}(\boldsymbol{A},\kappa)$ with probability at least $e^{-cN}$.
\end{theorem}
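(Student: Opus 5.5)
The approach is an ensemble multi-OGP for online algorithms via a revealed prefix, following the perceptron argument of~\cite{gamarnikKizildagPerkinsXu2023}. Fix an integer $m$ and a prefix fraction $\beta\in(0,1)$, both to be tuned as functions of $\kappa$. We build $m$ correlated instances $\boldsymbol A^{(1)},\dots,\boldsymbol A^{(m)}$ that share the first $n=\lfloor(1-\beta)N\rfloor$ matrices $A_1,\dots,A_n$ and have independent GOE suffixes. Fix the seed $\omega$. By \Cref{def:On-Line}, the outputs $\sigma^{(1)},\dots,\sigma^{(m)}$ agree on the first $n$ coordinates. Hence any successful $m$-tuple lies in the set $\mathcal T$ of $m$-tuples of signings that share a common prefix of length $n$ and each satisfy~\eqref{eq:target} for its own instance.

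The first step is to show that $\mathcal T$ is empty with probability $1-e^{-\Omega(N)}$, using a first moment bound. We count tuples: $2^{n}$ choices for the prefix times $2^{m\beta N}$ choices for the suffixes. For a fixed tuple, write $M_j=N^{-1/2}\sum_i\sigma^{(j)}_iA^{(j)}_i$. Then $M_j=\sqrt{1-\beta}\,P+\sqrt{\beta}\,Q_j$, where $P$ is the shared prefix part and $Q_1,\dots,Q_m$ are independent GOE matrices, also independent of $P$. Conditioning on $P$, the events $\|M_j\|_{\op}\le\kappa$ are independent. Each is contained in the event that $Q_j$ lies in an operator-norm ball of radius $\kappa/\sqrt\beta$ around $-\sqrt{(1-\beta)/\beta}\,P$.

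The key estimate is a small-ball/Anderson-type bound: for any symmetric center $B$,
\[
\Pp\bigl(\|W-B\|_{\op}\le r\bigr)\le\Pp\bigl(\|W\|_{\op}\le r\bigr)\le \exp\bigl(-d^2\,\Lambda(r)\bigr),
\]
where Anderson's inequality applies because the operator-norm ball is symmetric and convex. For small $r$ we need $\Lambda(r)\ge \tfrac14\log(1/r)-O(1)$. This comes from the crude volume bound: the GOE density is at most $(C d)^{D/2}$ per coordinate scale, and the operator-norm ball of radius $r/\sqrt d$-scale has volume $(Cr)^{D}$ times normalization. Equivalently, we can invoke Maillard's large deviations for the spectrum confined to $[-r,r]$. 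We then take the expectation over $P$.

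The first moment is at most
\[
\exp\Bigl(N\log2\,(1-\beta+m\beta)-m\,\tfrac{d^2}{4}\log\tfrac{\sqrt\beta}{\kappa}+O(md^2)\Bigr).
\]
With $N\approx\tau d^2$, we choose $\beta=C_1\kappa^2$ so that $\log(\sqrt\beta/\kappa)$ is a large constant $L$, and $m$ large. The exponent per $d^2$ is then $\tau\log2+m\bigl(\tau C_1\kappa^2\log 2-L/4+O(1)\bigr)$. If $\tau\le C/\kappa^2$ with $C$ small relative to $C_1$, the bracket is negative. Taking $m$ of order $\tau\log2\cdot L^{-1}$ (finite for fixed $\kappa,\tau$) makes the whole exponent $\le -c'd^2$. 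In particular, the $\log(1/\kappa)$ entropy that appears in the stable case is absent here, because the prefix is shared exactly rather than only up to overlap.

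The second step converts this into the algorithmic bound. Let $p$ be the success probability of the algorithm. By conditional independence of the suffixes given the prefix and $\omega$, and Jensen's inequality,
\[
\Pp(\text{all } m \text{ succeed})=\E\bigl[\Pp(\text{success}\mid A_{\le n},\omega)^m\bigr]\ge p^m.
\]
Since success of all $m$ runs implies $\mathcal T\neq\emptyset$, we get $p^m\le e^{-c'd^2}$, so $p\le e^{-cN}$ with $c=c'/(m\tau)$.

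The main obstacle I expect is the uniform small-ball bound with the explicit $\tfrac14\log(1/r)$ rate, and its use when the center depends on the prefix. Anderson's inequality handles the shift, but the rate constant must be matched carefully against the entropy term $\log 2$ per coordinate, using $D\approx d^2/2$.
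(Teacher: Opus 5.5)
Your proof is correct, but it takes a heavier route than the paper.

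\textbf{How the paper argues.} The paper never counts prefixes. It conditions on $\cF=\sigma(\omega,A_1,\dots,A_{N-q})$, so by online measurability the algorithm's own prefix signs, and hence the prefix sum $S$, are fixed. Each of the $2^q$ suffix signings $u$ then gives $M_u\distr\sqrt q\,W$, independent of $\cF$. The shifted small-ball bound is uniform in the center: pass to the Frobenius norm, then multiply the Gaussian density bound $(2\pi)^{-D/2}$ by the ball volume. Combined with a union bound over $u$, this gives directly $p_d\le 2^q\bigl(e\kappa^2 d/(\Delta_N(d+1))\bigr)^{D/2}$. With $q\approx 4e\kappa^2 N$ and $\tau\le 1/(10e\kappa^2)$, this is $\exp(-\Omega(d^2))$.

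\textbf{How the two routes relate.} In effect, the paper's proof is your $m=1$ case, with the union bound over the $2^n$ prefix signings replaced by conditioning on the prefix data. That prefix union bound is exactly what forces your $m$ copies. With $m=1$, your first moment carries $2^N$ of entropy and only recovers the satisfiability scale $\tau\lesssim\log(1/\kappa)$. You therefore need $m$ correlated instances sharing the prefix, plus the Jensen step $p^m\le\E\bigl[\Pp(\text{success}\mid A_{\le n},\omega)^m\bigr]\le\Pp(\mathcal T\neq\emptyset)$, to amortize it.

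\textbf{What each buys.} Your route yields an algorithm-independent structural statement (a shared-prefix multi-OGP) in the GKPX template. The paper's route is shorter and needs neither Anderson's inequality nor Jensen nor a multi-instance construction. It also gives explicit constants ($C=1/(10e)$, $\Delta=4e\kappa^2$, $c=\log 2/(80\tau)$), with $c$ of the same order in $\tau$ as yours.

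\textbf{The obstacle you anticipate.} It is not really one. From $\|B\|_F\le\sqrt d\,\|B\|_{\op}$ and the density bound, $\Pp(\|W-B\|_{\op}\le r)\le\bigl(er^2d/(d+1)\bigr)^{D/2}$ uniformly in the center $B$. That is a rate $\frac12\log(1/r)-\frac14$ per $d^2$, twice the leading coefficient you ask for.
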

Theorem~\ref{thm:Online-LowerBd} asserts that online algorithms provably fail if $\tau\le C/\kappa^2$ for a sufficiently small universal $C>0$. Namely, the guarantee of Algorithm~\ref{alg:Frobenius-greedy} is essentially sharp up to constants in the double limit $d\to\infty$ followed by $\kappa\downarrow 0$. We note that for any fixed signing, or a random signing independent of the input, in the regime $N/d^2\to\tau$ the probability that $\sum_{i\le N}\sigma_i A_i$ has operator norm at most $\kappa\sqrt{N}$ is $\exp(-\Theta(N))$~\cite{maillard2025}. As such, the probability guarantee ruled out by~\Cref{thm:Online-LowerBd} is essentially sharp (up to constants in the exponent). 

\paragraph{Proof ideas} Our proof for stable algorithms uses the overlap gap property, previously employed in the context of binary perceptron~\cite{gamarnikKizildagPerkinsXu2022,gamarnikKizildagPerkinsXu2023,liSchrammZhou2024}. For stable algorithms, we rule out $m$ feasible signings on correlated instances with pairwise overlaps in a narrow interval for an appropriate $m\in\mathbb{N}$—dubbed \emph{forbidden structure}; we control small ball probabilities of form $\|A_i\|_{\rm op}\le \kappa\sqrt{N},\forall i$ via passing to the Frobenius norm and Gaussian density bounds. We then interpolate from a common Gaussian instance to many independent instances, use stability of the algorithm and apply a multi-color Ramsey-theoretic argument to extract an $m$-tuple contradicting the \emph{forbidden structure}. For online algorithms, we condition on the random seed and the first $N-q$ input matrices, so that the signed prefix is fixed. Each fixed signing of the remaining $q$ matrices gives a Gaussian vector independent of the conditioned data. A uniform shifted small-ball estimate and a union bound over the $2^q$ suffix signings yield an exponentially small success probability. Taking $q$ of order $\kappa^2N$ proves the density lower bound of order $\kappa^{-2}$.

{\color{addedblue}
\section{The Offline Algorithm: Guarantee and Stability}\label{sec:offline-proofs}
In this section, we prove~\Cref{thm:stable-upper}. To that end, we record the following result.
\begin{proposition}\label{prop:Euclidean}
There exists universal constants $\alpha_0,C>0$ such that the following holds. Let $G\in\R^{D\times N}$ has independent standard normal entries, where $N,D\to\infty$ while $N/D\to \alpha\ge \alpha_0$. Denoting by $\widehat{\sigma}$ the output of~\Cref{alg:recenter} with input $G$, we have as $d\to\infty$
\begin{equation}\label{eq:euclidean-upper}
\Pp\!\left[\norm{G\widehat\sigma}_2\le\frac{CD}{\sqrt{\log \alpha}}\right]\longrightarrow1.
\end{equation}
\end{proposition}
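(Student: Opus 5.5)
We follow the outline of Section~\ref{sec:offline-overview}. Set $t=\tfrac12\sqrt{\log\alpha_d}$, $p=\Pp[Z\ge t]\ge e^{-t^2}/(Ct)\approx \alpha_d^{-1/4}/\sqrt{\log\alpha_d}$, and work on a good event $\mathcal E$ of probability $1-o(1)$. On $\mathcal E$ we require three things. First, the uniform cap estimate: every $u\in\mathbb S^{D-1}$ has at least $pN/2$ indices with $\ip{u}{\xi_ig_i}\ge t$. This is proved by a VC/Sauer bound for halfspaces in $\R^D$, combined with $pN\gg D$ because $\alpha p\to\infty$. Second, $\|G\|_{\op}\le C(\sqrt N+\sqrt D)$ and $\max_i\|g_i\|_2^2\le 2D$. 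Third, $R_0=\|G\xi\|_2\le 2\sqrt{ND}=2D\sqrt{\alpha}$, which holds since $\xi$ is independent of $G$.

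\textbf{One-step estimate.} Fix a round $k$ and suppose $d_H(\sigma^{(k)},\xi)\le pN/8$. Put $v_i=\sigma^{(k)}_ig_i$. Every direction then has at least $3pN/8$ useful indices. Let $q=R_k/(2t)$. If $\lceil q\rceil\le 3pN/8$, Lemma~\ref{lem:zonotope} applied to $s/2$ gives the comparison point $z^\circ$, and optimality of $z^{(k)}$ yields
\[
\|Gz^{(k)}\|_2^2\le 3Dq,\qquad \sum_ip_i(z^{(k)};\sigma^{(k)})\le \tfrac32 q .
\]
For the rounding step, condition on the past. The errors $e_i=\sigma^{(k+1)}_i-z^{(k)}_i$ are independent and centered with variances $\nu_i=1-z_i^2$. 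Since $z_i=\sigma_i(1-2p_i)$, we have $\nu_i\le 4p_i$. Hence
\[
\E\Bigl\|\sum_i e_ig_i\Bigr\|_2^2=\sum_i\nu_i\|g_i\|_2^2\le 8D\sum_ip_i\le 12Dq .
\]
Lemma~\ref{lem:rounding} upgrades this to a high-probability bound. The triangle inequality then gives $R_{k+1}\le C\sqrt{DR_k/t}+CD/t$. Flips are a sum of independent Bernoulli($p_i$) variables with mean at most $\tfrac32 q$, so by Chernoff $d_H(\sigma^{(k+1)},\sigma^{(k)})\le \tfrac{3R_k}{2t}+D/t^2$.

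\textbf{Iteration.} Write $\rho_k=R_k/D$. The recurrence becomes $\rho_{k+1}\le C\sqrt{\rho_k/t}+C/t$. Starting from $\rho_0\le 2\sqrt\alpha$, this gives $\rho_k\lesssim t^{-1+2^{-k}}\rho_0^{2^{-k}}+C'/t$. With $K=\lceil\log_2\alpha\rceil+1$ we get $\rho_0^{2^{-K}}=O(1)$ and $t^{2^{-K}}=O(1)$, so $R_K\le C''D/t$. This is the claimed $CD/\sqrt{\log\alpha}$.

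\textbf{Staying in the Hamming ball (the main obstacle).} We must check, by induction over the $K=O(\log\alpha)$ rounds, that the side conditions of the one-step estimate hold throughout.
\begin{enumerate}[label=(\roman*)]
\item Cumulative flips: $\sum_{k<K}\bigl(\tfrac{3R_k}{2t}+D/t^2\bigr)\le pN/8$.
\item Each $q_k=R_k/(2t)\le 3pN/8-1$.
\end{enumerate}
The dominant term is the first round, with $R_0/t\approx D\sqrt\alpha/t$. We compare this with $pN=p\alpha D\approx \alpha^{3/4}D/\sqrt{\log\alpha}$. The ratio $\sqrt\alpha/t$ versus $\alpha^{3/4}/t$ is about $\alpha^{-1/4}$, which is small for $\alpha\ge\alpha_0$. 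Later rounds contribute geometrically less, since $R_k$ decreases rapidly, and the $KD/t^2$ term is also $\ll pN$.

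Some care is needed with the constant in $t$ and the polynomial lower bound on $p$. If the margin with $t=\tfrac12\sqrt{\log\alpha}$ proves tight, I would first retune $t$ to $c\sqrt{\log\alpha}$ with smaller $c$; this only changes constants. The probability bookkeeping is a union bound over the $O_\alpha(1)$ rounds of conditional failure probabilities, each $o(1)$ given the past. This completes~\eqref{eq:euclidean-upper}.
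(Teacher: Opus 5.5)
Your proposal is correct and follows the paper's proof essentially step for step. The shared steps are: the VC/Sauer uniform cap event around $\xi$, made robust on the Hamming ball of radius $pN/8$; the zonotope comparison point giving $\|Gz\|_2^2\le 3Dq$ and $\sum_i p_i\le \tfrac32 q$; the adaptive rounding bound together with a Bernstein/Chernoff flip count; and a simultaneous induction over $K=O(\log\alpha)$ rounds driving $y_k=tR_k/D$ through $y_{k+1}\le C(\sqrt{y_k}+1)$. The differences are cosmetic. Your closed-form iterate bound replaces the paper's halving lemma, and your ``first round dominates'' flip accounting is sharper than the paper's crude envelope $R_k\le 3D\sqrt{\alpha_d}$ with $K\delta_d\le p/8$. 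No retuning of $t$ is needed: the correct tail bound is $p\gtrsim \alpha^{-1/8}/\sqrt{\log\alpha}$, so $\alpha p^2\gg\log\alpha$ holds comfortably, and that (not merely $\alpha p\to\infty$) is what the additive VC inequality requires.
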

Here, $C>0$ is a constant independent of $\alpha$. In what follows, we establish~\Cref{prop:Euclidean} and combine it with rotational symmetry to prove~\Cref{thm:stable-upper}.
\subsection{Proof of Proposition~\ref{prop:Euclidean}}
Throughout, $G=[g_1\ \cdots\ g_N]$ is a $D\times N$ matrix whose columns are iid random vectors $g_i\sim \cN(0,I_D)$. Set $\alpha_d=N/D$ define
\begin{equation}\label{eq:cap-parameters}
 t=\frac12\sqrt{\log \alpha_d},\qquad p=\Pp(Z\ge t),\qquad b=\frac p8,
 \qquad Z\sim\cN(0,1),
\end{equation}
where we suppress the dependence of $t,p$ and $b$ on the dimension for clarity. We work along a sequence where $\alpha_d = N/D$ converges to a fixed limit $2\tau$, larger than a universal constant.

We study uniform Gaussian geometry around the initial signing $\xi\in\{-1,1\}^N$.
\paragraph{Useful columns along every direction}
\begin{lemma}[Uniform Gaussian caps]\label{lem:caps}
There exists a universal $\alpha_0$ such that for all fixed $\alpha\ge \alpha_0$, if $w_1,\dots,w_N\sim \cN(0,I_D)$ are iid random vectors with $N/D\to \alpha$, then
\begin{equation}\label{eq:cap-event}
 \inf_{u\in S^{D-1}}\#\{i:\ip u{w_i}\ge t\}\ge\frac{pN}{2}
\end{equation}
with probability $1-\exp(-c_\alpha D)$ for all sufficiently large $D$.
\end{lemma}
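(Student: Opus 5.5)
The plan is to reduce \eqref{eq:cap-event} to a uniform law of large numbers over the class of half-spaces, and then control that uniform deviation with the VC inequality plus Sauer's lemma. The first observation is that for each fixed $u\in S^{D-1}$ we have $\ip u{w_i}\sim\cN(0,1)$. Hence $\Pp(\ip u{w_i}\ge t)=p$ exactly, and $\E\,\#\{i:\ip u{w_i}\ge t\}=pN$. It therefore suffices to show that, with probability $1-\exp(-c_\alpha D)$,
\[
\sup_{u\in S^{D-1}}\Bigl|\frac1N\#\{i:\ip u{w_i}\ge t\}-p\Bigr|\le \frac p2 .
\]
This is a uniform deviation bound for the empirical measure of $w_1,\dots,w_N$ over the family $\mathcal H_t=\{\{w:\ip uw\ge t\}:u\in S^{D-1}\}$. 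Every set in $\mathcal H_t$ is an affine half-space in $\R^D$, so $\mathcal H_t$ has VC dimension at most $D+1$.

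Next I will invoke the classical Vapnik--Chervonenkis inequality. It bounds the probability that the supremum exceeds $\eps$ by $8\,S_{\mathcal H_t}(N)\,e^{-N\eps^2/32}$, where $S_{\mathcal H_t}$ is the shatter coefficient. Sauer's lemma gives $S_{\mathcal H_t}(N)\le (eN/(D+1))^{D+1}\le (e\alpha_d)^{D+1}$. Taking $\eps=p/2$, the failure probability is at most
\[
8\exp\Bigl((D+1)\log(e\alpha_d)-\frac{\alpha_d D\,p^2}{128}\Bigr).
\]
So the whole argument reduces to the scalar condition $\alpha p^2\ge C_0\log(e\alpha)$ for some large universal constant $C_0$, applied with $\alpha_d$ in place of $\alpha$.

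The main point to verify is that this condition holds for large $\alpha$, given the particular choice $t=\frac12\sqrt{\log\alpha_d}$. By the Mills-ratio lower bound $\Pp(Z\ge t)\ge \frac{t}{1+t^2}\frac{e^{-t^2/2}}{\sqrt{2\pi}}$, we get $p\gtrsim \alpha_d^{-1/8}/\sqrt{\log\alpha_d}$. Consequently $\alpha_dp^2\gtrsim \alpha_d^{3/4}/\log\alpha_d$, which dominates $C_0\log(e\alpha_d)$ once $\alpha_d$ exceeds a universal $\alpha_0$. This is exactly why $t$ is taken to be a small multiple of $\sqrt{\log\alpha}$: the cap probability must decay only polynomially in $\alpha$, with exponent strictly less than $1/2$, so that it beats the $\log\alpha$ entropy cost.

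Finally, since $\alpha_d\to\alpha\ge\alpha_0$, for all large $D$ the exponent above is at most $-c_\alpha D$ with, say, $c_\alpha=\alpha p(\alpha)^2/400$. This absorbs the small shifts between $\alpha_d$ and $\alpha$, the $+1$ in $D+1$, and the constant $8$. On the complementary event, every $u$ satisfies $\#\{i:\ip u{w_i}\ge t\}\ge (p-p/2)N=pN/2$, which is \eqref{eq:cap-event}. No net argument over the sphere is needed, because the VC bound is already uniform over all $u$.
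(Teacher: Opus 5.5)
Your proposal is correct and follows essentially the same route as the paper's proof. Both apply the VC inequality with Sauer's lemma to the half-space class $\mathcal H_t$, which has VC dimension at most $D+1$, at deviation level $p/2$, and then use the Mills-ratio bound $\alpha_d p^2\gtrsim \alpha_d^{3/4}/\log\alpha_d$ to make the exponent $\log(e\alpha_d)-\alpha_d p^2/128$ negative once $\alpha_d$ exceeds a universal $\alpha_0$.
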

\begin{proof}
For $t \ge 1$, Mills' inequality for Gaussian tail (see, e.g.,~\cite{vershynin2018}) yields for a universal $c>0$:
\begin{equation}\label{eq:mills}
 p\ge\frac c{1+t}\exp(-t^2/2)
 \ge\frac{c\,\alpha_d^{-1/8}}{\sqrt{\log \alpha_d}}.
\end{equation}
Next, for $u\in \mathbb{S}^{D-1}$, define the halfspace $H_u:=\{x\in\R^D:\ip{u}{x}\ge t\}$ and consider the class $\mathcal{H}_t:=\{H_u:u\in\mathbb{S}^{D-1}\}$ of all such halfspaces. It is known that the Vapnik-Chervonenkis (VC) dimension of $\mathcal{H}_t$ is at most $D+1$~\cite[Theorem~9.3]{shalev2014understanding}. For any $H\in\mathcal{H}_t$, define the empirical measure 
\[
P_N(H):=\frac1N \sum_{i=1}^N \ind\{w_i\in H\} = \frac1N \sum_{i=1}^N \ind\{\ip{u}{w_i}\ge t\}
\]
and let $P(H):=\mathbb{P}[\cN(0,1)\ge t]=p$. Then, VC inequality (see, e.g.,~\cite[Theorem~2.41]{wasserman2006all} as well as the original paper by Vapnik-Chervonenkis~\cite{VapnikChervonenkis1971}) along with Sauer's lemma~\cite[Lemma~6.10]{shalev2014understanding} (see also the original papers by Sauer~\cite{sauer1972density} and Shelah~\cite{shelah1972combinatorial}) implies that for every $a>0$,
\begin{equation*}
\mathbb{P}\left(\sup_{H\in\mathcal{H}_t}\bigl|P_N(H)-P(H)\bigr|>a\right)\le 8\left(\frac{2eN}{D+1}\right)^{D+1}\exp\left(-\frac{Na^2}{32}\right).
\end{equation*}
Taking $a=p/2$, we have that
\begin{align}
    \frac1D\log\left(8\left(\frac{2eN}{D+1}\right)^{D+1}\exp\left(-\frac{Na^2}{32}\right)\right)&= \frac{\log 8}{D} + \frac{D+1}{D}\log\frac{2eN}{D+1} - \frac{Na^2}{32D} \nonumber\\
    &= \log(2e\alpha_d)-\frac{p^2}{128}\alpha_d + o_D(1)\label{eq:log-of-exponent}.
\end{align}
By \eqref{eq:mills}, $\alpha_dp^2\ge c \alpha_d^{3/4}/\log \alpha_d$. Consequently, for sufficiently large $\alpha_d$, it holds that $\log(2e\alpha_d)  - \frac{p^2}{128}\alpha_d<0$. Thus, as $D\to\infty$, the exponent in~\eqref{eq:log-of-exponent} is bounded above by a $-c_\alpha<0$. Consequently, with probability at least $1-\exp(-c_\alpha D)$, we have that $P_N(H)\ge P(H)-\frac{p}{2} =\frac{p}{2}$ for every $H\in\mathcal{H}_t$, implying that
\[
\inf_{u\in\mathbb{S}^{D-1}}|i\in[N]:\ip{u}{w_i}\ge t|\ge \frac{pN}{2},
\]
thereby establishing~\eqref{eq:cap-event}.
\end{proof}
We highlight that the constant $c_\alpha$ arising in the exponent above depends on the aspect ratio $\alpha_d$ tending to a fixed constant.
\begin{lemma}[Robustness under a Hamming perturbation]\label{lem:robust-caps}
Let $\xi$ be the independent initial signing and $w_i=\xi_i g_i$. On the event \eqref{eq:cap-event}, every signing $\sigma$ with $\dist(\sigma,\xi)\le bN$ ($b=p/8$) satisfies
\begin{equation}\label{eq:robust-cap-event}
 \inf_{u\in S^{D-1}}\#\{i:\ip u{\sigma_i g_i}\ge t\}\ge\frac{3pN}{8}.
\end{equation}
\end{lemma}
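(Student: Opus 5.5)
The argument is a deterministic counting step on the event \eqref{eq:cap-event}, applied with $w_i=\xi_i g_i$. Since $\xi$ is drawn independently of $G$ and $\xi_i\in\{-1,1\}$, the vectors $w_1,\dots,w_N$ are iid $\cN(0,I_D)$. So \Cref{lem:caps} applies to them, and the event \eqref{eq:cap-event} is well defined and occurs with the stated probability. Everything after this is purely combinatorial and uniform in $u$ and $\sigma$.

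Fix a signing $\sigma$ with $\dist(\sigma,\xi)\le bN$ and a unit vector $u\in S^{D-1}$. Let $I_u:=\{i:\ip u{\xi_i g_i}\ge t\}$, so that $|I_u|\ge pN/2$ on \eqref{eq:cap-event}. Let $F:=\{i:\sigma_i\ne\xi_i\}$, so that $|F|\le bN=pN/8$. For every $i\in I_u\setminus F$ we have $\sigma_i g_i=\xi_i g_i$, and hence $\ip u{\sigma_i g_i}\ge t$. It follows that
\[
\#\{i:\ip u{\sigma_i g_i}\ge t\}\ \ge\ |I_u\setminus F|\ \ge\ |I_u|-|F|\ \ge\ \frac{pN}{2}-\frac{pN}{8}=\frac{3pN}{8}.
\]
The right-hand side does not depend on $u$, so taking the infimum over $u\in S^{D-1}$ gives \eqref{eq:robust-cap-event}. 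The bound also holds simultaneously for every $\sigma$ in the Hamming ball, because the only random input is the single event \eqref{eq:cap-event}.

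The only point that needs checking is that the distributional hypothesis of \Cref{lem:caps} holds for the signed columns $\xi_i g_i$ rather than for $g_i$ themselves. This follows from the independence of $\xi$ and $G$ together with the sign symmetry of the standard Gaussian. Conditional on $\xi$, each $\xi_i g_i\sim\cN(0,I_D)$ independently, so the probability bound of \Cref{lem:caps} holds conditionally and therefore unconditionally. There is no real obstacle beyond this observation.
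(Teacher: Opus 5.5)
Your proof is correct and follows the paper's argument: on the cap event, each flipped coordinate can remove at most one useful index in any direction, so $pN/2-pN/8=3pN/8$ survive uniformly in $u$ and $\sigma$. Your remark that the $\xi_i g_i$ are iid $\cN(0,I_D)$ conditional on $\xi$ matches the observation the paper makes right after the lemma.
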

\begin{proof}
Fix $u\in\mathbb{S}^{D-1}$. Changing one sign can remove at most one index from the set of columns with projection at least $t$. Thus, starting with $pN/2$ such columns and changing at most $pN/8$ leaves at least $3pN/8$ indices at the end. Since the event~\eqref{eq:cap-event} holds for every such $u$, the conclusion is simultaneous in $u$ and $\sigma$.
\end{proof}
Note that conditional on any realization of the initial signing $\xi\sim {\rm Unif}\{\pm 1\}^N$, the columns $\xi_i g_i$ are still independent standard Gaussian vectors. On the `cap event'~\eqref{eq:cap-event}, Lemma~\ref{lem:robust-caps} is deterministic—without assuming that the signs $\sigma$ are independent of $G$.
\paragraph{Canceling the residual with a 
small fractional budget}
For $v_1,\ldots,v_N\in\R^D$ and $q\ge0$, let
\begin{equation}\label{eq:zonotope}
 K_q(v)=\left\{\sum_i a_i v_i:0\le a_i\le1,\ \sum_i a_i\le q\right\}.
\end{equation}
In words, $K_q(v)$ is the set of all vectors one can `produce' by using $q$ units of fractional coefficients. 
\begin{lemma}[Ball containment]\label{lem:zonotope}
Suppose that for every $u\in S^{D-1}$, there exists at least $\lceil q\rceil$ indices $i\in[N]$ with $\ip{u}{v_i}\ge t$. Then
\begin{equation}\label{eq:zonotope-inclusion}
 qt B_2^D\subseteq K_q(v).
\end{equation}
\end{lemma}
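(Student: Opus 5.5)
The plan is a separating-hyperplane argument, which reduces the ball containment to a lower bound on the support function of $K_q(v)$ in every direction.

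First I would record that $K_q(v)$ is convex and compact and contains the origin. It is the image of the polytope $\{a\in[0,1]^N:\sum_i a_i\le q\}$ under the linear map $a\mapsto\sum_i a_iv_i$. Its support function is therefore
\[
h(u):=\max_{x\in K_q(v)}\ip{u}{x}=\max\Big\{\textstyle\sum_i a_i\ip{u}{v_i}:\ 0\le a_i\le 1,\ \sum_i a_i\le q\Big\}.
\]
For a closed convex set $K$, a point $y$ lies in $K$ if and only if $\ip{u}{y}\le h(u)$ for every unit vector $u$. This is the separating hyperplane theorem applied to $y$ and $K$. So it suffices to show that $h(u)\ge qt$ for every $u\in S^{D-1}$: then any $y$ with $\|y\|_2\le qt$ satisfies $\ip{u}{y}\le\|y\|_2\le qt\le h(u)$ for all $u$, and hence $y\in K_q(v)$.

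To lower bound $h(u)$, I would exhibit one feasible coefficient vector. Fix $u$, and let $i_1,\dots,i_m$ be distinct indices with $\ip{u}{v_{i_j}}\ge t$, where $m=\lceil q\rceil$; these exist by hypothesis. Write $q=\lfloor q\rfloor+\theta$ with $\theta\in[0,1)$. Put $a_{i_j}=1$ for $j\le\lfloor q\rfloor$. If $\theta>0$, then $\lceil q\rceil=\lfloor q\rfloor+1$, and I put $a_{i_{m}}=\theta$. All other coefficients are set to $0$. This choice satisfies $a_i\in[0,1]$ and $\sum_i a_i=q$, and it gives
\[
\sum_i a_i\ip{u}{v_i}\ge\big(\lfloor q\rfloor+\theta\big)t=qt.
\]
Hence $h(u)\ge qt$. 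When $q=0$ the statement is trivial, since both sides reduce to $\{0\}$.

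There is no real obstacle here. The only points needing care are the bookkeeping for the fractional part of $q$, which is where $\lceil q\rceil$ rather than $\lfloor q\rfloor$ useful indices are needed, and checking that $t\ge0$ is not actually required, because the bound on $\ip{u}{y}$ only uses $\|y\|_2\le qt$.
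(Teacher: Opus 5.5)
Your proposal is correct and follows the same route as the paper. Both arguments establish compactness and convexity of $K_q(v)$, lower-bound the support function by $qt$ using the same integer-plus-fractional choice of coefficients on $\lceil q\rceil$ useful indices, and conclude by separation. The paper proves the separation step by hand, via the nearest point $x^*\in K_q$ to a putative $y\notin K_q$, whereas you cite the separating hyperplane theorem directly.
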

In what follows, we employ Lemma~\ref{lem:zonotope} to construct a `fractional' combination of $v_i$ to obtain a small residual. To see this, for a current signing $\sigma\in\{\pm 1\}^N$ with $s=\sum_i \sigma_i g_i = G\sigma$ and $R=\|s\|_2$, we have by~\eqref{eq:Psi} that $z_i=\sigma_i(1-2a_i)$ with $Gz = s-2\sum_{i\le N}a_iv_i$. Assuming Lemma~\ref{lem:zonotope} is valid, one can therefore find $a_i\in[0,1]$ with $\sum_{i\le N}a_i \le q$ for which $\sum_{i\le N}a_iv_i = \frac{s}{2}$, yielding $Gz=0$.
\begin{proof}
For $q=0$, both the claimed point and the available coefficient vector are zero, hence we are done. Assume $q>0$. Note that the set $C_q:=\{a\in\R^N:0\le a_i\le 1,\sum_i a_i\le q\}$ is closed and bounded, thus compact. Further, it is also convex. Since $K_q(v)$ is the image of $C_q$ under the linear map $(a_1,\dots,a_N)\mapsto \sum_{i\le N}a_i v_i$, it is also compact and convex.

Next, fix a $u\in\mathbb{S}^{D-1}$ and let $q=\ell+\theta$, where $\ell=\lfloor q\rfloor$ and $0\le\theta<1$. Among $\lceil q\rceil$ indices $i$ with $\ip{u}{v_i}\ge t$, choose $\ell$ of them and assign $a_i=1$. If $\theta>0$, choose one more such index and set $a_i:=\theta$. In particular, $\sum_{i\le N}a_i = \ell+\theta=q$ and $a_i\in[0,1]$. As such, $x_u:=\sum_i a_iv_i \in K_q$, so $\ip{u}{x_u} = \sum_i a_i\ip{u}{v_i}\ge t\sum_i a_i = qt$. So, $h_{K_q}(u):=\sup_{x\in K_q}\ip{u}{x}\ge qt$ for every $u\in\mathbb{S}^{D-1}$.

We now prove $qtB_2^D\subseteq K_q$. Suppose this is not the case and take a $y\in qtB_2^D$ ($\|y\|_2\le qt$) such that $y\notin K_q$. Since $K_q$ is compact, $\exists x^*\in K_q$ such that
\[
\|y-x^*\|_2 = \min_{x\in K_q}\|y-x\|_2>0.
\]
Let $u^*:=\frac{y-x^*}{\|y-x^*\|_2}$. For any $x\in K_q$, convexity implies $x^*+\lambda(x-x^*)\in K_q$ for all $\lambda\in[0,1]$, thus
\[
\|y-(x^*+\lambda(x-x^*)\|_2\ge \|y-x^*\|_2 \Rightarrow -2\lambda \ip{y-x^*}{x-x^*}+\lambda^2\|x-x^*\|_2^2\ge 0.
\]
Dividing the last display by $\lambda$ and sending $\lambda\to 0$ yields $\ip{u^*}{x}\le \ip{u^*}{x^*}$. On the one hand, we have $h_{K_q}(u^*) = \sup_{x\in K_q}\ip{u^*}{x} = \ip{u^*}{x^*}\ge qt$ from above. So, $\ip{u^*}{y} = \ip{u^*}{x^*} + \ip{u^*}{y-x^*} = h_{K_q}(u^*)+\|y-x^*\|_2>h_{K_q}(u^*)\ge qt$. At the same time, $\ip{u^*}{y}\le \|u^*\|\cdot \|y\|\le qt$ by Cauchy-Schwarz, yielding a contradiction. Thus, $qtB_2^D\subseteq K_q(v)$.
\end{proof}
It is important that the cap condition holds in \emph{all} directions. A large projection in only the current residual direction would not be enough to conclude exact vector cancellation.

\paragraph{Gaussian norm bounds}
\begin{lemma}[Gaussian matrix and column norms]\label{lem:covariance}
Let $G=[g_1\cdots g_N]\in\R^{D\times N}$ have iid $\cN(0,1)$ entries.
Fix $\alpha\ge2$ and suppose $N/D\to\alpha$. There are universal constants
$C_0,c>0$ and a constant $C_\alpha>0$ such that, for all sufficiently
large $D$, the event
\begin{equation}\label{eq:gaussian-norm-event}
 \mathcal E_{\rm norm}:=
 \left\{\|G\|_{\rm op}^2\le C_\alpha D,\quad
        \max_{i\le N}\|g_i\|_2^2\le C_0D\right\}
\end{equation}
has probability at least $1-e^{-cD}$.
\end{lemma}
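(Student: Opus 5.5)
We bound the two quantities in $\mathcal E_{\rm norm}$ separately and then take a union bound. Each bound is a standard Gaussian concentration estimate, so the plan is short. The only point needing care is to keep the column-norm constant $C_0$ universal, with all dependence on $\alpha$ confined to $C_\alpha$.

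\emph{Operator norm.} We use the standard non-asymptotic bound for Gaussian matrices (e.g.,~\cite{vershynin2018}, via Gordon/Slepian together with Gaussian concentration of the $1$-Lipschitz map $G\mapsto\|G\|_{\rm op}$). It gives, for every $s\ge0$,
\[
\Pp\bigl(\|G\|_{\rm op}\ge\sqrt N+\sqrt D+s\bigr)\le e^{-s^2/2}.
\]
Take $s=\sqrt D$. Since $N\le 2\alpha D$ for large $D$, on the complement of this event
\[
\|G\|_{\rm op}^2\le(\sqrt{2\alpha}+2)^2D=:C_\alpha D,
\]
and the failure probability is at most $e^{-D/2}$.

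\emph{Column norms.} Each $\|g_i\|_2^2$ is $\chi^2_D$. The Laurent--Massart inequality gives
\[
\Pp\bigl(\|g_i\|_2^2\ge D+2\sqrt{Dx}+2x\bigr)\le e^{-x}.
\]
Choose $x=3D$, so that $D+2\sqrt{3}D+6D\le 11D$. We set $C_0=11$, which is universal. A union bound over the $N\le2\alpha D$ columns then gives failure probability at most $2\alpha D e^{-3D}$, and this is at most $e^{-2D}$ for all large $D$. The factor $N$ from the union bound is absorbed by taking $x$ proportional to $D$, which is why $C_0$ does not depend on $\alpha$; this is the one step that needs a little care.

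\emph{Conclusion.} Combining the two estimates, $\Pp(\mathcal E_{\rm norm}^c)\le e^{-D/2}+e^{-2D}\le e^{-cD}$ for, say, $c=1/4$ and all sufficiently large $D$. This proves the lemma with universal $C_0,c$ and with $C_\alpha$ depending only on $\alpha$.
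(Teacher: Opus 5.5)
Your proof is correct and follows essentially the same route as the paper: the same Gaussian operator-norm tail with $s=\sqrt D$, plus a per-column tail at scale $D$ and a union bound over the $N=O(D)$ columns. The only difference is that you use the Laurent--Massart $\chi^2_D$ bound (giving $C_0=11$), whereas the paper uses Gaussian concentration of the $1$-Lipschitz map $g\mapsto\|g\|_2$ together with $\E\|g_i\|_2\le\sqrt D$ (giving $C_0=4$ and total failure probability $(N+1)e^{-D/2}\le e^{-D/4}$).
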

\begin{proof}
The Gaussian operator-norm bound~\cite[Corollary~5.35]{vershynin-nonasymp}
gives, for every $u>0$,
\[
 \Pp\!\left[\|G\|_{\rm op}>\sqrt D+\sqrt N+u\right]
 \le e^{-u^2/2}.
\]
Also, Gaussian concentration and $\E\|g_i\|_2\le\sqrt D$ give
$\Pp[\|g_i\|_2>\sqrt D+u]\le e^{-u^2/2}$.
Taking $u=\sqrt D$ in both bounds and a union bound over the $N$ columns,
we obtain, with probability at least $1-(N+1)e^{-D/2}$,
\[
 \|G\|_{\rm op}^2\le(2\sqrt D+\sqrt N)^2,
 \qquad \max_{i\le N}\|g_i\|_2^2\le4D.
\]
For sufficiently large $D$, $N\le2\alpha D$, so we may take
$C_0=4$ and $C_\alpha=(2+\sqrt{2\alpha})^2$; also
$(N+1)e^{-D/2}\le e^{-D/4}$.
\end{proof}
\subsection*{Proof of the recentering guarantee}
\paragraph{The fractional comparison point}
Fix a current signing $\sigma$ and write $v_i=\sigma_i g_i$, $s=G\sigma = \sum_i v_i$, $R=\|s\|_2$ and $q=\frac{R}{2t}$.
\begin{lemma}[Zero-residual comparison and objective control]\label{lem:comparison}
Suppose that for every $u\in\mathbb{S}^{D-1}$, at least $\lceil q\rceil$ of indices $i\in[N]$ satisfies $\ip{u}{v_i}\ge t$. Then, there exists a $z^\circ\in[-1,1]^N$ such that  $Gz^\circ =0$ and $\Psi_{G,\sigma}(z^\circ) \le \frac32 q$. Consequently, for $z\in \argmin_{z'\in [-1,1]^N}\Psi_{G,\sigma}(z')$, it follows that
\begin{equation}\label{eq:optimizer-bounds}
 \norm{Gz}_2^2\le3Dq,\qquad \sum_i p_i(z;\sigma)\le\frac32q.
\end{equation}
\end{lemma}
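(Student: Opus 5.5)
The plan is to build $z^\circ$ from \Cref{lem:zonotope} and then compare objective values. Put $y=s/2$, so $\|y\|_2=R/2=qt$. The hypothesis of the present lemma is exactly the hypothesis of \Cref{lem:zonotope} for the vectors $v_i=\sigma_i g_i$. Hence $y\in qtB_2^D\subseteq K_q(v)$, and there are coefficients $a_i\in[0,1]$ with $\sum_i a_i\le q$ and $\sum_i a_i v_i=s/2$. If $q=0$, then $s=0$ and we simply take $a=0$.

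Define $z^\circ_i=\sigma_i(1-2a_i)$. Because $a_i\in[0,1]$, each coordinate satisfies $z^\circ_i\in[-1,1]$. We then check two identities:
\[
Gz^\circ=\sum_i\sigma_i g_i-2\sum_i a_i\sigma_i g_i=s-2\cdot\tfrac{s}{2}=0,
\qquad
p_i(z^\circ;\sigma)=\frac{1-\sigma_i^2(1-2a_i)}{2}=a_i,
\]
where the second uses $\sigma_i^2=1$. Substituting into \eqref{eq:Psi} gives
\[
\Psi_{G,\sigma}(z^\circ)=0+\sum_i\Bigl(a_i+\tfrac12 a_i^2\Bigr)\le\frac32\sum_i a_i\le\frac32 q,
\]
using $a_i^2\le a_i$ for $a_i\in[0,1]$.

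For the consequence, let $z$ be the minimizer. By optimality, $\Psi_{G,\sigma}(z)\le\Psi_{G,\sigma}(z^\circ)\le\frac32q$. Every term of $\Psi_{G,\sigma}$ is nonnegative on $[-1,1]^N$, since $p_i(z;\sigma)\in[0,1]$ there. Each term is therefore individually at most $\frac32 q$:
\begin{itemize}
\item $\frac1{2D}\|Gz\|_2^2\le\frac32q$, which gives $\|Gz\|_2^2\le3Dq$;
\item $\sum_i p_i(z;\sigma)\le\frac32q$, after dropping the nonnegative quadratic terms.
\end{itemize}

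There is no real obstacle here. The substantive geometric input is already in \Cref{lem:zonotope}, and the only point to verify is the identity $p_i(z^\circ;\sigma)=a_i$, which holds because $\sigma_i^2=1$.
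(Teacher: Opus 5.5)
Your proof is correct and follows the paper's argument essentially step for step: you apply \Cref{lem:zonotope} to $s/2$, set $z^\circ_i=\sigma_i(1-2a_i)$, and compare objective values. Your remark that every term of $\Psi_{G,\sigma}$ is nonnegative on $[-1,1]^N$ usefully spells out the step the paper says ``follows directly.''
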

\begin{proof}
Observe that $\|s/2\|_2 = \frac{R}{2} = qt$, thus $\frac{s}{2}\in qtB_2^D$. Thus, by Lemma~\ref{lem:zonotope}, there exists $a_i\in[0,1]$ with $\sum_{i\le N}a_i\le q$ for which $\sum_i a_iv_i = \frac{s}{2}$. Set $z_i^\circ=\sigma_i(1-2a_i)$. Then, it is easy to verify that $Gz^\circ=s-2\sum_i a_i v_i=0$ and $p_i(z^\circ;\sigma)=a_i$, so
\[
\Psi_{G,\sigma}(z^\circ)\le \sum_{i\le N}(a_i+a_i^2/2)\le \frac32\sum_i a_i\le \frac32 q,
\]
using $a_i\le 1$. Namely, the optimal cost of the quadratic program~\eqref{eq:objective} is at most $\frac32q$. Since $z\in\argmin_{z'\in[-1,1]^N}\Psi_{G,\sigma}(z')$, it follows that $\Psi_{G,\sigma}(z)\le \frac32q$ from which we have $\|Gz\|_2^2\le 3Dq$ and $\sum_i p_i(z;\sigma)\le \frac32 q$ follows directly.
\end{proof}
Importantly, the quadratic program~\eqref{eq:objective} trades residual size against flip cost; the minimizer itself need not have a zero residual. The zero-residual point was used solely as a certificate to upper bound the optimal cost of the quadratic program.
\paragraph{Controlling rounding error}
Let $\sigma'=\Round(z;U)$ be the next signing and define
\begin{equation}\label{eq:rounding-error}
 e_i=\sigma'_i-z_i,\qquad X:=\sum_i e_i g_i,
 \qquad G\sigma'=Gz+X.
\end{equation}
In what follows, we condition on $G$ as well as the complete past - including $\sigma$ and $z$ and proceed to control the rounding error. Observe from the rounding procedure (Algorithm~\ref{alg:recenter}, Step 5) that
$\mathbb{P}[\sigma_i'\ne \sigma_i|G,\sigma,z] = \frac{1-\sigma_i z_i}{2}$ and $\mathbb{E}[\sigma_i'|G,\sigma,z] = z_i$. Conditional on $(G,\sigma,z)$, the variables $e_i$ are independent and centered. Moreover, for $p_i=(1-\sigma_i z_i)/2$,
\begin{equation}\label{eq:rounding-moments}
 \nu_i=\E[e_i^2\mid G,\sigma,z]=1-z_i^2=4p_i(1-p_i)\le4p_i,
 \qquad \E[e_i^4\mid G,\sigma,z]\le4\nu_i\le16p_i,
\end{equation}
The fourth-moment inequality follows from $|e_i|\le2$. We also have
$0\le\nu_i=1-z_i^2\le1$.

\begin{lemma}[Adaptive rounding error and number of changes]\label{lem:rounding}
Fix an aspect ratio $\alpha\ge2$ and a finite constant $B_\alpha$, and suppose $N/D\to \alpha$. For all sufficiently large $D$, on the event $\mathcal E_{\rm norm}$ of Lemma~\ref{lem:covariance}, the following holds uniformly over all $\sigma\in\{\pm 1\}^N$ and every $z\in[-1,1]^N$ satisfying
\[
 \sum_i p_i\le\frac32q\le N,\qquad 0\le q\le B_\alpha D.
\]
Conditional on $G,\sigma,z$, 
\begin{align}
 \Pp\!\left(\norm X_2>C\sqrt{Dq}+\frac{CD}{t} \ \middle|\ G,\sigma,z
\right)&\le\frac{C_\alpha}{D},\label{eq:rounding-norm}\\
 \Pp\!\left(\dist(\sigma',\sigma)>3q+\frac D{t^2} \ \middle|\ G,\sigma,z
\right)
 &\le\exp\left[-c\left(q+\frac D{t^2}\right)\right].\label{eq:rounding-flips}
\end{align}
Here, $C,c>0$ are universal, while $C_\alpha$ may depend on the (fixed) constants $\alpha$ and $B_\alpha$.
\end{lemma}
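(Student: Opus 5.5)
Conditional on $(G,\sigma,z)$, the vector $X=\sum_i e_i g_i$ is a sum of independent centered random vectors with deterministic coefficient vectors $g_i$. So both bounds come from moment and concentration estimates for such sums, together with the deterministic controls supplied by $\mathcal E_{\rm norm}$.

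\textbf{The norm bound \eqref{eq:rounding-norm}.} I will use a second-moment/fourth-moment (Chebyshev) argument on $\|X\|_2^2$.

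First, the mean. By conditional independence,
\[
\E[\|X\|_2^2\mid G,\sigma,z]=\sum_i \nu_i\|g_i\|_2^2\le 4C_0D\sum_i p_i\le 6C_0Dq,
\]
using \eqref{eq:rounding-moments} and $\max_i\|g_i\|_2^2\le C_0D$ on $\mathcal E_{\rm norm}$.

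Second, the variance. Write $\|X\|_2^2-\E\|X\|_2^2=\sum_i (e_i^2-\nu_i)\|g_i\|_2^2+\sum_{i\neq j}e_ie_j\ip{g_i}{g_j}$. The two sums are orthogonal in $L^2$, so I can bound them separately.
\begin{itemize}
\item[(a)] For the diagonal part, $\E(e_i^2-\nu_i)^2\le \E e_i^4\le 16p_i$. Its variance is therefore at most $16C_0^2D^2\sum_ip_i=O(D^2q)$.
\item[(b)] For the off-diagonal part, the variance is $2\sum_{i\ne j}\nu_i\nu_j\ip{g_i}{g_j}^2\le 2\Tr\big((G\,\mathrm{diag}(\nu)G^\top)^2\big)$. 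The key step is to bound this by
\[
\|G\,\mathrm{diag}(\nu)G^\top\|_{\op}\Tr(G\,\mathrm{diag}(\nu)G^\top)\le C_\alpha D\cdot C_0D\sum_i\nu_i\le C_\alpha D^2 q,
\]
using $C_z\preceq GG^\top$ and $\|G\|_{\op}^2\le C_\alpha D$.
\end{itemize}
So $\Var(\|X\|_2^2)\le C'_\alpha D^2 q$, and Chebyshev gives
\[
\Pp\big(\|X\|_2^2>6C_0Dq+\lambda\big)\le C'_\alpha D^2q/\lambda^2 .
\]

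The main obstacle is making this failure probability $O(C_\alpha/D)$ while keeping the threshold's universal constant independent of $\alpha$. I choose $\lambda=Dq+D^2/t^2$ and split into two regimes.
\begin{itemize}
\item If $q\ge D/t^2$, then $\lambda\ge Dq$, and the ratio is at most $C'_\alpha/q\le C'_\alpha t^2/D$. Since $t^2=\tfrac14\log\alpha_d$ is bounded for fixed $\alpha$, this is $O_\alpha(1/D)$.
\item If $q<D/t^2$, then $\lambda\ge D^2/t^2$ and the ratio is at most $C'_\alpha q t^4/D^2\le C'_\alpha t^2/D$.
\end{itemize}
In both regimes the bound is $C_\alpha/D$. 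On the complementary event, $\|X\|_2^2\le (6C_0+1)Dq+D^2/t^2$, hence $\|X\|_2\le C\sqrt{Dq}+CD/t$ with $C$ universal. The constant $\alpha$ enters only through the failure probability, which is what we need. The hypothesis $q\le B_\alpha D$ is not needed here beyond keeping the constants finite.

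\textbf{The flip bound \eqref{eq:rounding-flips}.} Conditionally, $\dist(\sigma',\sigma)=\sum_i \ind\{\sigma_i'\ne\sigma_i\}$ is a sum of independent Bernoulli$(p_i)$ variables with mean $\mu=\sum_ip_i\le \tfrac32 q$. I apply the multiplicative Chernoff bound $\Pp(S\ge x)\le e^{-x/3}$, valid for $x\ge 2e\mu$ (or $\Pp(S\ge \mu+s)\le \exp(-s^2/(2(\mu+s/3)))$ in general). Take $x=3q+D/t^2$ and $s=x-\mu\ge \tfrac32q+D/t^2\ge x/2$. Since $\mu\le x/2$, Bernstein gives exponent at least $(x/2)^2/(2(x/2+x/6))\ge 3x/16$. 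This yields $\exp[-c(q+D/t^2)]$ with universal $c$.

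Both estimates are conditional on $(G,\sigma,z)$ and use only the stated constraints together with $\mathcal E_{\rm norm}$. Hence they hold uniformly over all admissible $(\sigma,z)$, even when $z$ depends adaptively on the past.
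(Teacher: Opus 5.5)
Your proposal is correct and follows the paper's proof essentially step for step. It uses the same conditional mean bound, the same split of $\Var_0(\|X\|_2^2)$ into a diagonal fourth-moment part and an off-diagonal part controlled by $\Tr(C_z^2)\le\|C_z\|_{\mathrm{op}}\Tr(C_z)$, then Chebyshev, and Bernstein for the flip count. The only deviation is your Chebyshev threshold $Dq+D^2/t^2$, in place of the paper's $D^2/t^2$ together with $\Var_0\le C_{\alpha,B_\alpha}D^3$; this is a harmless variant that removes the need for the hypothesis $q\le B_\alpha D$.
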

\begin{proof}
Throughout the rounding analysis, we condition on $G$ as well as everything generated before this round, i.e., $\sigma$ and $z$. Let $\Pp_0,\E_0$ and $\Var_0$ respectively denote the probability, expectation and variance over fresh rounding, conditional on $G,\sigma,z$. 

Let $C_z:=\E_0[XX^T]$. Conditional independence and centering give
\begin{equation}\label{eq:C-z}
 C_z=\sum_i\nu_i g_ig_i^T
     =G\operatorname{diag}(\nu_i)G^\top\preceq GG^\top,
\end{equation}
since $0\le\nu_i\le1$. On $\mathcal E_{\rm norm}$,
\begin{equation}\label{eq:trace-Cz}
 \E_0[\|X\|_2^2]=\Tr(C_z)
 \le4\sum_i p_i\|g_i\|_2^2
 \le6C_0Dq\le CDq,
\end{equation}
where $C$ is universal, and
\begin{equation}\label{eq:Czop}
 \|C_z\|_{\rm op}\le\|GG^\top\|_{\rm op}
 =\|G\|_{\rm op}^2\le C_\alpha D.
\end{equation}
These inequalities hold deterministically for every $z\in[-1,1]^N$
on the same event, so they remain valid for an adaptively chosen $z$.
To establish~\eqref{eq:rounding-norm}, we now turn to controlling $\Var_0(\|X\|_2^2)$. Observe that for
\[
\|X\|_2^2 = \sum_i e_i^2 \|g_i\|_2^2 + \sum_{i\ne j}e_ie_j c_{ij},\qquad c_{ij}:=\ip{g_i}{g_j}.
\]
Along with $\E_0[\|X\|_2^2] =\sum_i \nu_i \|g_i\|_2^2$ per~\eqref{eq:C-z}, we thus obtain
\[
\|X\|_2^2 - \E_0[\|X\|_2^2] = \underbrace{\sum_i (e_i^2-\nu_i)\|g_i\|_2^2}_{:=\Sigma_1} + \underbrace{\sum_{i\ne j}e_ie_jc_{ij}}_{:=\Sigma_2}.
\]
Consequently, $\Var_0(\|X\|_2^2) = \E_0[(\Sigma_1+\Sigma_2)^2] = \E_0[\Sigma_1^2]+2\E_0[\Sigma_1\Sigma_2]+\E_0[\Sigma_2^2]$. For $\E_0[\Sigma_1^2]$, note that conditional on $(G,\sigma,z)$, the random variables $(e_i^2-\nu_i)_{i=1}^N$ are independent and centered. As such, $\E_0[(e_i^2-\nu_i)(e_j^2-\nu_j)]=0$ for $i\ne j$, and thus
\begin{equation}\label{eq:A-square}
    \E_0[\Sigma_1^2]=\sum_i \E_0[(e_i^2-\nu_i)^2]\|g_i\|_2^4 = \sum_i \left(\E_0[e_i^4]-\nu_i^2\right)\|g_i\|_2^4.
\end{equation}
As for $\E_0[\Sigma_2^2]$, notice that if $(i,j)$ and $(i_0,j_0)$ are two pairs with $i\ne j$, $i_0\ne j_0$ and $|\{i,j\}\cap \{i_0,j_0\}|\le 1$, then we have $\E_0[e_ie_je_{i_0}e_{j_0}]=0$. Thus, 
\begin{equation}\label{eq:B-square}
\E_0[\Sigma_2^2] = 4\sum_{i<j}\E_0[e_i^2e_j^2]c_{ij}^2 = 4\sum_{i<j}\nu_i\nu_j\ip{g_i}{g_j}^2,
\end{equation}
using the conditional independence of $e_i^2$ and $e_j^2$ given $(G,\sigma,z)$. Lastly, $\E_0[\Sigma_1\Sigma_2]=0$ since $\E_0[(e_i^2-\nu_i)\|g_i\|_2^2 e_ke_\ell c_{k\ell}]=0$ (using the fact $\E_0[e_k]=\E_0[e_\ell]=0$ and conditional independence).

Now, note from~\eqref{eq:C-z} that
\begin{equation}\label{eq:trace-of-Cz}
    \Tr(C_z^2) = \sum_{i,j}\Tr(\nu_i\nu_j g_ig_i^T g_jg_j^T) = \sum_{i,j}\nu_i\nu_j\ip{g_i}{g_j}^2 = \sum_i \nu_i^2 \|g_i\|_2^4 + 2\sum_{i<j}\nu_i\nu_j\ip{g_i}{g_j}^2.
\end{equation}
The fact $\Var_0(\|X\|_2^2) = \E_0[\Sigma_1^2]+\E_0[\Sigma_2^2]$, equations~\eqref{eq:A-square} and~\eqref{eq:B-square}, along with~\eqref{eq:trace-of-Cz} imply
\begin{align}
    \Var_0(\|X\|_2^2) &=2\Tr(C_z^2)+\sum_i \left(\E_0[e_i^4]-3\nu_i^2\right)\|g_i\|_2^4.\label{eq:var-of-Xnorm}
\end{align}
Since $C_z$ is positive semidefinite,~\eqref{eq:trace-Cz} and~\eqref{eq:Czop} imply
\[
 \Tr(C_z^2)\le\|C_z\|_{\rm op}\Tr(C_z)\le C_\alpha D^2q.
\]
The remaining fourth-moment contribution is bounded by
\[
 \sum_i\E_0[e_i^4]\|g_i\|_2^4
 \le16(C_0D)^2\sum_i p_i
 \le24C_0^2D^2q,
\]
using~\eqref{eq:rounding-moments}. Discarding the nonpositive term
$-3\sum_i\nu_i^2\|g_i\|_2^4$ in~\eqref{eq:var-of-Xnorm} therefore gives
\begin{equation}\label{eq:rounding-variance}
 \Var_0(\|X\|_2^2)\le C_\alpha D^2q
 \le C_{\alpha,B_\alpha}D^3.
\end{equation}
Combining this with~\eqref{eq:trace-Cz}, Chebyshev's inequality yields
\[
 \Pp_0\!\left[\|X\|_2^2>CDq+D^2/t^2\right]
 \le\frac{C_{\alpha,B_\alpha}t^4}{D}.
\]
For fixed $\alpha$, $t=\frac12\sqrt{\log(N/D)}$ stays bounded.
Absorbing $t^4$ and the dependence on $B_\alpha$ into the failure
constant $C_\alpha$, then taking square roots, proves~\eqref{eq:rounding-norm}.
The constant multiplying $\sqrt{Dq}+D/t$ is universal.

As for~\eqref{eq:rounding-flips}, note that $B_i:=\ind\{\sigma_i'\ne \sigma_i\}$ are  conditionally independent Bernoulli random variables with mean $p_i$ and $\sum_i \E_0[B_i]:=v\le 3q/2$. Thus, by Bernstein's inequality~\cite[Theorem~2.8.4]{vershynin2018}:
\[
\Pp_0\left[d_H(\sigma',\sigma)>3q+\frac{D}{t^2}\right] \le  \Pp_0\left[\sum_i B_i >v+a\right]\le \exp\left(-\frac{a^2}{2(v+a/3)}\right)
\]
for $a=3q/2+D/t^2$
using $v\le 3q/2$. Routine algebra then yields~\eqref{eq:rounding-flips}.
\end{proof}
\paragraph{One-step contraction and the global induction}
\begin{proposition}[One recenter-and-round step]\label{prop:one-step}
Suppose that the cap event (Lemma~\ref{lem:caps}) and the norm event $\mathcal E_{\rm norm}$ (Lemma~\ref{lem:covariance}) hold, and the current signing $\sigma$ satisfies
\begin{equation}\label{eq:step-assumptions}
 \dist(\sigma,\xi)\le bN,\qquad
 \left\lceil\frac R{2t}\right\rceil\le\frac{3pN}{8}.
\end{equation}
Then, conditionally on the past,
\begin{align}
 \norm{G\sigma'}_2&\le C\sqrt{\frac{DR}{t}}+\frac{CD}{t},\label{eq:one-step-radius}\\
 \dist(\sigma',\sigma)&\le\frac{3R}{2t}+\frac D{t^2}.\label{eq:one-step-hamming}
\end{align}
with probability at least $1-C_\alpha/D-e^{-c_\alpha D}$, provided $R\le 2B_\alpha Dt$.
\end{proposition}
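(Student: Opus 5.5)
We chain the earlier lemmas with $q=R/(2t)$, where $R=\|G\sigma\|_2$ and $v_i=\sigma_i g_i$.

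\textbf{Step 1 (cap condition for $\sigma$).} On the cap event~\eqref{eq:cap-event}, the assumption $\dist(\sigma,\xi)\le bN$ lets \Cref{lem:robust-caps} give $\inf_u\#\{i:\ip u{v_i}\ge t\}\ge 3pN/8$. The second assumption in~\eqref{eq:step-assumptions} says $\lceil q\rceil\le 3pN/8$. Hence every direction has at least $\lceil q\rceil$ useful indices, which is exactly the hypothesis of \Cref{lem:comparison}.

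\textbf{Step 2 (the minimizer).} \Cref{lem:comparison} then gives, for the unique minimizer $z$,
\[
\|Gz\|_2^2\le 3Dq=\frac{3DR}{2t},\qquad \sum_i p_i(z;\sigma)\le \frac32 q.
\]

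\textbf{Step 3 (rounding).} We apply \Cref{lem:rounding} with $B_\alpha$ fixed. Its hypotheses hold:
\begin{itemize}
\item $q=R/(2t)\le B_\alpha D$ follows from $R\le 2B_\alpha Dt$;
\item $\frac32 q\le N$ follows from $\lceil q\rceil\le 3pN/8\le N$.
\end{itemize}
The lemma is stated uniformly over all $(\sigma,z)$ on $\mathcal E_{\rm norm}$. So conditioning on the past, including the adaptively chosen $z$, is legitimate: the conditional bounds hold pointwise on the event, and no independence between $z$ and $G$ is needed.

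By~\eqref{eq:rounding-norm}, with conditional probability at least $1-C_\alpha/D$,
\[
\|X\|_2\le C\sqrt{Dq}+\frac{CD}{t}.
\]
By~\eqref{eq:rounding-flips},
\[
\dist(\sigma',\sigma)\le 3q+\frac{D}{t^2}=\frac{3R}{2t}+\frac{D}{t^2},
\]
with failure probability at most $\exp(-cD/t^2)$. For fixed $\alpha$, $t$ is bounded, so this is at most $e^{-c_\alpha D}$. This proves~\eqref{eq:one-step-hamming}.

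\textbf{Step 4 (combine).} Since $G\sigma'=Gz+X$, the triangle inequality gives
\[
\|G\sigma'\|_2\le \sqrt{\frac{3DR}{2t}}+C\sqrt{\frac{DR}{2t}}+\frac{CD}{t}.
\]
This is~\eqref{eq:one-step-radius} after enlarging the universal constant $C$.

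A union bound over the two conditional failure events gives probability at least $1-C_\alpha/D-e^{-c_\alpha D}$.

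The main point needing care is measurability and uniformity. The cap and norm events depend only on $(G,\xi)$ and are fixed before any rounding. The conditions of Lemmas~\ref{lem:robust-caps},~\ref{lem:comparison} and~\ref{lem:rounding} are deterministic on these events, uniformly over $\sigma$ and $z$. Therefore the conditional statement is valid even though $\sigma$ and $z$ depend on $G$.
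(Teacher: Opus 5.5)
Your proposal is correct and follows the paper's proof essentially step for step: robust caps give the hypothesis of the comparison lemma with $q=R/(2t)$, then the rounding lemma, the triangle inequality on $G\sigma'=Gz+X$, and a union bound over the two rounding failures. You also verify two points the paper leaves implicit, namely $\frac32 q\le N$ and $\exp\bigl(-c(q+D/t^2)\bigr)\le e^{-c_\alpha D}$ for bounded $t$.
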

\begin{proof}
Note that since $d_H(\sigma,\xi)\le bN$ with $b=p/8$, by combining Lemmas~\ref{lem:caps} and~\ref{lem:robust-caps} we obtain for $v_i:=\sigma_i g_i$ that for every $u\in\mathbb{S}^{D-1}$, there exists at least $3pN/8$ indices $i\in[N]$ for which $\ip{u}{\sigma_ig_i}\ge t$. Setting $q=\frac{R}{2t}$, we have $\lceil q\rceil \le \frac{3p}{8}N$ and $q\le B_\alpha D = O(D)$. Applying Lemma~\ref{lem:comparison}, we find that $z:=\argmin_{z'\in[-1,1]^N}\Psi_{G,\sigma}(z')$ satisfies $\|Gz\|_2 \le \sqrt{3Dq}$, $\sum_i p_i\le \frac{3q}{2}$. Now, $G\sigma' = Gz+X$ yields that
\[
\|G\sigma'\|_2 \le \|Gz\|_2 + \|X\|_2 \le \sqrt{3Dq} + C\sqrt{Dq}+\frac{CD}{t}\le \bar{C}\sqrt{Dq} + \frac{\bar{C}D}{t}
\]
for some $\bar{C}>C>0$, using Lemma~\ref{lem:rounding}. Plugging $q=\frac{R}{2t}$ and adjusting the constants yield~\eqref{eq:one-step-radius}. Likewise,~\eqref{eq:one-step-hamming} is a direct consequence of~\eqref{eq:rounding-flips}. By a union bound over the two failures in Lemma~\ref{lem:rounding}, the conditional probability of these conclusions is at least $1-C_\alpha/D-e^{-c_\alpha D}$.
\end{proof}

\begin{lemma}[Scalar contraction]\label{lem:scalar}
For every universal $C>0$, there exists a universal $A\in[1,\infty)$ satisfying the following. If $(y_k)_{k=0}^\infty$ is a non-negative sequence satisfying
\begin{equation}\label{eq:scalar}
 y_{k+1}\le C(\sqrt{y_k}+1),
\end{equation}
then $y_k\le A$ for all $k\ge \lceil \log_2(\max\{y_0,1\})\rceil$. Further, $y_k\le \max\{y_0,A\}$ for all $k$.
\end{lemma}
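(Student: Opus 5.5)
We will prove the lemma by comparing the recursion with the map $f(y):=C(\sqrt y+1)$, which is nondecreasing and concave on $[0,\infty)$. The plan has two parts: first choose the absorbing level $A$, then show that large values roughly halve in logarithm at each step, so that $y_k$ enters $[0,A]$ within $\lceil\log_2\max\{y_0,1\}\rceil$ steps.

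\emph{Choice of $A$ and invariance.} We choose $A\ge 1$ large enough, depending only on $C$, that $f(y)\le y$ for every $y\ge A$. Since $C\sqrt y+C\le y$ holds once $\sqrt y\ge 2C+1$, for instance, the choice $A:=\max\{(2C+1)^2,\,2C\}$ works, and it also gives $f(A)\le A$. Monotonicity then yields invariance. If $y_k\le A$, then $y_{k+1}\le f(y_k)\le f(A)\le A$. For the second claim, let $M:=\max\{y_0,A\}$. If $y_k\le M$, then $y_{k+1}\le f(M)\le M$, because $M\ge A$. Induction therefore gives $y_k\le \max\{y_0,A\}$ for all $k$.

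\emph{Logarithmic decay.} We track $y_k$ while it stays above $A$. For $y\ge A\ge 1$ we have $f(y)\le 2C\sqrt y$, so $y_{k+1}\le 2C\sqrt{y_k}$. Setting $L:=(2C)^2$, this reads $y_{k+1}/L\le \sqrt{y_k/L}$. Consequently, as long as the iterates stay above $A$,
\[ y_k\le L\,(y_0/L)^{2^{-k}}\le L\,y_0^{2^{-k}}, \]
where the last inequality uses $L\ge1$ (we may enlarge $C$ to $C\ge 1/2$ without loss of generality). Now take $k\ge \lceil\log_2\max\{y_0,1\}\rceil$. Then $2^k\ge\log_2 y_0$, so $y_0^{2^{-k}}\le 2$. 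Suppose, for contradiction, that $y_j>A$ for all $j<k$. The display then gives $y_k\le \max\{2L,\,f(A)\}$. If instead some earlier iterate satisfies $y_j\le A$, then invariance already gives $y_k\le A$.

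\emph{Main care point.} The only delicate step is making the enlarged constant consistent, since $A$ must also dominate $2L$. We therefore fix $A:=\max\{(2C+1)^2,\,8C^2,\,1\}$ from the start; with $L=(2C)^2$ this gives $2L=8C^2\le A$. Then in every case $y_k\le A$ for all $k\ge\lceil\log_2\max\{y_0,1\}\rceil$, and $A$ depends only on $C$.
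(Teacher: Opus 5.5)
Your proof is correct, but it takes a different route from the paper's.

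\textbf{The paper's argument.} The paper takes $A=\max\{1,4C,16C^2\}$ and uses the square root only crudely. If $y_k\ge A$, then $C\sqrt{y_k}\le y_k/4$ and $C\le y_k/4$, so $y_{k+1}\le y_k/2$. If $y_k\le A$, then $y_{k+1}\le C(\sqrt A+1)\le A/2$. Induction gives the single envelope $y_k\le\max\{A,2^{-k}y_0\}$. Both claims follow from it at once, and the step count $\lceil\log_2\max\{y_0,1\}\rceil$ is exactly what geometric halving requires.

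\textbf{Your argument.} You exploit the square root fully. Setting $w_k=y_k/(2C)^2$, you get $w_{k+1}\le\sqrt{w_k}$ while the iterates stay above $A$. So $\log w_k$ halves at each step, and $y_k\le L\,y_0^{2^{-k}}$. You then obtain the second claim separately, from monotonicity of $f$ and $f(M)\le M$ for every $M\ge A$.

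\textbf{Comparison.} Your argument proves more than the statement asks: the iterates enter $[0,A]$ after roughly $\log_2\log_2 y_0$ steps, not $\log_2 y_0$. As far as the contraction step is concerned, $O(\log\log\alpha)$ rounds of Algorithm~\ref{alg:recenter} would then suffice. The cost is slightly more bookkeeping: the reduction to $C\ge 1/2$ and the extra requirement $A\ge 8C^2$. The paper's version is shorter and yields one envelope valid for all $k$; yours gives the sharper hitting time.

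\textbf{Minor wording points.}
\begin{itemize}
\item The case ``$y_j>A$ for all $j<k$'' is one branch of a case split, not a proof by contradiction.
\item The term $f(A)$ in $\max\{2L,f(A)\}$ is superfluous, since the display already gives $y_k\le 2L$ in that case.
\end{itemize}
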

\begin{proof}
Suppose $A=\max\{1,4C,16C^2\}$. Then, if $y_k\ge A\ge 16C^2$, we have $y_{k+1}\le C(\sqrt{y_k}+1) =Cy_k/\sqrt{y_k}+C\le y_k/4 + C$. Using $A\ge 4C$, we also have $C\le y_k/4$, yielding that if $y_k\ge A$, then $y_{k+1}\le y_k/2$. 

Moreover, if $y_k\le A$, then $y_{k+1}\le C(\sqrt{A}+1)\le \frac{A}{2}<A$, so $y_{k+1}\le A$. Consequently, induction gives $y_k\le \max\{A,2^{-k}y_0\}\le \max\{A,y_0\}$ for every $k\ge0$. Lastly, for $k\ge \lceil \log_2 (y_0\vee 1)\rceil$, then we have $2^{-k}y_0\le A$, so $y_k\le A$, establishing Lemma~\ref{lem:scalar}.
\end{proof}
Equipped with these results, we are ready to complete the proof of~\Cref{prop:Euclidean}.
\begin{proof}[Proof of Proposition~\ref{prop:Euclidean}]
Let $\mathcal{E}_{\rm geo}:=\mathcal{E}_{\rm cap}\cap\mathcal E_{\rm norm}$ be the intersection of the events in Lemmas~\ref{lem:caps} and~\ref{lem:covariance}. Conditional on the initial signing $\xi$, we have $G\xi\sim\cN(0,NI_D)$, so with high probability,
\begin{equation}\label{eq:initial-radius}
 R_0:=\|G\xi\|_2\le2\sqrt{ND}=2D\sqrt{\alpha_d}.
\end{equation}
Consequently, the event
\[
 \mathcal{E}:=\mathcal{E}_{\rm geo}\cap\{R_0\le2D\sqrt{\alpha_d}\}
\]
has probability $1-o_D(1)$. Write $R_k:=\|G\sigma^{(k)}\|_2$, $H_k:=\dist(\sigma^{(k)},\xi)$, and $K:=K_d=\lceil\log_2\alpha_d\rceil+1$. Since $\alpha_d\to\alpha$ is fixed, there is a fixed upper bound $K_*$ on $K_d$ for all sufficiently large $D$.

We first choose deterministic bounds that will hold throughout the iteration. Set
\[
 \delta_d:=\frac{9}{2t\sqrt{\alpha_d}}+\frac{1}{\alpha_d t^2},
\]
and let $C$ be the universal constant in Proposition~\ref{prop:one-step}. By increasing the universal threshold $\alpha_0$ in the theorem, we may assume that, for all sufficiently large $D$,
\begin{align*}
 C\left(\frac{\sqrt3\,\alpha_d^{1/4}}{\sqrt t}+\frac1t\right)
 &\le3\sqrt{\alpha_d},\\
 \frac{3}{2pt\sqrt{\alpha_d}}&\le\frac3{16},
 \qquad K\delta_d\le\frac p8.
\end{align*}
Indeed, the first two inequalities follow from $t=\frac12\sqrt{\log\alpha_d}$ and $pt\ge c\alpha_d^{-1/8}$, by~\eqref{eq:mills}. For the third, viewed as a function of the aspect ratio,
\[
 \frac{K\delta_d}{p}
 =O\!\left(\frac{\log\alpha_d}{\alpha_d^{3/8}}
       +\frac{\sqrt{\log\alpha_d}}{\alpha_d^{7/8}}\right),
\]
which tends to zero as the aspect ratio grows. We take $\alpha_0$ strictly larger than a threshold for these inequalities, so that they also hold when $\alpha_d$ approaches its limit from below. Since $p$ converges to a positive constant at every fixed $\alpha$, we also have $1/(pN)\le3/16$ for all sufficiently large $D$. Thus, whenever $R\le3D\sqrt{\alpha_d}$,
\[
 \frac{\lceil R/(2t)\rceil}{pN}
 \le\frac{R}{2tpN}+\frac1{pN}
 \le\frac{3}{2pt\sqrt{\alpha_d}}+\frac1{pN}
 \le\frac38.
\]
Finally, fix a finite constant $B_\alpha$ such that $3\sqrt{\alpha_d}/(2t)\le B_\alpha$ for all sufficiently large $D$.

Conditional on $G$ and $\xi$ satisfying $\mathcal E$, we now induct simultaneously on the radius and the Hamming distance. Stop at the first round in which either bound in Proposition~\ref{prop:one-step} fails. Until such a failure, we claim that for every $0\le k\le K$,
\begin{equation}\label{eq:induction-envelope}
 \begin{split}
 R_k&\le3D\sqrt{\alpha_d},\qquad
 H_k\le kN\delta_d\le bN,\\
 \left\lceil\frac{R_k}{2t}\right\rceil&\le\frac{3pN}{8},
 \qquad b=\frac p8.
 \end{split}
\end{equation}
At $k=0$, the radius bound follows from~\eqref{eq:initial-radius}, $H_0=0$, and the ceiling bound follows from the preceding deterministic estimate. Suppose~\eqref{eq:induction-envelope} holds at some $k<K$. These bounds, together with $R_k/(2t)\le B_\alpha D$, verify all hypotheses of Proposition~\ref{prop:one-step}. On its good rounding event,
\[
 R_{k+1}
 \le C\sqrt{\frac{DR_k}{t}}+\frac{CD}{t}
 \le CD\left(\frac{\sqrt3\,\alpha_d^{1/4}}{\sqrt t}+\frac1t\right)
 \le3D\sqrt{\alpha_d}.
\]
Also, writing $\Delta_k:=\dist(\sigma^{(k+1)},\sigma^{(k)})$, we have
\[
 \frac{\Delta_k}{N}
 \le\frac{3R_k}{2tN}+\frac{D}{Nt^2}
 \le\frac{9}{2t\sqrt{\alpha_d}}+\frac1{\alpha_d t^2}
 =\delta_d.
\]
Hence $H_{k+1}\le H_k+\Delta_k\le(k+1)N\delta_d\le bN$. The ceiling bound at time $k+1$ follows from the new radius bound. This proves all parts of~\eqref{eq:induction-envelope} simultaneously.

At any round before the first failure, the conditional failure probability is at most $C_\alpha/D+e^{-c_\alpha D}$. Summing over the possible first failure times therefore gives
\[
 \Pp(\text{a failure occurs})
 \le\Pp(\mathcal E^c)
   +K_*\left(\frac{C_\alpha}{D}+e^{-c_\alpha D}\right)
 =o_D(1).
\]
In particular, with high probability the radius recurrence holds at every round.

On this event, set $y_k:=tR_k/D$. Then
\[
 y_{k+1}\le C(\sqrt{y_k}+1),\qquad
 y_0\le2t\sqrt{\alpha_d}<\alpha_d,
\]
where the last inequality holds after a further universal increase of $\alpha_0$. Lemma~\ref{lem:scalar} and the choice of $K_d$ imply that $y_{K_d}\le A$ for a universal constant $A$. Since $\alpha_d\to\alpha$, it follows that, with high probability,
\[
 \|G\widehat\sigma\|_2=R_{K_d}
 \le\frac{AD}{t}
 =\frac{2AD}{\sqrt{\log\alpha_d}}
 \le\frac{A'D}{\sqrt{\log\alpha}},
\]
for a universal constant $A'>0$. This proves~\eqref{eq:euclidean-upper}.
\end{proof}
\subsection{Proof of Theorem~\ref{thm:stable-upper}}
We now leverage Proposition~\ref{prop:Euclidean} to establish Theorem~\ref{thm:stable-upper}. The following lemma is instrumental to extend from Euclidean radius to a spectral norm.
\begin{lemma}[Output residual has uniform direction]\label{lem:uniform-direction}
Let $\widehat{\sigma} = \A(G,\omega)$ be the output of Algorithm~\ref{alg:recenter}. Set $X=G\widehat{\sigma}$. Then, $X$ has a rotationally invariant distribution in $\R^D$ and that $\Pp[X=0]=0$. Thus, $X=RU$ where $R=\|X\|_2$, $U$ is uniform on $\mathbb{S}^{D-1}$, and $R$ and $U$ are independent.
\end{lemma}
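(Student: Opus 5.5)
The plan is to prove rotational invariance through an equivariance identity, and then derive the polar decomposition from it. Fix an orthogonal $Q\in\R^{D\times D}$. First, $\Psi_{QG,\sigma}(z)=\Psi_{G,\sigma}(z)$ for every $\sigma$ and $z$, because $G$ enters~\eqref{eq:Psi} only through $\|Gz\|_2^2$ and $\|QGz\|_2=\|Gz\|_2$. Since the minimizer in~\eqref{eq:QP-original} is unique, $z(QG,\sigma)=z(G,\sigma)$. We then run Algorithm~\ref{alg:recenter} on $G$ and on $QG$ with the same seed $\omega=(\xi,U^{(0)},\dots,U^{(K_d-1)})$; note that $K_d$ depends only on $N,D$. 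By induction on $k$, $\sigma^{(k)}(QG,\omega)=\sigma^{(k)}(G,\omega)$: both runs start from $\xi$, the minimizers agree, and the rounding uses the same uniforms. Hence $\A(QG,\omega)=\A(G,\omega)$. For $\alpha_d<2$ the output is an independent random signing and the identity is trivial.

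Next, set $F(G,\omega):=G\,\A(G,\omega)$. Then $F(QG,\omega)=QG\,\A(G,\omega)=QF(G,\omega)$. The Gaussian matrix satisfies $QG\distr G$, and $\omega$ is independent of $G$, so $(QG,\omega)\distr(G,\omega)$. Therefore $QX=F(QG,\omega)\distr F(G,\omega)=X$ for every orthogonal $Q$, and the law of $X$ is rotationally invariant.

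For $\Pp[X=0]=0$, we condition on $\omega$. The output lies in the finite set $\{-1,1\}^N$, so
\[
\Pp[X=0]\le\sum_{\sigma\in\{\pm1\}^N}\Pp[G\sigma=0],
\]
and this bound holds regardless of how $\sigma$ depends on $G$. For each fixed $\sigma\ne0$, $G\sigma\sim\cN(0,NI_D)$, so $\Pp[G\sigma=0]=0$. This is the one point that needs care, because the output depends on $G$; the union bound over the finitely many signings handles that dependence.

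Finally, a rotationally invariant random vector with no atom at $0$ decomposes as $RU$ with $U=X/\|X\|_2$ uniform on $\mathbb{S}^{D-1}$ and independent of $R$. One way to see this is to take $Q$ Haar-distributed and independent of $X$. Then $QX\distr X$, so $(\|X\|_2,X/\|X\|_2)\distr(\|X\|_2,QX/\|X\|_2)$. Conditional on $X$, the vector $QX/\|X\|_2$ is uniform on the sphere, so it is independent of $\|X\|_2$ and uniform. Alternatively, one can invoke uniqueness of the rotation-invariant probability on $\mathbb{S}^{D-1}$ applied to the conditional laws given $R$.
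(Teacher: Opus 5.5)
Your proposal is correct and follows the paper's proof essentially step for step. The paper also uses invariance of $\Psi_{G,\sigma}$ under $G\mapsto QG$ plus uniqueness of the minimizer to get $\A(QG,\omega)=\A(G,\omega)$ with a common seed, hence $QX\distr X$. It proves $\Pp[X=0]=0$ by the same union bound over the $2^N$ signings. Its polar decomposition is the same Haar-averaging argument, which it isolates as Lemma~\ref{lemma:uniform-direction}. Your extra remarks, that $K_d$ depends only on $(N,D)$ and that the $\alpha_d<2$ case is trivial, just make explicit what the paper leaves implicit.
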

\begin{proof}
Note that conditional on $G$ and $\xi$, $\Psi_{G,\sigma}(z)$ has Hessian $\frac1D G^TG + I_N/4$, thus it is strongly convex, implying that the problem $\min_{z\in[-1,1]^N}\Psi_{G,\sigma}(z)$ has a unique minimizer. Now, for any $Q\in O(D):=\{Q\in\R^{D\times D}:Q^TQ=QQ^T =I_D\}$ and $z\in\R^N$, we have $\|QGz\|_2^2 = \|Gz\|_2^2$, implying that $z^*(QG,\sigma):=\argmin_{z\in[-1,1]^N}\Psi_{QG,\sigma}(z) = \argmin_{z\in[-1,1]^N}\Psi_{G,\sigma}(z):=z^*(G,\sigma)$. As such, with the same randomness $\xi$, running $\A$ on $QG$ and $G$ yields the same trajectory, thus $\A(QG,\omega)=\A(G,\omega)$ for every $Q\in O(D)$.

Now, define the map $X:=X(G,\omega) = G\A(G,\omega)$. Then,
\begin{equation}\label{eq:X_QG}
    X(QG,\omega) = (QG)\A(QG,\omega) = QG\A(G,\omega) = QX(G,\omega).
\end{equation}
At the same time, $QG\distr G$ for any $Q\in O(D)$ since $G$ has iid $\cN(0,I_D)$ columns. Due to independence of $G$ and $\omega$, it follows that $(QG,\omega)\distr (G,\omega)$, as such $X(QG,\omega)\distr X(G,\omega)$. Since $X(QG,\omega) = QX(G,\omega)$ as shown in~\eqref{eq:X_QG}, we obtain that $X(G,\omega) \distr QX(G,\omega)$ and since $Q\in O(D)$ is arbitrary, it follows that $X$ is rotationally invariant.

As for $\Pp[X=0]=0$, note that for any fixed $\sigma\in\{-1,1\}^N$, $\Pp[G\sigma = 0] =0$, thus a union bound implies $\Pp[X=0] \le \Pp[\exists \sigma\in\{-1,1\}^N : G\sigma=0]=0$. The conclusion now follows from Lemma~\ref{lemma:uniform-direction}.
\end{proof}
Having established~\Cref{lem:uniform-direction}, we are ready to complete the proof of~\Cref{thm:stable-upper}.
\begin{proof}[Proof of Theorem~\ref{thm:stable-upper}]
Let $h\sim \cN(0,I_D)$ and $W=\sqrt{\frac2d}\mat(h)$. Then, $W\sim \GOE(d)$ and $U\distr h/\|h\|_2$. Classical GOE edge theorem gives $\norm W_{\op}\convp2$ \cite{anderson2010introduction}. Moreover, $\|W\|_F^2/d = 2\|h\|_2^2/d^2 \convp 1$ since $\|h\|_2^2/D\convp 1$ (by the WLLN) and $2D/d^2\to 1$. Lastly, since $\mat(U) \distr \frac{W}{\|W\|_F}$, it follows that
\[
\sqrt{d}\cdot \|\mat(U)\|_{\rm op}\distr \frac{\sqrt d\,\norm W_{\op}}{\norm W_{\mathrm F}}\convp2.
\]
We are now ready to complete the proof. Note that $X=G\widehat{\sigma}=RU$ and $\frac{1}{\sqrt{N}}\sum_{i=1}^N \widehat{\sigma}_i A_i = \sqrt{\frac2{dN}}\mat(X) = \sqrt{\frac2{dN}}R\mat(U)$. As such,
\[
\frac{1}{\sqrt{N}}\left\|\sum_{i=1}^N \widehat{\sigma}_i A_i\right\|_{\rm op} = \left(\frac{R}{D}\right)\left(\sqrt{d}\,\bigl\|\mat(U)\bigr\|_{\rm op}\right)\left(\frac{D\sqrt{2}}{d\sqrt{N}}\right).
\]
Note that the second factor converges to 2 in probability, while the third factor converges (deterministically) to $\frac{1}{\sqrt{2\tau}}$. As for the first term,~\Cref{prop:Euclidean} yields that whp as $d\to\infty$, 
\[
\frac{R}{D}\le \frac{C}{\sqrt{\log (2\tau)}},
\]
which for large fixed $\tau$ is at most $C'/\sqrt{\log(\tau+2)}$ by adjusting the universal constant, establishing~\eqref{eq:spectral-upper}.

Lastly, let $L=\log(1/\kappa)$. If $\tau\ge C_+/(\kappa^2L)$, then for all sufficiently small $\kappa$ one has $\log(2+\tau)\ge L$. Hence the upper bound in \eqref{eq:spectral-upper} is at most $(C/\sqrt{C_+})\kappa$; choosing $C_+$ sufficiently large makes this at most $\kappa$.
\end{proof}

\subsection{Noise stability of recenter-and-round}\label{sec:stability}
The recentering analysis used Gaussian geometry, but the continuity of the quadratic program is deterministic. The only discontinuity is the final coordinatewise rounding, which has a canonical coupling.

\begin{lemma}[Sensitivity of the box-constrained minimizer]\label{lem:sensitivity}
For two inputs $G,\widetilde G\in\R^{D\times N}$ and two current signings $\sigma,\widetilde\sigma\in\{-1,1\}^N$, let $z=z(G,\sigma)\in[-1,1]^N$ and $\widetilde z=z(\widetilde G,\widetilde\sigma)\in[-1,1]^N$. Then
\begin{align}
 \norm{z-\widetilde z}_2
 &\le3\norm{\sigma-\widetilde\sigma}_2
 +\frac4D\norm{(G^\top G-\widetilde G^\top\widetilde G)\widetilde z}_2\label{eq:sensitivity1}\\
 &\le3\norm{\sigma-\widetilde\sigma}_2
 +\frac4D(\norm G_{\op}+\norm{\widetilde G}_{\op})
    \norm{G-\widetilde G}_{\op}\sqrt N.\label{eq:sensitivity2}
\end{align}
\end{lemma}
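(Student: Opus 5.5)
We use the first-order variational inequalities for the two box-constrained minimizers and combine them with strong convexity. By~\eqref{eq:objective}, $\nabla_z\Psi_{G,\sigma}(z)=M z-\tfrac34\sigma$ with $M:=G^\top G/D+I_N/4$. Similarly $\nabla_z\Psi_{\widetilde G,\widetilde\sigma}(z)=\widetilde M z-\tfrac34\widetilde\sigma$ with $\widetilde M:=\widetilde G^\top\widetilde G/D+I_N/4$. Since $z$ minimizes a convex function over the convex set $[-1,1]^N$, we have $\ip{Mz-\tfrac34\sigma}{w-z}\ge0$ for all $w\in[-1,1]^N$. The same holds for $\widetilde z$ with $(\widetilde M,\widetilde\sigma)$.

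Take $w=\widetilde z$ in the first inequality and $w=z$ in the second, and add them. This gives
\[
\ip{Mz-\widetilde M\widetilde z}{z-\widetilde z}\le\tfrac34\ip{\sigma-\widetilde\sigma}{z-\widetilde z}.
\]
Write $Mz-\widetilde M\widetilde z=M(z-\widetilde z)+(M-\widetilde M)\widetilde z$. Since $M\succeq I_N/4$, the first piece contributes at least $\tfrac14\|z-\widetilde z\|_2^2$. Moving the second piece to the right-hand side and applying Cauchy--Schwarz yields
\[
\tfrac14\|z-\widetilde z\|_2^2\le\Bigl(\tfrac34\|\sigma-\widetilde\sigma\|_2+\tfrac1D\|(G^\top G-\widetilde G^\top\widetilde G)\widetilde z\|_2\Bigr)\|z-\widetilde z\|_2 .
\]
If $z=\widetilde z$ there is nothing to prove. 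Otherwise, divide by $\|z-\widetilde z\|_2$ and multiply by $4$ to get the constants $3$ and $4/D$ in~\eqref{eq:sensitivity1}.

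For~\eqref{eq:sensitivity2}, write
\[
G^\top G-\widetilde G^\top\widetilde G=G^\top(G-\widetilde G)+(G-\widetilde G)^\top\widetilde G .
\]
Hence its operator norm is at most $(\|G\|_{\op}+\|\widetilde G\|_{\op})\|G-\widetilde G\|_{\op}$. Finally, $\|\widetilde z\|_2\le\sqrt N$ because $\widetilde z\in[-1,1]^N$.

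This is a routine argument, so no step is a real obstacle. The only points needing care are the sign bookkeeping when the two variational inequalities are added, and the degenerate case $z=\widetilde z$ before dividing.
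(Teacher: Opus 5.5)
Your proposal is correct and follows the paper's own proof essentially line by line: you add the two variational inequalities, use the strong convexity bound $G^\top G/D+I_N/4\succeq I_N/4$, apply Cauchy--Schwarz and handle the degenerate case $z=\widetilde z$, and then pass to \eqref{eq:sensitivity2} via $G^\top G-\widetilde G^\top\widetilde G=G^\top(G-\widetilde G)+(G-\widetilde G)^\top\widetilde G$ together with $\|\widetilde z\|_2\le\sqrt N$.
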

\begin{proof}
Note that the gradient of $\Psi_{G,\sigma}$ obeys $\nabla \Psi_{G,\sigma}(z) = H_G z - \frac34\sigma$, where $H_G=G^\top G/D+I_N/4$. From the optimality of $z$, it follows that $\ip{\nabla \Psi_{G,\sigma}(z)}{w-z}\ge 0$ for every feasible $w\in[-1,1]^N$. As such, 
\[
 \ip{H_Gz-3\sigma/4}{\widetilde z-z}\ge0,\qquad
 \ip{H_{\widetilde G}\widetilde z-3\widetilde\sigma/4}{z-\widetilde z}\ge0.
\]
Setting $\Delta:=z-\widetilde z$,  adding the displays above and applying triangle inequality yields
\[
 \ip\Delta{H_G\Delta}
 \le\frac1D\left|\ip{(G^\top G-\widetilde G^\top\widetilde G)\widetilde z}{\Delta}\right|
 +\frac34\left|\ip{\sigma-\widetilde\sigma}{\Delta}\right|.
\]
Since $H_G\succeq I_N/4$, it follows that $\ip{\Delta}{H_G\Delta}\ge \frac14\|\Delta\|_2^2$. The claim is now immediate if $\Delta=0$, otherwise using Cauchy-Schwarz we obtain~\eqref{eq:sensitivity1}. Finally,
\begin{align*}
    \norm{G^TG - \widetilde{G}^T\widetilde{G}}_{\rm op} &= \norm{G^T(G-\widetilde{G}) + (G-\widetilde{G})^T\widetilde{G}}_{\rm op} \\
    &\le(\norm G_{\op}+\|\widetilde G\|_{\op})\norm{G-\widetilde G}_{\op},
\end{align*}
and the box constraint implies $\norm{\widetilde z}_2\le\sqrt N$. This proves \eqref{eq:sensitivity2}.
\end{proof}

\begin{lemma}[Common-uniform rounding coupling]\label{lem:coupling}
For $z,\widetilde z\in[-1,1]^N$, use the same independent uniforms to define $\sigma'=\Round(z;U)$ and $\widetilde\sigma'=\Round(\widetilde z;U)$. Conditional on these fractional points, the mismatch indicators are independent and
\begin{equation}\label{eq:coupling-mean}
 \E\dist(\sigma',\widetilde\sigma')=\frac12\norm{z-\widetilde z}_1
 \le\frac{\sqrt N}{2}\norm{z-\widetilde z}_2.
\end{equation}
For every $\gamma>0$,
\begin{equation}\label{eq:coupling-tail}
 \Pp\!\left(\frac1N\dist(\sigma',\widetilde\sigma')>
 \frac{\norm{z-\widetilde z}_2}{2\sqrt N}+\gamma\ \middle|\ z,\widetilde z\right)
 \le e^{-2\gamma^2N}.
\end{equation}
\end{lemma}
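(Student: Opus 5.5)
The proof is a direct computation with the common-uniform coupling, followed by Hoeffding's inequality. Condition on $z$ and $\widetilde z$. By~\eqref{eq:rounding}, coordinate $i$ of $\sigma'$ equals $+1$ exactly when $U_i\le(1+z_i)/2$, and likewise for $\widetilde\sigma'$ with threshold $(1+\widetilde z_i)/2$. So the mismatch indicator $M_i:=\ind\{\sigma'_i\ne\widetilde\sigma'_i\}$ equals the indicator that $U_i$ falls strictly between the two thresholds (up to a null boundary event). Each $M_i$ is therefore a measurable function of $U_i$ alone, and since the $U_i$ are independent and independent of $(z,\widetilde z)$, the $M_i$ are conditionally independent. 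This gives the independence claim.

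For the mean, the length of the interval between the thresholds is $|z_i-\widetilde z_i|/2$, so $\E[M_i\mid z,\widetilde z]=|z_i-\widetilde z_i|/2$. Summing over $i$ gives $\E\,\dist(\sigma',\widetilde\sigma')=\frac12\norm{z-\widetilde z}_1$. The inequality $\norm{\cdot}_1\le\sqrt N\norm{\cdot}_2$ (Cauchy--Schwarz) then yields~\eqref{eq:coupling-mean}.

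For the tail, $\dist(\sigma',\widetilde\sigma')=\sum_iM_i$ is a sum of $N$ conditionally independent $[0,1]$-valued variables. Hoeffding's inequality gives
\[
\Pp\Bigl(\tfrac1N\textstyle\sum_i M_i>\tfrac1N\E\textstyle\sum_iM_i+\gamma\ \Big|\ z,\widetilde z\Bigr)\le e^{-2\gamma^2N}.
\]
By~\eqref{eq:coupling-mean}, $\frac1N\E\sum_iM_i\le\norm{z-\widetilde z}_2/(2\sqrt N)$, so the event in~\eqref{eq:coupling-tail} is contained in the Hoeffding event, and~\eqref{eq:coupling-tail} follows.

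No step is really an obstacle. The only point to be careful about is making explicit that conditioning on $(z,\widetilde z)$ leaves the uniforms $U$ i.i.d.\ uniform. This holds because in Algorithm~\ref{alg:recenter} the fresh uniforms of round $k$ are drawn independently of everything that determines $z^{(k)}$ and $\widetilde z^{(k)}$. It is what makes the lemma usable adaptively in the stability argument.
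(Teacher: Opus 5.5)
Your proposal is correct and follows the paper's proof essentially step for step: the mismatch event is $U_i$ landing in an interval of length $|z_i-\widetilde z_i|/2$, then Cauchy--Schwarz gives the mean bound and Hoeffding for $[0,1]$-valued summands gives the tail. Your explicit remark that the fresh uniforms stay i.i.d.\ conditional on $(z,\widetilde z)$ is a useful clarification that the paper leaves implicit when it applies the lemma adaptively in the stability proof.
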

\begin{proof}
Coordinate $i$ disagrees exactly when $U_i$ falls between the two thresholds $(1+z_i)/2$ and $(1+\widetilde z_i)/2$; this interval has length $|z_i-\widetilde z_i|/2$. Thus, $\Pp\bigl[\sigma_i'\ne \widetilde{\sigma}_i'\mid z,\widetilde{z}\bigr] = \frac12|z_i-\widetilde{z}_i|$. As such, $\E[d_H(\sigma',\widetilde{\sigma}')\mid z,\widetilde{z}] =\frac12\|z-\widetilde{z}\|_1\le \frac{\sqrt{N}}{2}\|z-\widetilde{z}\|_2$ by Cauchy-Schwarz, proving~\eqref{eq:coupling-mean}.

As for the~\eqref{eq:coupling-tail}, we have by the Hoeffding's inequality that for any $\gamma>0$,
\[
\Pp\Bigl[d_H(\sigma',\widetilde{\sigma}')-\E[d_H(\sigma',\widetilde{\sigma}')\mid z,\widetilde{z}] >N\gamma \big\lvert z,\widetilde{z}\Bigr]\le e^{-2\gamma^2 N}.
\]
From here,~\eqref{eq:coupling-tail} follows from the bound~\eqref{eq:coupling-mean}.
\end{proof}

\begin{proof}[Proof of Theorem~\ref{thm:algorithm-stability}]
Fix the limiting ratio $\alpha$ and $\eps>0$. Let $K_*$ be a fixed upper bound on $K_d$ for all sufficiently large dimensions. Couple the two runs with the same internal seed—i.e., same initial signing $\xi$ and uniform variables at every round.  Define
\[
x_k=\frac1N\dist(\sigma^{(k)},\widetilde\sigma^{(k)}),\qquad x_0=0.
\]
Recalling the estimate $\Pp[\|A\|_{\op}>\sqrt{D}+\sqrt{N}+u]\le \exp(-u^2/2)$ valid for all $u>0$ and all $A\in\R^{D\times N}$ with iid $\cN(0,1)$ entries (\cite{vershynin-nonasymp}), we have by taking $u=\sqrt{D}$ that
\begin{equation}\label{eq:noise-op-event}
 \norm G_{\op}+\norm{G^\rho}_{\op}\le C_\alpha\sqrt D,\qquad
 \norm{G-G^\rho}_{\op}\le C_\alpha\sqrt{1-\rho}\sqrt D,
\end{equation}
where we appealed to the fact $G-G^\rho$ has independent $\cN(0,2(1-\rho))$ entries and $N/D\to \alpha$, where $\alpha$ is fixed in $D$.
On \eqref{eq:noise-op-event}, Lemma~\ref{lem:sensitivity} and
$\norm{\sigma^{(k)}-\widetilde\sigma^{(k)}}_2=2\sqrt{Nx_k}$ give
\[
 \frac{\norm{z^{(k)}-\widetilde z^{(k)}}_2}{\sqrt N}
 \le6\sqrt{x_k}+C_\alpha\sqrt{1-\rho}.
\]
Conditional on the two inputs and the common past, the next uniforms are fresh. Applying Lemma~\ref{lem:coupling} at each round, we obtain that except with probability at most $K_*e^{-2\gamma^2N}$,
\begin{equation}\label{eq:stability-recursion}
 x_{k+1}\le3\sqrt{x_k}+a,\qquad a=C_\alpha\sqrt{1-\rho}+\gamma.
\end{equation}
Define $F_0(a)=0$ and $F_{j+1}(a)=3\sqrt{F_j(a)}+a$. It is easy to see that $x_k\le F_k(a)$ for all $k$ and that for each $j\le K_*$, $F_j(\cdot)$ is continuous at zero with $F_j(0)=0$. Choose $a_0>0$ so that $\max_{j\le K_*}F_j(a_0)<\eps$, then choose $0<\gamma<a_0/2$ and $\rho_0<1$ with $C_\alpha\sqrt{1-\rho_0}<a_0/2$. From this, for every $\rho\ge \rho_0$,~\eqref{eq:stability-recursion} gives $x_{K_d}\le F_K(a)\le F_K(a_0)<\eps$ outside an event whose probability tends to zero uniformly in $\rho$, establishing Theorem~\ref{thm:algorithm-stability}.
\end{proof}

\section{\newedit{The Online Algorithm: Performance Guarantee}}\label{sec:online-proofs}
In this section, we prove Theorem~\ref{thm:greedy-performance}. We begin by introducing useful notation.
\subsection{Performance of the Frobenius-greedy algorithm}
\paragraph{Preliminaries} Recall that $D = \frac{d(d+1)}{2}$ and as $N/d^2\to \tau$, we have $N/D\to 2\tau$. Moreover, recall the operator $\svec:\Sym_d\to \R^D$ from~\eqref{eq:svec}: $ \svec(M) = \bigl((M_{ii})_{\newedit{i}=1}^d,(\sqrt{2}M_{ij})_{1\le i<j\le d}\bigr)\in\R^D$.
Observe that $\svec$ is an isometry for the Frobenius norm, i.e., $\|\svec(M)\|_2 = \|M\|_F$ and $\ip{M}{N}_{\rm F} = {\rm trace}(M^TN) = \ip{\svec(M)}{\svec(N)}$, and for $A\sim \GOE(d)$, we have $g:=\sqrt{d/2}\svec(A)\sim \cN(0,I_D)$. 

Next, define
\begin{equation}\label{eq:g_i-x_i}
    g_i=\sqrt{\frac{d}{2}}\svec(A_i),\quad x_i = \sqrt{\frac{d}{2}}\svec(S_i),\quad\text{and}\quad R_i = \|x_i\|_2.
\end{equation}
\newedit{Throughout this proof, take ${\rm sign}(0)=-1$, so that the update agrees with the specified tie rule. Extend the iid input sequence beyond $N$ when considering a fixed time interval $[0,T]$; this does not change the first $N$ iterates.}
The iterates of Algorithm~\ref{alg:Frobenius-greedy} \newedit{imply} $\sigma_i = -{\rm sign}\ip{x_{i-1}}{g_i}$ for $i\ge 1$. Thus, we have the rule
\begin{equation}\label{eq:recurse}
    x_i = x_{i-1}-{\rm sign}\ip{x_{i-1}}{g_i}g_i,\quad \newedit{i\ge 1},\,x_0:=0.
\end{equation}
\paragraph{Recursion, martingale and drift decomposition} Using~\eqref{eq:recurse}, we obtain $\|x_i\|_2^2 = \|x_{i-1}\|_2^2 + \|g_i\|_2^2 - 2|\ip{x_{i-1}}{g_i}|$. With this, we arrive at
\begin{equation}\label{eq:Z_i}
    R_i^2 = R_{i-1}^2 - 2R_{i-1}|Z_i| + \|g_i\|_2^2,\quad\text{where}\quad \newedit{Z_i = \langle u_{i-1},g_i\rangle,\quad u_{i-1}=x_{i-1}/R_{i-1}\text{ when }R_{i-1}>0.}
\end{equation}
\newedit{When $R_{i-1}=0$, take $u_{i-1}$ to be a fixed deterministic unit vector. The radial identity still holds because its final cross term vanishes. Moreover, $R_{i-1}>0$ almost surely for every $i\ge2$: a zero residual would require one of the finitely many signed Gaussian sums to vanish, each a probability-zero event. Conditional} on the $\sigma$-field $\mathcal{F}_{i-1}:=\sigma(g_1,\dots,g_{i-1})$, $Z_i\sim \cN(0,1)$. Starting with $Y_0:=0$, define
\begin{equation}\label{eq:Y-k,eta-k,b}
    Y_k = \frac{R_k^2}{D^2},\quad \eta_k = \bigl(\|g_k\|_2^2 - D\bigr) -2R_{k-1}\bigl(|Z_k|-\sqrt{2/\pi}\bigr).
\end{equation}
A useful observation is that $\mathbb{E}[\eta_k \mid \mathcal{F}_{k-1}] = 0$. With this, we have
\[
Y_k-Y_{k-1} = \frac{1}{D}\bigl(1-\newedit{\sqrt{8/\pi}}\sqrt{Y_{k-1}}\bigr) + \frac{\eta_k}{D^2}. 
\]
Using $Y_0=0$, a telescoping argument then yields
\begin{equation}\label{eq:drift-plus-martingale}
    Y_m = \underbrace{\frac1D \sum_{k=1}^m b(Y_{k-1})}_{\text{(drift term)}} + \underbrace{\frac{1}{D^2}\sum_{k=1}^m \eta_k}_{:=M_m\text{ (martingale term)}}
\end{equation}
where $b:[0,\infty)\to \R$ is defined by $b(x) = 1-\sqrt{8/\pi}\sqrt{x}$. Our arguments will reveal that the drift term converges to the solution of a certain differential equation—dubbed \emph{fluid limit}—while the martingale term vanishes as $D\to\infty$.
\begin{lemma}[Martingale Convergence]\label{lemma:martingale}
    Let $M_m = \frac{1}{D^2}\sum_{k=1}^m \eta_k$. Then, $(M_m)_{m\ge 1}$ is a martingale sequence adapted to the filtration $\mathcal{F}_m:=\sigma(g_1,\dots,g_m)$ and for every fixed $T<\infty$, as $D\to\infty$,
    \[
    \mathbb{E}\left[\max_{m\le TD}|M_m|^2\right]\to 0.
    \]
\end{lemma}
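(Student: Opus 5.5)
The proof has two parts. First I check the martingale property. Then I apply Doob's $L^2$ maximal inequality and bound the second moments of the increments.

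For the martingale property, note that $\eta_k$ is $\mathcal{F}_k$-measurable and integrable. Indeed, $R_{k-1}\le \sum_{j<k}\|g_j\|_2$ has all moments and $|Z_k|\le\|g_k\|_2$. The centering comes from two facts. Since $\E\|g_k\|_2^2=D$ and $g_k$ is independent of $\mathcal{F}_{k-1}$, we have $\E[\|g_k\|_2^2-D\mid\mathcal{F}_{k-1}]=0$. Since $u_{k-1}$ is an $\mathcal{F}_{k-1}$-measurable unit vector and $g_k\sim\cN(0,I_D)$ is independent of $\mathcal{F}_{k-1}$, we have $Z_k\mid\mathcal{F}_{k-1}\sim\cN(0,1)$, so $\E[|Z_k|\mid\mathcal{F}_{k-1}]=\sqrt{2/\pi}$. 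Together these give $\E[\eta_k\mid\mathcal{F}_{k-1}]=0$, and hence $(M_m)$ is a martingale.

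By Doob's $L^2$ inequality and orthogonality of martingale increments,
\[
\E\Big[\max_{m\le TD}|M_m|^2\Big]\le 4\E[M_{\lfloor TD\rfloor}^2]=\frac{4}{D^4}\sum_{k\le TD}\E[\eta_k^2].
\]
Using $(a+b)^2\le 2a^2+2b^2$ and the conditional law of $Z_k$,
\[
\E[\eta_k^2]\le 2\Var(\|g_k\|_2^2)+8\,\E[R_{k-1}^2]\Var(|Z_k|)=4D+8(1-2/\pi)\E[R_{k-1}^2].
\]
This is the only step needing care: an a priori second-moment bound on $R_{k-1}^2$ over the time window.

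That bound comes directly from the radial identity~\eqref{eq:Z_i}. Since $-2R_{i-1}|Z_i|\le 0$, we have $R_i^2\le R_{i-1}^2+\|g_i\|_2^2$, and therefore $\E[R_{k-1}^2]\le (k-1)D\le TD^2$ for $k\le TD$. Substituting,
\[
\frac{4}{D^4}\sum_{k\le TD}\E[\eta_k^2]\le \frac{4}{D^4}\cdot TD\,\big(4D+8TD^2\big)=O_T\!\left(\frac1D\right)\to0.
\]
This proves the lemma.

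One might worry about the non-smoothness of $b$ at zero or about the initial step where $R_0=0$. Neither enters here: when $R_{k-1}=0$ the second term of $\eta_k$ simply vanishes, and the bounds above still hold.
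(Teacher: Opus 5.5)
Your proposal is correct and follows essentially the same route as the paper: conditional centering of $\eta_k$, the bound $\E[\eta_k^2]\le 4D+8(1-2/\pi)\E[R_{k-1}^2]$ via $(a+b)^2\le 2a^2+2b^2$ and the tower property, the a priori estimate $\E[R_{k-1}^2]\le (k-1)D$ from the radial identity, and then orthogonality of increments with Doob's $L^2$ maximal inequality to get an $O_T(1/D)$ bound. Your explicit checks of integrability and of the $R_{k-1}=0$ convention are small additions the paper leaves mostly implicit.
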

\begin{proof}[Proof of Lemma~\ref{lemma:martingale}]
    Clearly $M_m = \frac{\eta_m}{D^2}+M_{m-1}$, so $\mathbb{E}[M_m\mid \mathcal{F}_{m-1}] = M_{m-1} + \frac{1}{D^2}\mathbb{E}[\eta_m \mid \mathcal{F}_{m-1}] = M_{m-1}$ using tower property and the fact $M_{m-1}$ is $\mathcal{F}_{m-1}$-measurable. We now argue square integrability. Note that~\eqref{eq:Z_i} implies that $R_k^2 - R_{k-1}^2 =-2R_{k-1}|Z_k|+\|g_k\|_2^2$, thus $R_k^2-R_{k-1}^2\le \|g_k\|^2$. By a telescoping argument, we obtain
    \begin{equation}\label{eq:R_k-2nd_mom}
            \mathbb{E}[R_k^2] \le \sum_{j\le k}\mathbb{E}\|g_j\|_2^2 \le kD,
    \end{equation}
    using the fact $\|g_j\|^2$ is a $\chi^2$ random variable with parameter $D$. Turning to $\eta_k$ and using the inequality $(a+b)^2\le 2(a^2+b^2)$ with $a=\|g_k\|_2^2 - D$ and $b = 2R_{k-1}(|Z_k|-\sqrt{2/\pi})$ we have 
    \begin{align}
        \mathbb{E}\eta_k^2&\le 2\mathbb{E}\bigl[\bigl(\|g_k\|_2^2-D\bigr)^2\bigr] + 8\mathbb{E}\bigl[R_{k-1}^2 (|Z_k|-\sqrt{2/\pi})^2\bigr] \nonumber\\
        &=2{\rm Var}(\|g_k\|_2^2) + \newedit{8}\mathbb{E}\bigl[R_{k-1}^2\mathbb{E}\bigl[(|Z_k|-\sqrt{2/\pi})^2\mid \mathcal{F}_{k-1}\bigr]\bigr] \label{eq:R-is-measurable} \\
        &=4D + 8\left(1-\frac2\pi\right)\mathbb{E}[R_{k-1}^2]\label{eq:use-variances} \\
        &\le D\left[4+8(1-\frac2\pi)(k-1)\right]\label{eq:use-R_k-bound}
    \end{align}
    where~\eqref{eq:R-is-measurable} uses the tower property along with the fact $R_{k-1}$ is $\mathcal{F}_{k-1}$-measurable,~\eqref{eq:use-variances} uses the fact variance of $\|g_k\|_2^2$ is $2D$ and conditional on $\mathcal{F}_{k-1}$, 
    \[
    \mathbb{E}\bigl[\bigl(|Z_k|-\sqrt{2/\pi}\bigr)^2\mid \mathcal{F}_{k-1}\bigr] = {\rm Var}(|\mathcal{N}(0,1)|) = 1-\frac2\pi,
    \]
    and~\eqref{eq:use-R_k-bound} uses~\eqref{eq:R_k-2nd_mom}. In particular, all `increments' are square-integrable. Let now $n_D:=\lfloor TD\rfloor$. \newedit{Observe that for $k<\ell$, $\mathbb E[\eta_k\eta_\ell]=0$.} With this, we have
 \begin{align}
     \mathbb{E}|M_{n_D}|^2 &=\frac{1}{D^4}\sum_{k=1}^{n_D}\mathbb{E}[\eta_k^2] + \frac{2}{D^4}\sum_{1\le k<\ell\le n_D}\mathbb{E}[\eta_k\eta_\ell] \nonumber\\
     &\le \frac{1}{D^3}\sum_{k=1}^{n_D}\left(4+8\left(1-\frac2\pi\right)(k-1)\right) \label{eq:use-orthogonality} \\
     &\le \frac{4T}{D^2}+\frac{4(1-2/\pi)T^2}{D}\label{eq:n_D-sum}
 \end{align}
 where~\eqref{eq:use-orthogonality} uses the orthogonality of the increments—i.e., $\mathbb{E}[\eta_k\eta_\ell] = \mathbb{E}[\mathbb{E}[\eta_k\eta_\ell \mid \mathcal{F}_{\ell-1}]] = \mathbb{E}[\eta_k\mathbb{E}[\eta_\ell\mid \mathcal{F}_{\ell-1}]] = 0$ for every $k<\ell$ using the fact $\eta_k$ is measurable with respect to $\mathcal{F}_t$ for every $t\ge k$—and the bound~\eqref{eq:use-R_k-bound}; while~\eqref{eq:n_D-sum} follows from the facts $n_D =\lfloor TD\rfloor \le TD$ and $\sum_{k=1}^{n_D} (k-1) =\frac{n_D(n_D-1)}{2} \le \frac{T^2D^2}{2}$. Applying now Doob's ($L^p$-)maximal inequality
 \[
 \mathbb{E}\left[\max_{0\le m\le n}|M_m|^p\right]\le \left(\frac{p}{p-1}\right)^p \mathbb{E}|M_n|^p
 \]
 valid for all martingale sequences $(M_m)_{m\ge 1}$, $p>1$ and $\mathbb{E}|M_n|^p<\infty$, we deduce 
 \[
 \mathbb{E}\left[\max_{m\le n_D}|M_m|^2\right] \le 4\mathbb{E}|M_{n_D}|^2 = O(\frac1D)\to 0,
 \]
 as $D\to\infty$, for all fixed $T<\infty$. This completes the proof of Lemma~\ref{lemma:martingale}.
\end{proof}
\paragraph{Fluid limit} For the drift term, we will establish a convergence result to the solution of an ODE. Set $\bY_D(t):=Y_{\lfloor Dt\rfloor}$ for $t\ge 0$. In particular, $\bY_D(t) = Y_k$ iff $\frac{k}{D}\le t<\frac{k+1}{D}$ for $k\ge 0$. The key observation is that
\[
\int_{\frac{k-1}{D}}^{\frac{k}{D}} b\bigl(\bY_D(s)\bigr)\;ds = \frac1D b(Y_{k-1}) \Rightarrow \frac1D \sum_{k=1}^m b(Y_{k-1}) = \int_0^{\frac{m}{D}} b\bigl(\bY_D(s)\bigr)\;ds,
\]
where we remind that $b(y) = 1-\sqrt{8/\pi}\sqrt{y}$. Combining this with~\eqref{eq:drift-plus-martingale}, we obtain that for $m=\lfloor Dt \rfloor$ and $0\le t\le T$:
\begin{equation}\label{eq:Y_m-ode}
     \bY_D(t)= Y_m = \int_0^{\frac{m}{D}}b\bigl(\bY_D(s)\bigr)\;ds + M_m = \int_0^t b\bigl(\bY_D(s)\bigr)\;ds + \underbrace{M_m - \int_{\lfloor Dt\rfloor/D}^t b\bigl(\bY_D(s)\bigr)\;ds}_{:=E_D(t)}.
\end{equation}
We next control the error term $E_D(t)$.
\begin{lemma}\label{lemma:error-bound}
    As $D\to\infty$, $\|E_D\|_T:=\sup_{0\le t\le T}|E_D(t)| \xrightarrow{\Pp} 0$.
\end{lemma}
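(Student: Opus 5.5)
We bound the two pieces of $E_D(t)$ separately, with
\[
\|E_D\|_T\le \max_{m\le \lfloor DT\rfloor}|M_m| + \sup_{0\le t\le T}\left|\int_{\lfloor Dt\rfloor/D}^t b\bigl(\bY_D(s)\bigr)\,ds\right|.
\]
For the martingale part, Lemma~\ref{lemma:martingale} gives $\mathbb{E}[\max_{m\le TD}|M_m|^2]\to0$. Since $\lfloor Dt\rfloor\le DT$ for $t\le T$, Markov's inequality gives $\max_{m\le \lfloor DT\rfloor}|M_m|\convp 0$.

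For the boundary integral, the interval $[\lfloor Dt\rfloor/D,t]$ has length at most $1/D$, and $\bY_D$ is constant there, equal to $Y_{\lfloor Dt\rfloor}$. Hence the integral is at most $\frac1D|b(Y_{\lfloor Dt\rfloor})|\le \frac1D\bigl(1+\sqrt{8/\pi}\sqrt{Y_{\lfloor Dt\rfloor}}\bigr)$, and it suffices to show $\frac1D\max_{m\le DT}\sqrt{Y_m}\convp0$.

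\emph{Main obstacle: a uniform bound on $Y_m$ over $m\le DT$.} The estimate \eqref{eq:R_k-2nd_mom} only controls each $R_k^2$ in expectation, not the maximum. We use the pathwise inequality $R_k^2-R_{k-1}^2\le\|g_k\|_2^2$, which follows from \eqref{eq:Z_i}. It gives the monotone bound $\max_{m\le n}R_m^2\le\sum_{k\le n}\|g_k\|_2^2$, and the right-hand side has mean $nD$. Taking $n=\lfloor DT\rfloor$,
\[
\mathbb{E}\Bigl[\max_{m\le DT}Y_m\Bigr]\le \frac{TD^2}{D^2}=T,
\]
so $\max_{m\le DT}Y_m=O_{\Pp}(1)$ by Markov's inequality. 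Therefore $\frac1D\bigl(1+\sqrt{8/\pi}\max_m\sqrt{Y_m}\bigr)=O_{\Pp}(1/D)\to0$.

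Both pieces of $E_D$ vanish in probability uniformly on $[0,T]$, so $\|E_D\|_T\convp0$. The only nontrivial point is the pathwise domination, which turns the pointwise bound \eqref{eq:R_k-2nd_mom} into control of the maximum without a separate maximal inequality.
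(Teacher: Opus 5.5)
Your proof is correct and follows the paper's argument essentially step for step. You use the same split of $E_D$ into the martingale maximum (handled by Lemma~\ref{lemma:martingale} and Markov's inequality) and the boundary integral over an interval of length at most $1/D$, and you control the latter through the same pathwise domination $\max_{m\le n}R_m^2\le\sum_{k\le n}\|g_k\|_2^2$. The only difference is cosmetic: you get tightness of $\max_{m\le DT}Y_m$ from Markov's inequality on its mean (at most $T$), whereas the paper uses Chebyshev to show that the dominating sum concentrates near $T$. Either suffices, since only an $O_{\Pp}(1)$ bound is needed.
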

\begin{proof}[Proof of Lemma~\ref{lemma:error-bound}]
    Lemma~\ref{lemma:martingale} already implies that since \newedit{$\max_{m\le TD}|M_m|\xrightarrow{L^2}0$}, we have by Markov's inequality that $\max_{m\le TD}|M_m|\xrightarrow{\Pp}0 $. As for the integral part, note the bound $|b(y)| = |1-\sqrt{8/\pi}\sqrt{y}|\le 1+\sqrt{8/\pi}\sqrt{y}$ and the fact \newedit{$t-\frac1D<\lfloor Dt\rfloor/D\le t$}. We then have
    \begin{equation}\label{eq:max-Y_m}
            \left|\int_{\newedit{\lfloor Dt\rfloor/D}}^t b\bigl(\bY_D(s)\bigr)\;ds\right|\le \frac1D\left(1+\sqrt{8/\pi}\sqrt{\sup_{s\le T}\bY_D(s)}\right)\le \frac1D\left(1+\sqrt{8/\pi}\sqrt{\max_{m\le TD}Y_m}\right).
    \end{equation}
    Observe that $Y_m = \frac{R_m^2}{D^2}$ and $R_m^2\le \sum_{j\le n_D}\|g_j\|_2^2$ \newedit{for all $m\le n_D:=\lfloor TD\rfloor$} (which follows by the bound $R_j^2-R_{j-1}^2\le \|g_j\|_2^2$). Thus, 
    \[
    \max_{m\le TD}Y_m \le \frac{1}{D^2}\sum_{j\le n_D}\|g_j\|_2^2.
    \]
    Using Chebyshev's inequality, we have
    \[
    \mathbb{P}\left[\left|\frac{1}{D^2}\sum_{j\le n_D}\|g_j\|_2^2 -\frac{\lfloor TD\rfloor}{D}\right|>\epsilon\right]\le \frac{n_D}{D^4\epsilon^2}{\rm Var}(\|g_j\|_2^2) = O\left(\frac{1}{D^2}\right).
    \]
    Since $\lfloor TD\rfloor /D \to T$ as $D\to\infty$, this yields that for any $M>T$, we have
    \[
    \mathbb{P}\left[\max_{m\le TD}Y_m<M\right] = 1-o_D(1).
    \]
    Thus,~\eqref{eq:max-Y_m} implies that 
    \[
   \mathbb{P}\left[\sup_{t\in[0,T]}\left|\int_{\newedit{\lfloor Dt\rfloor/D}}^t b\bigl(\bY_D(s)\bigr)\;ds\right|<\frac1D\left(1+\sqrt{8/\pi}\sqrt{M}\right)\right]= 1-o_D(1),
    \]
    so as $D\to\infty$,
    \[
    \sup_{t\in[0,T]}\left|\int_{\newedit{\lfloor Dt\rfloor/D}}^t b\bigl(\bY_D(s)\bigr)\;ds\right|\xrightarrow{\Pp} 0.
    \]
    Lemma~\ref{lemma:error-bound} then follows by the triangle inequality (on $M_m$ and $\sup_{t\in[0,T]}|\int_{\newedit{\lfloor Dt\rfloor/D}}^t b\bigl(\bY_D(s)\bigr)\;ds|$).
\end{proof}
Consequently,
\[
\bY_D(t) = \int_0^t b\bigl(\bY_D(s)\bigr)\;ds + E_D(t),\quad\text{where}\quad \sup_{0\le t\le T}|E_D(t)|\xrightarrow{\Pp} 0.
\]
Now, let $y(t)$ be the solution to the integral equation
\begin{equation}\label{eq:y-def}
    y(t) = \int_0^t b(y(s))\;ds,
\end{equation}
corresponding to the ODE $y'(t) = b(y(t))$ with $y(0)=0$. 
\begin{lemma}[Tracking lemma]\label{lemma:tracking-lemma}
Let $T>0$, and let $b:[0,\infty)\to\mathbb{R}$ be nonincreasing. Suppose that $y,Y:[0,T]\to[0,\infty)$ and
$e:[0,T]\to\mathbb R$ satisfy $e(0)=0$,
\[
b\circ y,\ b\circ Y\in L^1([0,T]),
\qquad
\|e\|_{\infty}:=\sup_{t\in[0,T]}|e(t)|<\infty,
\]
and, for every $t\in[0,T]$,
\[
y(t)=\int_0^t b(y(s))\,ds
\quad\text{and}\quad
Y(t)=\int_0^t b(Y(s))\,ds+e(t).
\]
Then
\[
\sup_{t\in[0,T]}|Y(t)-y(t)|
\le
2\sup_{t\in[0,T]}|e(t)|.
\]
\end{lemma}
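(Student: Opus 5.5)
Set $\Delta(t):=Y(t)-y(t)$ and $\epsilon:=\sup_{t\in[0,T]}|e(t)|$. Subtracting the two integral equations gives
\[
\Delta(t)=\int_0^t\bigl[b(Y(s))-b(y(s))\bigr]\,ds+e(t),\qquad t\in[0,T].
\]
The integrand has the sign of $-\Delta(s)$ wherever it is nonzero: because $b$ is nonincreasing, $Y(s)>y(s)$ forces $b(Y(s))\le b(y(s))$, and $Y(s)<y(s)$ forces the reverse. So the drift is a restoring force and never pushes $\Delta$ away from zero. I will prove the upper bound $\Delta(t)\le 2\epsilon$; the lower bound $\Delta(t)\ge-2\epsilon$ is symmetric.

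Argue by contradiction. Suppose $\Delta(t_1)>2\epsilon$ for some $t_1\in[0,T]$, and let
\[
t_0:=\sup\{s\le t_1:\Delta(s)\le\epsilon\}.
\]
This set contains $0$ because $\Delta(0)=e(0)=0$, so $t_0$ is well defined. On $(t_0,t_1]$ we have $\Delta>\epsilon\ge0$, hence the integrand is $\le0$ there. Subtracting the identity at $t_0$ from the identity at $t_1$ gives
\[
\Delta(t_1)-\Delta(t_0)=\int_{t_0}^{t_1}\bigl[b(Y)-b(y)\bigr]\,ds+e(t_1)-e(t_0)\le 2\epsilon .
\]
It then suffices to show $\Delta(t_0)\le\epsilon$, since this yields $\Delta(t_1)\le3\epsilon$.

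That is not yet $2\epsilon$, and the regularity of $\Delta$ at $t_0$ is the main obstacle. The map $s\mapsto\int_0^s[b(Y)-b(y)]$ is absolutely continuous because both $b\circ y$ and $b\circ Y$ are in $L^1$, but $e$, and hence $\Delta$, may jump. I would therefore track the continuous part instead. Define
\[
I(t):=\int_0^t\bigl[b(Y(s))-b(y(s))\bigr]\,ds,
\]
so that $\Delta=I+e$ with $I$ continuous and $I(0)=0$. Where $I>\epsilon$ we get $\Delta=I+e>\epsilon-\epsilon=0$, so $I$ is nonincreasing on any interval where $I>\epsilon$.

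Now suppose $I(t_1)>\epsilon$ and let $t_0$ be the last time before $t_1$ with $I(t_0)=\epsilon$; this exists by continuity and $I(0)=0$. Then $I(t_1)\le I(t_0)=\epsilon$, a contradiction. Hence $I\le\epsilon$ throughout, and the symmetric argument gives $I\ge-\epsilon$. Therefore
\[
|\Delta(t)|\le|I(t)|+|e(t)|\le 2\epsilon\quad\text{for all }t\in[0,T],
\]
which is the claimed bound.
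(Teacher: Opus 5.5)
Your proof is correct and is essentially the paper's argument: your continuous part $I(t)=\int_0^t[b(Y(s))-b(y(s))]\,ds$ is exactly the paper's $q=Y-e-y$. You bound it by $\sup|e|$ with the same last-crossing contradiction, using that $b$ is nonincreasing wherever $I>\sup|e|$, and then conclude $|Y-y|\le|I|+|e|$. The preliminary attempt working directly with $\Delta$ only gave $3\epsilon$ and ran into the jump issue at $t_0$, so it can simply be deleted.
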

Note that $Y(0)=y(0)=0$ in the setup above, and that the continuity of neither $Y$ nor $e$ is assumed.
\begin{proof}[Proof of Lemma~\ref{lemma:tracking-lemma}]
Let $\delta=\|e\|_\infty<\infty$. Define $q(t):=Y(t)-e(t)-y(t)$. In particular, $q(0)=0$, 
    \[
    q(t) = \int_0^t b(Y(s))\;ds - \int_0^t b(y(s))\;ds
    \]
    and thus $q$ is continuous. We prove that $|q(t)|\le \delta$. To this end, suppose $q(t)>\delta$ for some $t$. Let $s=\max\{u\in[0,t]:q(u)=\delta\}$. Since $q$ is continuous with $q(0)=0$, the intermediate value theorem asserts that the set $\{u\in[0,t]:q(u)=\delta\}$ is non-empty; it is also bounded and thus compact. Hence, the maximum exists and $q(u)>\delta$ for $u\in(s,t]$. In this interval, we have that
    \[
    Y(u) - y(u) =q(u)+e(u)>\delta - \|e\|_\infty=0.
    \]
    Since $b$ is non-increasing, it follows that $b(Y(u))\le b(y(u))$ for $u\in (s,t]$. But then,
    \[
    q(t) - q(s) = \int_s^t \bigl[b(Y(u)) - b(y(u))\bigr]\;du\le 0,
    \]
    implying that $\delta<q(t)\le q(s)=\delta$, a contradiction. As for the lower bound, assume $q(t)<-\delta$ for some $t>0$ and let $s<t$ be the last time for which $q(s)=-\delta$. An analogous argument yields that $b(Y(u))\ge b(y(u))$ for $u\in(s,t]$, which then implies $q(t)-q(s) \ge 0$, a contradiction. Thus, $|q(t)|\le \delta=\|e\|_\infty$. Finally, since $Y(t) - y(t) = q(t)+e(t)$, Lemma~\ref{lemma:tracking-lemma} follows by the triangle inequality.
\end{proof}
Combining Lemmas~\ref{lemma:error-bound} and~\ref{lemma:tracking-lemma}, we immediately \newedit{obtain that} for any fixed $T<\infty$,
\begin{equation}\label{eq:fluid-limit}
\sup_{0\le t\le T}\bigl|\bY_D(t) - y(t)\bigr| \xrightarrow{\Pp} 0.
\end{equation}
With this, we now establish:
\begin{proposition}\label{prop:norm-convergence}
    As $D\to\infty$, \[
    \frac{R_N}{D}\xrightarrow{\Pp} \sqrt{y(2\tau)}:=r(2\tau).
    \]
\end{proposition}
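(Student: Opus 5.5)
The plan is to read off the claim from the fluid limit \eqref{eq:fluid-limit}, evaluated at the time $N/D\to 2\tau$, and then take square roots. Fix $T>2\tau$, say $T=2\tau+1$. For all large $D$ we have $N/D\le T$, so $N\le TD$ and $Y_N=\bY_D(N/D)$. Then
\[
\bigl|Y_N-y(2\tau)\bigr|\le \sup_{0\le t\le T}\bigl|\bY_D(t)-y(t)\bigr| + \bigl|y(N/D)-y(2\tau)\bigr|.
\]
The first term tends to zero in probability by \eqref{eq:fluid-limit}, which combines Lemmas~\ref{lemma:error-bound} and~\ref{lemma:tracking-lemma}. The second term is deterministic and tends to zero because $y$ is continuous. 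Indeed, $y$ is an integral of the locally bounded function $b\circ y$, hence Lipschitz on $[0,T]$, and $N/D\to 2\tau$. So $Y_N=R_N^2/D^2\xrightarrow{\Pp} y(2\tau)$.

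Since $t\mapsto\sqrt t$ is continuous on $[0,\infty)$, the continuous mapping theorem then gives $R_N/D=\sqrt{Y_N}\xrightarrow{\Pp}\sqrt{y(2\tau)}$.

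The remaining work is to justify the identification $\sqrt{y(2\tau)}=r(2\tau)$ with $r$ from \eqref{eq:diff-eq}. I would proceed in three steps.
\begin{itemize}
\item First, existence and uniqueness of a nonnegative solution of \eqref{eq:y-def}. Uniqueness follows from Lemma~\ref{lemma:tracking-lemma} with $e\equiv 0$: any two solutions $y_1,y_2$ satisfy $\sup|y_1-y_2|\le 0$. Existence follows from the explicit candidate $y=r^2$, or alternatively from \eqref{eq:fluid-limit} via a compactness argument.
\item Second, verify that $r^2$ solves the integral equation. By \eqref{eq:diff-eq}, $(r^2)'=1-\sqrt{8/\pi}\,r=b(r^2)$, using $r\ge0$, together with $r(0)=0$.
\item Third, check that $r$ takes values in $[0,\sqrt{\pi/8})$, so that $\sqrt{r^2}=r$. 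This follows from the Lambert-$W$ representation, or directly from the implicit form \eqref{eq:r-implicit}: there $t$ is a strictly increasing function of $r\in[0,\sqrt{\pi/8})$, mapping onto $[0,\infty)$. Differentiating \eqref{eq:r-implicit} gives $dt/dr=\sqrt{\pi/2}\,\sqrt{8/\pi}\,r/(1-\sqrt{8/\pi}\,r)$, which matches $dr/dt=(1-\sqrt{8/\pi}\,r)/(2r)$.
\end{itemize}

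The main obstacle I expect is the degenerate point $r=0$, where $b(y)=1-\sqrt{8/\pi}\sqrt y$ is not Lipschitz, so standard ODE uniqueness does not apply. The monotone comparison in Lemma~\ref{lemma:tracking-lemma} sidesteps this cleanly, so I would rely on it rather than on Picard--Lindelöf. I would also make sure that the spurious stationary-looking branch is excluded. This is immediate because $b(0)=1>0$, which forces $y>0$ for $t>0$.
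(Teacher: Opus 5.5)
Your proof is correct and follows the paper's argument: the same triangle inequality at $t_D=N/D$ with a fixed $T>2\tau$, the fluid limit \eqref{eq:fluid-limit}, continuity of $y$, and the continuous mapping theorem for the square root. Your extra identification of $\sqrt{y(2\tau)}$ with $r$ from \eqref{eq:diff-eq} is not needed here, since the statement defines $r(2\tau):=\sqrt{y(2\tau)}$; the paper makes that link separately, using a uniqueness argument based on $(r_1^2-r_2^2)^2$, and your use of Lemma~\ref{lemma:tracking-lemma} with $e\equiv 0$ is a valid alternative.
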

\begin{proof}[Proof of Proposition~\ref{prop:norm-convergence}]
    Note that we established that for any fixed $T<\infty$, $\sup_{0\le t\le T}|\bY_D(t) - y(t)|\xrightarrow{\Pp}0$. Let $t_D = \frac{N}{D}\to 2\tau$. Fix a time $T\in(2\tau,\infty)$ and note that $\bY_D(t_D) = Y_{\lfloor Dt_D\rfloor} = Y_N$. Then, 
    \begin{align*}
        \bigl|\bY_D(t_D)-y(2\tau)\bigr|&\le \bigl|\bY_D(\frac{N}{D})-y(\frac{N}{D})\bigr| + \bigl|y(\frac{N}{D}) - y(2\tau)\bigr| \\
        &\le \sup_{s\in [0,T]}\bigl|\bY_D(s)-y(s)\bigr| + \bigl|y(\frac{N}{D}) - y(2\tau)\bigr|.
        \end{align*}
        The first term above converges to zero in probability due to~\eqref{eq:fluid-limit} while the second term converges to $0$ as $D\to\infty$ since $N/D\to 2\tau$ and $y$ (defined in~\eqref{eq:y-def}) is continuous. So, $R_N/D = \sqrt{Y_N}\to \sqrt{y(2\tau)}$ by the continuous mapping theorem.
\end{proof}
Proposition~\ref{prop:norm-convergence} thus gives a control over the norm $R_N$ of $x_N$—equivalently, the Frobenius norm $\|S_N\|_F$. In what follows, we convert this result into a guarantee for $\|S_N\|_{\rm op}$ using rotational symmetry.
\paragraph{Rotational invariance} Recall now the update rule $x_i = x_{i-1}-{\rm sign}\ip{x_{i-1}}{g_i}g_i$ from~\eqref{eq:recurse} and define the map $T(x,g):=x-{\rm sign}\ip{x}{g}g$. We next prove that the distribution of $x_k$ is rotationally invariant for every $k\ge 1$. 

\begin{lemma}\label{lemma:rotational-invariance}
For any orthogonal matrix $Q\in O(D) = \{Q\in\R^{D\times D}:Q^TQ = QQ^T = I_D\}$ and $k\ge 1$, $Qx_k\distr x_k$. 
\end{lemma}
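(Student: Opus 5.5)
The natural approach is induction on $k$, using the equivariance of the update map $T(x,g)=x-{\rm sign}\ip{x}{g}g$ under orthogonal transformations, together with the rotational invariance of the incoming Gaussian vectors. The key identity is that for every $Q\in O(D)$ and all $x,g\in\R^D$,
\[
T(Qx,Qg)=Qx-{\rm sign}\ip{Qx}{Qg}\,Qg=Q\bigl(x-{\rm sign}\ip{x}{g}g\bigr)=Q\,T(x,g),
\]
since $\ip{Qx}{Qg}=\ip{x}{g}$. This holds even on the tie set $\ip{x}{g}=0$, because the convention ${\rm sign}(0)=-1$ is applied identically on both sides.

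\emph{Base case.} For $k=1$, since $x_0=0$, we have $x_1=T(0,g_1)=g_1$ (with the tie rule, $\sigma_1=1$). Then $Qx_1=Qg_1\distr g_1=x_1$, because $g_1\sim\cN(0,I_D)$.

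\emph{Inductive step.} Assume $Qx_{k-1}\distr x_{k-1}$ for every $Q\in O(D)$. The vector $x_{k-1}$ is $\mathcal F_{k-1}$-measurable and $g_k$ is independent of $\mathcal F_{k-1}$, so
\[
(Qx_{k-1},Qg_k)\distr (x_{k-1},g_k).
\]
This uses independence, the inductive hypothesis for the first coordinate, and $Qg_k\distr g_k$ for the second. Applying the measurable map $T$ and the equivariance identity gives
\[
Qx_k=T(Qx_{k-1},Qg_k)\distr T(x_{k-1},g_k)=x_k.
\]
The only point that needs a remark is the measurability of $T$. It is piecewise defined on the Borel sets $\{\ip{x}{g}>0\}$, $\{\ip{x}{g}<0\}$ and $\{\ip{x}{g}=0\}$, so it is Borel measurable and pushforwards of equal laws coincide.

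An alternative route is to write $x_k=F_k(g_1,\dots,g_k)$ and show $F_k(Qg_1,\dots,Qg_k)=QF_k(g_1,\dots,g_k)$ by iterating the identity. Combined with $(Qg_1,\dots,Qg_k)\distr(g_1,\dots,g_k)$, this gives the claim directly, mirroring the proof of \Cref{lem:uniform-direction}. I would present this version because it avoids conditioning altogether.

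There is no real obstacle here. The only care needed is handling the tie convention consistently, which works because ties are preserved exactly under $Q$. Afterwards, combined with $\Pp[x_N=0]=0$ and the decomposition $x_N=R_NU$ with $U$ uniform and independent of $R_N$ (as in \Cref{lem:uniform-direction}), \Cref{prop:norm-convergence} transfers to the operator-norm limit $\Phi(\tau)$.
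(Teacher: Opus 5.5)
Your proposal is correct and follows the paper's proof: induction on $k$ with base case $x_1=g_1$, the equivariance identity $T(Qx,Qg)=Q\,T(x,g)$, and the step $(Qx_{k-1},Qg_k)\distr(x_{k-1},g_k)$ obtained from the inductive hypothesis, $Qg_k\distr g_k$, and independence. Your alternative, which iterates equivariance to get $F_k(Qg_1,\dots,Qg_k)=Q\,F_k(g_1,\dots,g_k)$, is equally valid and is the same device the paper uses for the offline residual.
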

\begin{proof}[Proof of Lemma~\ref{lemma:rotational-invariance}]
    We prove the statement by inducting on $k$, where the base case $k=1$ is clear since $x_1=g_1$ (recall $\sigma_1=1$). Suppose $Qx_{k-1}\distr x_{k-1}$. Note that for any $Q\in O(D)$, we have $T(Qx,Qg) = Qx-{\rm sign}\ip{Qx}{Qg}Qg = Q(x-{\rm sign}\ip{x}{g}g) = QT(x,g)$. Thus, 
    \[
    Qx_k = QT(x_{k-1},g_k) = T(Qx_{k-1},Qg_k).
    \]
Since $Qx_{k-1}\distr x_{k-1}$ by the inductive hypothesis and $Qg_k \distr g_k$ (as $g_k\sim \cN(0,I_D)$), \newedit{and the two vectors remain independent,} it follows immediately that $T(Qx_{k-1},Qg_k)\distr T(x_{k-1},g_k) =x_k$. Hence, $Qx_k \distr x_k$.
\end{proof}
The following lemma is a standard measure-theoretic fact; we include a proof for completeness.
\begin{lemma}\label{lemma:uniform-direction}
    Let $X\in \R^D$ be a rotationally invariant random vector and $X\ne 0$ almost surely. Then, $U=\frac{X}{\|X\|}$ is uniform on $\mathbb{S}^{D-1}:=\{v\in\R^D:\|v\|_2=1\}$ and independent of $R=\|X\|$.
\end{lemma}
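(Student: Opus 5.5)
The approach is to show that the joint law of $(R,U)$ equals the product of the law of $R$ and the uniform measure $\mu$ on $\mathbb{S}^{D-1}$. It suffices to prove that for every bounded measurable $f:[0,\infty)\to\R$ and bounded measurable $h:\mathbb{S}^{D-1}\to\R$,
\[
\E[f(R)h(U)]=\E[f(R)]\int h\,d\mu .
\]
Both claims (uniformity and independence) follow by taking $f\equiv 1$ and then general $f$. Since $X\neq 0$ almost surely, $U=X/\|X\|_2$ is well defined almost surely, and $R=\|X\|_2>0$.

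The key step uses rotational invariance. For any fixed $Q\in O(D)$, $QX\distr X$, and $\|QX\|_2=\|X\|_2$ and $QX/\|QX\|_2=QU$. Applying the measurable map $x\mapsto(\|x\|_2,x/\|x\|_2)$ on $\R^D\setminus\{0\}$ therefore gives $(R,QU)\distr(R,U)$, so $\E[f(R)h(QU)]=\E[f(R)h(U)]$ for every $Q$. Now let $Q$ be Haar-distributed on $O(D)$, independent of $X$, and average this identity over $Q$. By Fubini,
\[
\E[f(R)h(U)]=\E\Big[f(R)\int_{O(D)}h(QU)\,dQ\Big].
\]
For each fixed unit vector $u$, the law of $Qu$ under Haar measure is the unique rotation-invariant probability measure on the sphere, namely $\mu$. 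Hence $\int h(Qu)\,dQ=\int h\,d\mu$ for every $u$, and the right side equals $\E[f(R)]\int h\,d\mu$, as required.

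The only point that needs care is justifying that the law of $Qu$ is $\mu$ independently of $u$. This follows from transitivity of $O(D)$ on the sphere together with uniqueness of the rotation-invariant probability measure on $\mathbb{S}^{D-1}$. One can see that uniqueness directly: $g/\|g\|_2$ with $g\sim\cN(0,I_D)$ is rotation invariant, and any invariant $\nu$ satisfies $\nu=\int \mathrm{Law}(Qu)\,d\nu(u)$, with each $\mathrm{Law}(Qu)$ equal to a common measure by left-invariance of Haar measure. Measurability of $(Q,x)\mapsto h(Qx/\|x\|)$ is routine, so Fubini applies. I do not expect any genuine obstacle beyond recording these standard facts.
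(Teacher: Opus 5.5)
Your proposal is correct and is essentially the paper's proof: the polar map gives $(R,QU)\distr(R,U)$ for each fixed $Q\in O(D)$, you average over Haar-distributed $Q$ using Fubini, and you use that $Qu$ is uniform on $\mathbb{S}^{D-1}$ for every fixed $u$. Your justification of that last fact goes slightly beyond the paper, which only asserts it. One small correction there: the independence of $\mathrm{Law}(Qu)$ from $u$ uses right-invariance of Haar measure ($QP$ is Haar when $Q$ is), which is harmless because left and right Haar measures coincide on the compact group $O(D)$.
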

\begin{proof}[Proof of Lemma~\ref{lemma:uniform-direction}]
Let $R=\|X\|$ and $U=\frac{X}{\|X\|_2}$. Consider $\phi:X\mapsto (\|X\|,\frac{X}{\|X\|})$. Then, 
 \[
 \phi(QX) = (\|QX\|_2 ,\frac{QX}{\|QX\|_2}) = (\|X\|_2,Q\frac{X}{\|X\|_2})= (R,QU).
 \]
 Since $QX\distr X$, we have $\phi(QX)=(R,QU)\distr \phi(X) = (R,U)$. With this, take any bounded measurable functions $f:\R^D\to\R$ and $h:[0,\infty)\to \R$ and note that $\mathbb{E}[f(QU)h(R)] = \mathbb{E}[f(U)h(R)]$ for every fixed $Q\in O(D)$. Let now $\mu_{\rm Haar}$ be the Haar measure on $O(D)$. Averaging over $Q\sim\mu_{\rm Haar}$,
 \[
 \mathbb{E}[f(U)h(R)] = \int_{O(D)}\mathbb{E}[f(QU)h(R)]\;\mu_{\rm Haar}(dQ) = \mathbb{E}\left[h(R)\int_{O(D)}f(QU)\;\mu_{\rm Haar}(dQ)\right]
 \]
 by Fubini's theorem. Fix any $u\in\mathbb{S}^{D-1}$. Then for $Q\sim \mu_{\rm Haar}$, it holds that $Qu$ is uniform on $\mathbb{S}^{D-1}$, denoted by $\nu_D$. With this, 
 \[
 \int_{O(D)}f(Qu)\;\mu_{\rm Haar}(dQ) = \int_{\mathbb{S}^{D-1}}f(v)\nu_D(dv),
 \]
 where the latter integral is independent of $u$. Thus,
 \[
 \mathbb{E}[f(U)h(R)] = \left[\int_{\mathbb{S}^{D-1}}f(v)\;\nu_D(dv)\right]\mathbb{E}[h(R)],
 \]
 where the first integral is nothing but $\mathbb{E}[f(U)]$, proving $U\perp R$ as $f,h$ are arbitrary. The fact $U\sim {\rm Unif}(\mathbb{S}^{D-1})$ follows by taking $f=\ind_A$ for any Borel set $A$, yielding $\mathbb{P}[U\in A] =\nu_D(A)$.
\end{proof}
\begin{lemma}\label{lemma:B_N-GOE}
    Let $x_N=R_N U_N$ where $R_N = \|x_N\|_2$ and define $B_N = \svec^{-1}(U_N)\in\R^{d\times d}$—i.e., $U_N = {\rm svec}(B_N)$. Then, 
    \[
    B_N \distr \frac{W}{\|W\|_F},\quad\text{where}\quad W\sim \GOE(d).
    \]
\end{lemma}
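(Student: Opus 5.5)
We combine the two preceding lemmas and compare with a Gaussian reference vector. First, Lemma~\ref{lemma:rotational-invariance} gives $Qx_N\distr x_N$ for every $Q\in O(D)$. Next, $x_N\neq 0$ almost surely: for $N\ge 2$, $x_N$ is one of finitely many signed sums $\sum_i \epsilon_i g_i$, each of which vanishes with probability zero, and for $N=1$ we have $x_1=g_1$. Lemma~\ref{lemma:uniform-direction} then shows that $U_N=x_N/R_N$ is uniform on $\mathbb{S}^{D-1}$, and in particular has the same law as $h/\|h\|_2$ for $h\sim\cN(0,I_D)$.

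Then we identify the law of $h/\|h\|_2$ after applying $\mat=\svec^{-1}$. Set $W:=\sqrt{2/d}\,\mat(h)$. The map $\svec$ is a linear bijection that scales off-diagonal coordinates by $\sqrt2$. Hence $W$ has independent entries $W_{aa}=\sqrt{2/d}\,h_{aa}\sim\cN(0,2/d)$ and $W_{ab}=\sqrt{2/d}\,h_{ab}/\sqrt2\sim\cN(0,1/d)$ for $a<b$, so $W\sim\GOE(d)$. This is the same computation as the identity $g_i=\sqrt{d/2}\,\svec(A_i)\sim\cN(0,I_D)$, read in reverse. Since $\svec$ is a Frobenius isometry, $\|W\|_F=\sqrt{2/d}\,\|h\|_2$, and linearity of $\mat$ gives
\[
\frac{W}{\|W\|_F}=\frac{\mat(h)}{\|h\|_2}=\mat\!\left(\frac{h}{\|h\|_2}\right).
\]
Finally, $B_N=\mat(U_N)$ and $U_N\distr h/\|h\|_2$. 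Because $\mat$ is a fixed measurable map, $B_N\distr \mat(h/\|h\|_2)=W/\|W\|_F$.

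No step is a real obstacle. The only points needing care are the almost-sure nonvanishing of $x_N$, required to apply Lemma~\ref{lemma:uniform-direction}, and keeping the $\sqrt2$ normalization consistent when checking that $W$ has exactly the $\GOE(d)$ variances.
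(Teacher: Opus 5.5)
Your proposal is correct and follows the paper's proof essentially verbatim. Lemma~\ref{lemma:rotational-invariance} together with Lemma~\ref{lemma:uniform-direction} makes $U_N$ uniform on $\mathbb{S}^{D-1}$, and the $\sqrt2$ bookkeeping in $\svec$ identifies $\mat(h/\|h\|_2)$ with $W/\|W\|_F$ for $W=\sqrt{2/d}\,\mat(h)\sim\GOE(d)$. Your check that $x_N\neq 0$ almost surely is a hypothesis of Lemma~\ref{lemma:uniform-direction} that the paper's proof of this lemma leaves implicit; it is noted earlier, after~\eqref{eq:Z_i}.
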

\begin{proof}[Proof of Lemma~\ref{lemma:B_N-GOE}]
    Note that $x_N$ is rotationally symmetric per Lemma~\ref{lemma:rotational-invariance}. Thus, by Lemma~\ref{lemma:uniform-direction}, we have that $U_N\sim {\rm Unif}(\mathbb{S}^{D-1})$. Now let $G\sim \cN(0,I_D)$. It is a classical fact that $\frac{G}{\|G\|}$ is uniformly distributed on $\mathbb{S}^{D-1}$, so $B_N\distr \svec^{-1}(G/\|G\|_2)$. Observe that $H:=\svec^{-1}(G)$ satisfies $\|H\|_F = \|G\|_2$, $H_{aa}\sim \cN(0,1)$ and $H_{ab} = H_{ba}\newedit{\sim}\cN(0,\frac12)$ for $a<b$, where the collection $((H_{aa})_{a=1}^d,(H_{ab})_{1\le a<b\le d})$ is independent. Hence, $W = \sqrt{\frac2d}H\sim \GOE(d)$, implying that
    \[
    B_N \distr \frac{H}{\|H\|_F} = \frac{W}{\|W\|_F}.
    \]
\end{proof}
We are now ready to complete the proof of Theorem~\ref{thm:greedy-performance}.
\begin{proof}[Proof of Theorem~\ref{thm:greedy-performance}]
    For $\Phi(\tau):=\sqrt{\frac{2}{\tau}}r(2\tau)$, we now prove that as $D\to\infty$,
    \[
    \frac{1}{\sqrt{N}}\|S_N\|_{\rm op}\xrightarrow{\Pp} \Phi(\tau).
    \]
    To this end, note that
    \[
    S_N = \sqrt{\frac2d}R_NB_N, \quad\text{where}\quad R_N = \|x_N\|_2,\,B_N = \svec^{-1}\left(\frac{x_N}{R_N}\right).
    \]
Using this fact, we have
\begin{equation}\label{eq:S_N-op-norm}
    \|S_N\|_{\rm op} = \sqrt{\frac2d}R_N \|B_N\|_{\rm op}\distr \sqrt{\frac2d}R_N \frac{\|W\|_{\rm op}}{\|W\|_F},
\end{equation}
where we invoked Lemma~\ref{lemma:B_N-GOE} as well as the independence of $R_N$ and $B_N$ established in Lemma~\ref{lemma:uniform-direction}. We now study~\eqref{eq:S_N-op-norm}. By the Proposition~\ref{prop:norm-convergence}, we have $\frac{R_N}{D}\xrightarrow{\Pp} r(2\tau)$. Further, it is a well-known result (see, e.g.~\cite{anderson2010introduction}) that for $W\sim \GOE(d)$, $\|W\|_{\rm op}\xrightarrow{\Pp} 2$. Lastly, by the (weak) law of large numbers, $\frac{1}{\sqrt{d}}\|W\|_F \xrightarrow{\Pp} 1$. Combining these facts, 
\[
\frac{1}{\sqrt{N}}\|S_N\|_{\rm op}\distr \sqrt{\frac{2}{Nd}}R_N \frac{\|W\|_{\rm op}}{\|W\|_F} = \frac{D\sqrt{2}}{d\sqrt{N}}\cdot \frac{R_N}{D}\cdot \frac{\|W\|_{\rm op}}{\frac{1}{\sqrt{d}}\|W\|_F}\xrightarrow{\text{d}} \sqrt{\frac2\tau}r(2\tau)=\Phi(\tau)
\]
using Slutsky's lemma as well as the fact that $\frac{D\sqrt{2}}{d\sqrt{N}}\to \frac{1}{\sqrt{2\tau}}$ \newedit{deterministically}. Since $\Phi(\tau)$ is constant for fixed $\tau$, we obtain $\frac{1}{\sqrt{N}}\|S_N\|_{\rm op}\xrightarrow{\Pp} \Phi(\tau)$, completing the proof of Theorem~\ref{thm:greedy-performance}.
\end{proof}
\subsection{Density required for success}
\paragraph{Differential equation and uniqueness} The fact that $r(t)$ in~\eqref{eq:r-implicit} solves the differential equation $\frac{d}{dt}r(t)^2 = 1-\sqrt{8/\pi} r(t)$ can be seen either by direct differentiation or as follows. We have that $2r(t)r'(t) = 1-\sqrt{8/\pi} r(t)$, which yields that
\[
\frac{dt}{dr} = \frac{2r}{1-\sqrt{8/\pi} r}\Rightarrow t = \int_0^r \frac{2u}{1-\sqrt{8/\pi} u}\;du.
\]
Note that $\frac{2u}{1-\sqrt{8/\pi} u} = -\sqrt{\pi/2} + \newedit{\frac{\sqrt{\pi/2}}{1-\sqrt{8/\pi}\,u}}$, implying that
\[
t = \int_0^r \frac{2u}{1-\sqrt{8/\pi} u}\;du = \newedit{-\sqrt{\pi/2}\,r} - \frac{\pi}{4}\log\bigl(1-\sqrt{8/\pi} r\bigr), r<\sqrt{\pi/8}.
\]
\newedit{The right-hand side of~\eqref{eq:r-implicit} is continuous and strictly increasing in $r\in[0,\sqrt{\pi/8})$, from zero to infinity, so it defines $r(t)$ uniquely. Its expansion $t=r^2+O(r^3)$ at zero proves that $(r^2)\prime(0)=1$; the differential equation therefore also holds at the initial point.}
We now argue uniqueness. \newedit{Write $r_1=r$.} Let $r_2(t)\ge 0$ be any other solution to the differential equation~\eqref{eq:diff-eq} with $r_2(0)=0$. Consider $g(t):=(r_1(t)^2-r_2(t)^2)^2$. We have:
\begin{align*}
    g'(t) &=\newedit{2(r_1(t)^2-r_2(t)^2)\bigl((r_1^2)'(t)-(r_2^2)'(t)\bigr)}\\
    &=4\sqrt{2/\pi}(r_1(t)^2-r_2(t)^2)(r_2(t)-r_1(t))\\
    &=-4\newedit{\sqrt{2/\pi}}(r_1(t)-r_2(t))^2(r_1(t)+r_2(t))\le 0,
\end{align*}
where we used, \newedit{for $t>0$,} the fact that $2r_1(t)r_1'(t) = 1-\sqrt{8/\pi} r_1(t)$, so that $r_1(t)r_1'(t) - r_2(t)r_2'(t) = \sqrt{2/\pi}(r_2(t)-r_1(t))$. Since $g(0)=0$ and $g(t)\ge 0$ for all $t\ge 0$, it must in fact hold that $g$ is identically zero, proving uniqueness.
\paragraph{Curve inversion} We now use the condition $\kappa>\Phi(\tau)$ to obtain a  threshold on $\tau$. To that end, notice that the implicit equation~\eqref{eq:r-implicit} yields that 
\[
2\tau  = -\sqrt{\pi/2}\,r(2\tau) -\frac{\pi}{4}\log\bigl(1-\sqrt{8/\pi} r(2\tau)\bigr) = -\frac{\pi a}{4}-\frac{\pi}{4}\log(1-a),
\]
where $a=\sqrt{8/\pi} r(2\tau)$. This implies, $\frac{8\tau}{\pi} = -a-\log(1-a)$. At the same time,
\[
\Phi(\tau)^2 = \frac{2}{\tau}r(2\tau)^2 = \frac{\pi a^2}{4\tau} = \frac{2a^2}{-a-\log(1-a)}=:h(a).
\]
Likewise, let $L(a):=-a-\log(1-a)$. By direct calculation, $L'(a) = \frac{a}{1-a}$ and $L''(a) = \frac{1}{(1-a)^2}$. Moreover,
\[
h'(a) = \frac{4a L(a) - 2L'(a)a^2}{L(a)^2} = \frac{2a(2L(a) - aL'(a))}{L(a)^2}.
\]
Set $q(a): =aL'(a) - 2L(a)$. Then, $q(0)=0$ and $q'(a) = aL''(a) -L'(a) = \frac{a}{(1-a)^2}- \frac{a}{1-a} = \frac{a^2}{(1-a)^2}>0$. Thus, $q(a)>0$ for all $a>0$, which in turn implies that 
\[
h'(a) = \newedit{-}\frac{2aq(a)}{L(a)^2}<0,\quad \text{for}\quad a>0.
\]
Thus, as $a$ increases from zero to one, $h(a)$ continuously decreases from $4$ to $0$. So, for any $\kappa\in(0,2)$, there is indeed a unique $a_\kappa$ for which
\begin{equation}\label{eq:h-a-k}
    \kappa^2 = \frac{2a_\kappa^2}{-a_\kappa - \log(1-a_\kappa)} = h(a_\kappa).
\end{equation}
Define now 
\[
\tau_{\rm FG}(\kappa):=\frac{\pi}{8}\bigl[-a_\kappa - \log(1-a_\kappa)\bigr]= \frac{\pi}{4\kappa^2}a_\kappa^2.
\]
Observe that $\kappa>\Phi(\tau)$ iff $h(a_\kappa) = \kappa^2>\Phi(\tau)^2 = h(a(\tau))$ (recall that $h$ is decreasing), which holds iff $a(\tau)>a_\kappa$, where $a(\tau) = \sqrt{8/\pi} r(2\tau)$. 

As for the last expression, note that $L(a(\tau)) = \frac{8\tau}{\pi}$ and that since $L'(a)>0$, $L$ is strictly increasing, so $a(\tau)>a_\kappa$ iff $L(a(\tau))>L(a_\kappa)$. This means,
\[
\frac{8\tau}{\pi} = L(a(\tau)) >L(a_\kappa) = \frac{8}{\pi}\tau_{\rm FG}(\kappa),
\]
where we used the fact $\tau_{\rm FG}(\kappa) = \frac{\pi}{8}L(a_\kappa)$. \newedit{This implies that the algorithm succeeds whp when $\tau>\tau_{\rm FG}(\kappa)$, and fails whp when $\tau<\tau_{\rm FG}(\kappa)$. The equality case remains open in this analysis.}
\paragraph{$\kappa\to 0$ Asymptotics for $\tau_{\rm FG}(\kappa)$} We lastly prove that $a_\kappa = 1+o_\kappa(1)$ as $\kappa\to 0$. It is clear that $a_\kappa<1$. Next, note the Taylor expansion for $L(a)=-a-\log(1-a)$, yielding
\[
L(a) = \sum_{m\ge 2}\frac{a^m}{m}\le \frac{a^2}{2(1-a)}.
\]
Using~\eqref{eq:h-a-k}, we have
\[
\kappa^2 =\frac{2a_\kappa^2}{L(a_\kappa)}\ge 4(1-a_\kappa).
\]
Thus,
\[
1>a_\kappa\ge 1-\frac{\kappa^2}{4},
\]
yielding the desired asymptotic.
\begin{remark}
    In fact, a much faster convergence guarantee for $a_\kappa$ holds. Let $\delta_\kappa:=1-a_\kappa$. Then,~\eqref{eq:h-a-k} implies that $-1+\delta_\kappa-\log\delta_\kappa  = \frac{2(1-\delta_\kappa)^2}{\kappa^2}$. For $\kappa$ sufficiently small, we have $a_\kappa\ge \frac12$, consequently, $-\log\delta_\kappa\ge \frac{1}{2\kappa^2}$, \newedit{and hence $\delta_\kappa/\kappa^2\to0$. Substituting back gives $-\log\delta_\kappa=2/\kappa^2+1+o_\kappa(1)$, so that}
    \[
    a_\kappa = 1-\exp\left(-\frac{2}{\kappa^2} - 1 +o_\kappa(1)\right),
    \]
    thus the convergence is in fact exponentially fast.
\end{remark}

{\color{addedblue}
\section{Hardness for Stable Algorithms}\label{sec:stable-hardness}
In this section, we establish our main hardness result for stable algorithms (\Cref{thm:stable-lower}). We begin by passing to a more convenient representation.

To this end, given iid inputs $A_1,\dots,A_N\sim \GOE(d)$, let $g_i = \sqrt{d/2}\svec(A_i)\sim \cN(0,I_D)$ and $G=[g_1 \cdots g_N]\in\R^{D\times N}$ with iid $\cN(0,1)$ entries, as in~\eqref{eq:vectorization}. Observe that for any $\sigma\in\{-1,1\}^N$, the operator norm of $\sum_{i=1}^N \sigma_i A_i$ is at most $\kappa\sqrt{N}$, if and only if:
\begin{equation}\label{eq:cal-S-set}
    \sigma \in \mathcal{S}_{d,N}(G;\kappa):=\left\{\sigma\in\{-1,1\}^N : \sqrt{2/d}\,\bigl\|\mat(G\sigma)\bigr\|_{\op}\le \kappa\sqrt{N}\right\},
\end{equation}
where $\mat:\R^D\to\Sym_d$ is the inverse of the operator $\svec$. In what follows, we will work with $G\in\R^{D\times N}$ obtained from the inputs $A_1,\dots,A_N$.
\subsection{The overlap-gap property}
\paragraph{Interpolation}
For $\sigma,\sigma'\in\{-1,+1\}^N$, write
\begin{equation}\label{eq:overlap-Q}
    Q(\sigma,\sigma')=\frac1N\ip\sigma{\sigma'}
 =1-\frac2N\dist(\sigma,\sigma').
\end{equation}
Let $G_0,G_1,\ldots,G_m\in\R^{D\times N}$ be independent standard Gaussian matrices and define
\begin{equation}\label{eq:interpolation}
 G_i(\theta)=\cos\theta\,G_0+\sin\theta\,G_i,
 \qquad 0\le\theta\le\pi/2.
\end{equation}
Each interpolated matrix is marginally standard Gaussian.
\paragraph{Parameters and the geometric statement}
Put $\ellzero=\log(4/e)>0$ and $L_\kappa=\log(1/\kappa)$. Assume
\begin{equation}\label{eq:ogp-density}
 0<\tau\le\frac{\ellzero}{72\kappa^2L_\kappa}.
\end{equation}
For sufficiently small $\kappa$, choose a fixed integer
\begin{equation}\label{eq:ogp-parameters-m}
     m\ge\max\left\{50,\left\lceil\frac{32\tau\log2}{\ellzero}\right\rceil\right\}
\end{equation}
and define parameters $0<\eta<\beta<1$ and $\gamma_k>0$ by
\begin{equation}\label{eq:ogp-parameters-beta-eta-gamma}
\beta=1-8\kappa^2,\quad\eta=\frac{1-\beta}{2m}=\frac{4\kappa^2}{m},
 \quad\gamma_\kappa=\frac12\kappa^2L_\kappa.
\end{equation}
Notice that $m$ is allowed to grow as $\kappa\downarrow 0$, while it remains fixed as $d\to\infty$.
\begin{theorem}[Ensemble overlap-gap property]\label{thm:ogp}
There is a universal $\kappa_0>0$ such that the following holds. Fix any $0<\kappa<\kappa_0$ and $\tau$ satisfying~\eqref{eq:ogp-density}, an $m\in\mathbb{N}$ as in~\eqref{eq:ogp-parameters-m}, $\beta,\eta$ and $\gamma_k$ as in~\eqref{eq:ogp-parameters-beta-eta-gamma}, and an arbitrary finite set $\mathcal I\subset[0,\pi/2]$ satisfying $|\mathcal I|\le\exp(\gamma_\kappa N)$. For iid matrices $G_0,\dots,G_m\in\R^{D\times N}$ with iid standard normal entries and $N/d^2\to \tau$, with probability at least $1-\exp(-cN)$ there are no signings $\sigma^1,\ldots,\sigma^m\in\{-1,1\}^N$ and values $\theta_1,\ldots,\theta_m\in\mathcal I$ satisfying the following simultaneously:
\begin{itemize}
    \item For every $i=1,\dots,m$, $\sigma^i \in \cS_{d,N}(G_i(\theta_i),\kappa)$ for the sets in~\eqref{eq:cal-S-set}.
    \item The overlaps~\eqref{eq:overlap-Q} satisfy $Q(\sigma^i,\sigma^j)\in[\beta-\eta,\beta]$ for every $1\le i<j\le m$.
\end{itemize}
The constant $c>0$ in the exponent is allowed to depend on fixed parameters.
\end{theorem}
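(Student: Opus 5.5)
We prove Theorem~\ref{thm:ogp} by a first-moment (union bound) argument. Let $Z$ count tuples $(\sigma^1,\dots,\sigma^m,\theta_1,\dots,\theta_m)$ with $\theta_i\in\mathcal I$, pairwise overlaps in $[\beta-\eta,\beta]$, and $\sigma^i\in\cS_{d,N}(G_i(\theta_i),\kappa)$ for all $i$. We show $\E Z\le e^{-cN}$ and conclude by Markov's inequality. The bound factors into three parts: the number of admissible overlap patterns, the factor $|\mathcal I|^m\le e^{m\gamma_\kappa N}$, and a uniform bound on the joint probability that a fixed tuple is feasible.

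\emph{Counting.} Fix $\sigma^1$ ($2^N$ choices). Each $\sigma^j$ with $j\ge2$ lies at Hamming distance at most $\frac{1-\beta+\eta}{2}N\le 5\kappa^2 N$ from $\sigma^1$. The number of such $\sigma^j$ is at most $\exp(N h(5\kappa^2))\le \exp(CN\kappa^2L_\kappa)$. The full count is therefore at most
\[
2^N\exp\bigl(C(m-1)N\kappa^2L_\kappa\bigr).
\]

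\emph{Probability.} Fix the signings and angles and write $y_i=G_i(\theta_i)\sigma^i/\sqrt N$. Each $y_i\sim\cN(0,I_D)$, and the vector $(y_1,\dots,y_m)$ is jointly Gaussian with cross-covariance $\cos\theta_i\cos\theta_j\,Q(\sigma^i,\sigma^j)I_D$. By \eqref{eq:cal-S-set}, feasibility means $\|\mat(y_i)\|_{\op}\le\kappa\sqrt{d/2}$. Passing to the Frobenius norm, this implies $\|y_i\|_2=\|\mat(y_i)\|_F\le \kappa d/\sqrt2\approx\kappa\sqrt D$. So it suffices to bound the probability that all $m$ vectors lie in a ball of radius $\kappa\sqrt D$.

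I would bound this via the Gaussian density: the probability is at most
\[
\det(\Sigma_m)^{-D/2}\cdot\mathrm{vol}(\text{ball})^m(2\pi)^{-mD/2},
\]
where $\Sigma_m$ is the $m\times m$ correlation matrix. The volume factor is roughly $(\kappa^2 e)^{mD/2}$. The determinant is the crux. For $\theta$-values with $\cos\theta_i\cos\theta_j$ far from one the entries are small and $\Sigma_m$ is well conditioned. In the worst case, all $\theta_i=0$ and $\Sigma_m\approx(1-\beta)I+\beta\mathbf 1\mathbf 1^\top$ up to perturbations of size $\eta=(1-\beta)/(2m)$. The choice of $\eta$ keeps the perturbation below half the smallest eigenvalue, giving $\det\Sigma_m\gtrsim (1-\beta)^{m-1}=(8\kappa^2)^{m-1}$. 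For general angles I would lower bound the determinant uniformly by writing $\Sigma_m=\mathrm{diag}(1-c_i^2)+$ a PSD part and using concavity of $\log\det$; this uniformity in the angles is the step I expect to be the main obstacle. The resulting probability bound is
\[
\exp\Bigl(\tfrac D2\bigl[m\log(e\kappa^2)-(m-1)\log(8\kappa^2)\bigr]\Bigr)\le \exp\Bigl(-\tfrac D2\bigl[(m-1)\ellzero-\log(1/\kappa^2)\bigr]\Bigr),
\]
using $\log(8/e)\ge \log(4/e)=\ellzero$. One factor is spent on $\sigma^1$, which is where the $\log(1/\kappa)$ loss appears.

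\emph{Combining.} With $D=N/(2\tau)$, the exponent is
\[
N\Bigl[\log2+Cm\kappa^2L_\kappa+m\gamma_\kappa-\frac{(m-1)\ellzero-2L_\kappa}{4\tau}\Bigr].
\]
Under \eqref{eq:ogp-density} we have $1/(4\tau)\ge 18\kappa^2L_\kappa/\ellzero$, so the negative term is at least $18(m-1)\kappa^2L_\kappa$. This dominates $m\gamma_\kappa+Cm\kappa^2L_\kappa$ provided the constant $C$ from the entropy estimate is tamed, for instance by using the sharper entropy $h(5\kappa^2)\approx 5\kappa^2L_\kappa$. The condition $m\ge 32\tau\log2/\ellzero$ absorbs the $\log 2$ term, and $m\ge50$ absorbs the $2L_\kappa/(4\tau)$ term, which is of order $\kappa^2L_\kappa^2$. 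I expect the bookkeeping of these constants to need care, but the determinant bound uniform over angles is the real difficulty.
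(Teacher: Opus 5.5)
Your route is the paper's: a first-moment count over $\sigma^1$, Hamming balls for the other $m-1$ signings, and $|\mathcal I|^m\le e^{m\gamma_\kappa N}$ angle choices, with the joint small-ball event bounded by density times volume after passing from operator to Frobenius norm. As written, though, it is not a proof.

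The determinant step you call the crux is left open, and it becomes easy once you stop trying to keep the large eigenvalue of $\Sigma_m$. Write $Q=(1-\beta)I+\beta\mathbf 1\mathbf 1^\top-E$ with $0\le E_{ij}\le\eta$, so $\|E\|_{\op}\le m\eta=(1-\beta)/2$. Set $c_i=\cos\theta_i$, $\Gamma=\mathrm{diag}(c_i)$ and $\lambda=\lambda_{\min}(Q)$. Then $\lambda\ge(1-\beta)/2$, and $\lambda\le1$ because $\Tr Q=m$. Hence for every $x\in\R^m$,
\[
x^\top\Sigma_m x=(\Gamma x)^\top Q(\Gamma x)+\sum_i(1-c_i^2)x_i^2\ge\sum_i\bigl(1-(1-\lambda)c_i^2\bigr)x_i^2\ge\lambda\|x\|_2^2 .
\]
So $\Sigma_m\succeq\frac{1-\beta}{2}I_m$ uniformly in the angles, including $\theta_i=0$. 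This gives $\det\Sigma_m\ge(4\kappa^2)^m$ and a joint probability at most $\bigl(e\kappa^2/(4\kappa^2)\bigr)^{mD/2}=e^{-m\ellzero D/2}$. This is exactly the paper's argument; the slack in \eqref{eq:ogp-density} makes discarding the top eigenvalue harmless.

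Your $(m-1)$-power version is also true for all angles. Indeed $\Sigma_m=Q\circ(cc^\top+\mathrm{diag}(1-c_i^2))$, so Oppenheim's inequality gives $\det\Sigma_m\ge\det Q\ge\lambda^{m-1}\lambda_{\max}(Q)\ge\lambda^{m-1}$. However, the correct base is $(1-\beta)/2$, not $1-\beta$. The hidden factor $2^{m-1}$ is not negligible once raised to the power $D/2$.

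The combining step has a sign error that breaks it as written. Your displayed exponent equals $\frac D2[-(m-1)\log(8/e)+\log(e\kappa^2)]$, and $\log(e\kappa^2)=1-2L_\kappa<0$. The unmatched factor is the small-ball probability of one free $\cN(0,I_D)$ vector, so it is a gain, not a loss. Replacing it by $+\log(1/\kappa^2)$ creates a spurious positive term $L_\kappa N/(2\tau)$.

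The condition $m\ge50$ does not absorb that term. Beating it with $(m-1)\ellzero N/(4\tau)$ requires $m-1>2L_\kappa/\ellzero$, which is unbounded as $\kappa\downarrow0$. Yet the theorem permits $m=50$ whenever $\tau\le50\ellzero/(32\log2)$. The term is of order $\kappa^2L_\kappa^2N$ only at the top of the density range. With the correct sign, or with the $\lambda^m$ bound above, the term disappears and the computation closes.

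A minor slip: $h(5\kappa^2)\approx10\kappa^2L_\kappa$, not $5\kappa^2L_\kappa$, because $\log(1/\kappa^2)=2L_\kappa$. This is harmless given the slack: the per-copy gain is $\ellzero/(4\tau)\ge18\kappa^2L_\kappa$, against a per-copy entropy-plus-angle cost of about $10.5\kappa^2L_\kappa$. The paper uses the tighter radius $p_{\rm ov}\le4.04\kappa^2$ to get $H(p_{\rm ov})+\gamma_\kappa\le9\kappa^2L_\kappa$.
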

\paragraph{Counting overlap-constrained $m$-tuples} We begin by bounding the number of $m$-tuples $\sigma^1,\dots,\sigma^m$ satisfying the overlap constraint $Q(\sigma^i,\sigma^j) \in[\beta-\eta,\beta]$ for every $1\le i<j\le m$. Let
\begin{equation}\label{eq:pov}
 p_{\mathrm{ov}}=\frac{1-\beta+\eta}{2}
 =4\kappa^2\left(1+\frac1{2m}\right)\le4.04\kappa^2,
\end{equation}
where the last inequality holds for all sufficiently small $\kappa$.  Note that if $Q(\sigma^i,\sigma^j)\ge \beta-\eta$ for all $i<j$,  then every $\sigma^i$ is in the Hamming ball of radius $p_{\mathrm{ov}}N$ around $\sigma^1$. With $H(p):=-p\log p-(1-p)\log(1-p)$ denoting the binary entropy, the number of such tuples is at most
\begin{equation}\label{eq:tuple-count}
 2^N\left(\sum_{j=0}^{\lfloor p_{\mathrm{ov}}N\rfloor}\binom Nj\right)^{m-1}
 \le\exp\{N\log2+(m-1)NH(p_{\mathrm{ov}})\}.
\end{equation}
Here, the upper bound follows by standard estimates and holds for $0\le p_{\rm ov}\le 1/2$, which indeed holds for all small enough $\kappa$.
 Discarding pairwise restrictions, this bound is crude; yet it suffices for our argument.
\paragraph{Covariance lower bound}
Towards controlling the probability of the event that $\sigma^i\in \cS_{d,N}(G_i(\theta_i);\kappa)$ for all $i$, we study the corresponding covariance structure.
\begin{lemma}\label{lem:ogp-covariance}
Fix deterministic angles $\theta_1,\dots,\theta_m\in\mathcal{I}$ and deterministic signings $\sigma^1,\dots,\sigma^m\in\{-1,1\}^N$ satisfying $Q(\sigma^i,\sigma^j)\in[\beta-\eta,\beta]$ for all $i<j$. Define
\begin{equation}\label{eq:V_I}
     V_i:=\frac{1}{\sqrt{N}}G_i(\theta_i)\sigma^i\in\R^D.
\end{equation}
Let $Q=(Q_{ij})_{i,j=1}^m$ be the Gram matrix of overlaps, where $Q_{ij}:=Q(\sigma^i,\sigma^j)$, and set
\[
\Gamma:=\mathrm{diag}(\cos\theta_1,\dots,\cos\theta_m).
\]
Then, the concatenated vector $(V_1,\ldots,V_m)\in\R^{mD}$ is centered normal with covariance $B\otimes I_D$, where $B:=\Gamma Q\Gamma + (I_m - \Gamma^2)$. Moreover,
\begin{equation}\label{eq:B-lower}
 B\succeq\frac{1-\beta}{2}I_m,
 \qquad \det(B\otimes I_D)\ge\left(\frac{1-\beta}{2}\right)^{mD}.
\end{equation}
\end{lemma}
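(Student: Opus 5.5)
We compute the joint law of $(V_1,\dots,V_m)$ directly from the interpolation \eqref{eq:interpolation}, and then lower bound the smallest eigenvalue of $B$ using the overlap constraints.

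\textbf{Joint law and covariance.} Write
\[
V_i=\cos\theta_i\,N^{-1/2}G_0\sigma^i+\sin\theta_i\,N^{-1/2}G_i\sigma^i .
\]
All matrices $G_0,\dots,G_m$ are jointly Gaussian and the signings are deterministic, so the concatenated vector is a linear image of a Gaussian vector. It is therefore centered Gaussian.

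For the covariance, the rows of each $G_k$ are iid $\cN(0,I_N)$. Hence for fixed $x,y\in\R^N$, $\E[(G_kx)(G_ky)^\top]=\ip xy I_D$, and distinct $G_k$ are independent. This gives
\[
\E[V_iV_j^\top]=\cos\theta_i\cos\theta_j\,\frac{\ip{\sigma^i}{\sigma^j}}{N}I_D+\ind\{i=j\}\sin^2\theta_i\,\frac{\|\sigma^i\|^2}{N}I_D.
\]
Since $\|\sigma^i\|^2=N$, this equals $\bigl[\Gamma Q\Gamma+(I_m-\Gamma^2)\bigr]_{ij}I_D$, so the covariance is $B\otimes I_D$.

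\textbf{Eigenvalue lower bound.} This is the step requiring care. Decompose
\[
Q=(1-\beta)I_m+\beta\,\mathbf 1\mathbf 1^\top+E .
\]
The diagonal of $E$ is zero, and its off-diagonal entries are $Q_{ij}-\beta\in[-\eta,0]$. By Gershgorin, $\|E\|_{\op}\le (m-1)\eta\le(1-\beta)/2$, using $\eta=(1-\beta)/(2m)$. Since $\beta\mathbf 1\mathbf 1^\top\succeq0$, it follows that $Q\succeq\frac{1-\beta}{2}I_m$.

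Now combine this with the interpolation weights:
\[
\Gamma Q\Gamma+(I-\Gamma^2)\succeq\frac{1-\beta}2\Gamma^2+(I-\Gamma^2).
\]
The right side is diagonal with entries $\frac{1-\beta}{2}\cos^2\theta_i+\sin^2\theta_i\ge\frac{1-\beta}{2}$, because $\frac{1-\beta}{2}\le1$. This proves $B\succeq\frac{1-\beta}2 I_m$.

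\textbf{Determinant.} The eigenvalues of $B\otimes I_D$ are those of $B$, each repeated $D$ times. Hence
\[
\det(B\otimes I_D)=(\det B)^D\ge\left(\frac{1-\beta}2\right)^{mD},
\]
which is \eqref{eq:B-lower}.
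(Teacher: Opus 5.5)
Your proposal is correct and follows essentially the same route as the paper. You use the same decomposition $Q=(1-\beta)I_m+\beta\mathbf{1}\mathbf{1}^\top+E$ with a row-sum bound $\|E\|_{\op}\le(1-\beta)/2$, then combine $\Gamma Q\Gamma\succeq\frac{1-\beta}{2}\Gamma^2$ with the diagonal term $I_m-\Gamma^2$, and finish with $\det(B\otimes I_D)=(\det B)^D$; your explicit covariance computation fills in the step the paper only states as ``a calculation involving independent Gaussians.''
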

\begin{proof}
Note that  $Q$ has diagonal entries 1 and off-diagonal entries in $[\beta-\eta,\beta]$. Thus
\[
 Q=(1-\beta)I_m+\beta\boldsymbol1\boldsymbol1^\top-E,
\]
where $E$ is symmetric, has zero diagonal, and $0\le E_{ij}\le\eta$ for every $i\ne j$. Hence,
\[
\|E\|_{\rm op}\le \max_i \sum_j |E_{ij}|\le m\eta = \frac{1-\beta}{2},
\]
and
\begin{equation}\label{eq:lambda-min-Q}
    \lambda_{\min}(Q)\ge \frac{1-\beta}{2}:=\lambda.
\end{equation}
Writing $\Gamma=\operatorname{diag}(\cos\theta_1,\ldots,\cos\theta_m)$, a calculation involving independent Gaussians gives
\[
 B=\Gamma Q\Gamma+I_m-\Gamma^2.
\]
Lastly, since $\Tr(Q)=m$, we have $\lambda\le \lambda_{\min}(Q)\le 1$. Therefore, for every $x\in\R^m$, 
\begin{align*}
    x^T Bx &= (\Gamma x)^T Q(\Gamma x) +\sum_i (1-\cos^2\theta_i)x_i^2 \\
    &\ge \sum_i \bigl[\lambda \cos^2\theta_i + 1-\cos^2 \theta_i]x_i^2\ge \lambda \|x\|_2^2
\end{align*}
using~\eqref{eq:lambda-min-Q} and the fact $\lambda:=(1-\beta)/2\le 1$. This proves $B\succeq \lambda I_m = \frac{1-\beta}{2}I_m$. Lastly, since $\det(B\otimes I_D) = (\det B)^D$~\eqref{eq:B-lower} follows.
\end{proof}
The lower bound is uniform over all angles $\theta_1,\dots,\theta_m$, including the value $\theta_i=0$.
\paragraph{Small-ball probability and first moment argument}

If $\sigma^i$ is feasible for $G_i(\theta_i)$, we have by~\eqref{eq:cal-S-set} that
\[
\sqrt{\frac2d}\norm{\mat(G_i(\theta_i)\sigma^i)}_{\op}
  \le\kappa\sqrt N.
\]
Using $\norm M_{\F}\le\sqrt d\norm M_{\op}$, the fact $\bigl\|G_i(\theta_i)\sigma_i\bigr\|_2 = \bigl\|\mat(G_i(\theta_i)\sigma_i)\bigr\|_F$ yields that for $V_i$ defined in~\eqref{eq:V_I}
\begin{equation}\label{eq:necessary-euclidean}
  \norm{V_i}_2\le \frac{\kappa d}{\sqrt2}=:\mathsf R_d
\end{equation}
For any $s>0$, the volume formula and the elementary lower bounds on the gamma function give
\begin{equation}\label{eq:ball-volume}
 \operatorname{Vol}(B_2^D(s))
 =\frac{\pi^{D/2}s^D}{\Gamma(D/2+1)}
 \le\left(\frac{2\pi e s^2}{D}\right)^{D/2}.
\end{equation}
Recall also that the maximum density of a centered Gaussian with covariance $B\otimes I_D$ is $(2\pi)^{-mD/2}(\det B)^{-D/2}$. Consequently, \Cref{lem:ogp-covariance} gives
\begin{align}
 \Pp\bigl[\norm{V_i}_2\le\mathsf R_d\text{ for all }i\bigr]
 &\le\left(\frac{2e\mathsf R_d^2}{D(1-\beta)}\right)^{mD/2}\notag\\
 &\le\left(\frac{2e\kappa^2}{1-\beta}\right)^{mD/2}
 =\exp\left(-\frac{mD\ellzero}{2}\right).\label{eq:joint-small-ball}
\end{align}
Here $\mathsf R_d^2/D=\kappa^2d/(d+1)\le\kappa^2$ and $1-\beta=8\kappa^2$.

\begin{proof}[Proof of \Cref{thm:ogp}]
Let $N_{\mathcal I}$ count the configurations $(\sigma^1,\dots,\sigma^m)$ satisfying the conditions of~\Cref{thm:ogp}. Combining the tuple count~\eqref{eq:tuple-count} with the small ball probability~\eqref{eq:joint-small-ball}, a union bound over $|\mathcal{I}|^m$ choices for $(\theta_1,\dots,\theta_m)$ yields that
\begin{equation}\label{eq:ogp-expectation}
 \log\E N_{\mathcal I}
 \le N\log2+m\underbrace{\left[N\big(H(p_{\mathrm{ov}})+\gamma_\kappa\big)-\frac{D\ellzero}{2}\right]}_{:=\Upsilon}.
\end{equation}
For $0<p\le1/2$, $H(p)\le p\log(e/p)$. Thus,~\eqref{eq:pov} implies, uniformly for $m\ge 50$ and small $\kappa$,
\begin{equation}\label{eq:entropy-estimate}
H(p_{\mathrm{ov}})+\gamma_\kappa\le9\kappa^2L_\kappa.
\end{equation}
Here, we used the fact in the regime $\kappa\downarrow 0$, the leading coefficient of $H(4.04\kappa^2)$ relative to $\kappa^2L_\kappa$ is $8.08$, while $\gamma_\kappa$ contributes $0.5$, thus their sum at most 9.

Since $N/d^2\to\tau$ and $D/d^2\to1/2$, we can further control $\Upsilon$ in~\eqref{eq:ogp-expectation} by 
\[
\frac{1}{d^2}\Upsilon\le  9\tau\kappa^2L_\kappa-\frac{\ellzero}{4}+o(1)
 \le\frac{\ellzero}{8}-\frac{\ellzero}{4}+o(1)
 \le-\frac{\ellzero}{9}
\]
for sufficiently large $d$. For $m\ge32\tau\log2/\ellzero$, the term $m\Upsilon$ therefore dominates $N\log 2$, thus $\E N_{\mathcal I}\le e^{-cN}$. Applying Markov's inequality proves the theorem.
\end{proof}
We now convert this geometric property to an algorithmic hardness result.
\paragraph{Independent-input endpoint chaos}
\begin{lemma}[Endpoint chaos property]\label{lem:chaos}
Under the assumptions and parameter choices of~\Cref{thm:ogp}, let $H_1,\ldots,H_m$ be $D\times N$independent standard Gaussian  matrices. With probability at least $1-e^{-c''N}$, there are no feasible signings $\sigma^i\in\cS_{d,N}(H_i;\kappa)$ with
\begin{equation}\label{eq:chaos}
 Q(\sigma^i,\sigma^j)>\beta-\eta,\quad \text{for all}\quad i<j.
\end{equation}
Here, $c''>0$ depends only on the fixed parameters.
\end{lemma}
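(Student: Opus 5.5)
The plan is to rerun the first-moment argument from the proof of \Cref{thm:ogp} in the special case of fully independent instances, i.e.\ all angles equal to $\pi/2$. The only change is that the overlap window is now one-sided: we require $Q(\sigma^i,\sigma^j)>\beta-\eta$ rather than $Q(\sigma^i,\sigma^j)\in[\beta-\eta,\beta]$.

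Setting $\theta_i=\pi/2$ in \eqref{eq:interpolation} gives $G_i(\pi/2)=G_i$. Let $H_i$ play the role of $G_i$, and take $\mathcal I=\{\pi/2\}$, so that $|\mathcal I|=1\le e^{\gamma_\kappa N}$. Two things change relative to the earlier proof. First, with $\Gamma=0$ the covariance in \Cref{lem:ogp-covariance} is simply $B=I_m$, because the vectors $V_i=N^{-1/2}H_i\sigma^i$ are independent $\cN(0,I_D)$ vectors for any deterministic signings. So no eigenvalue bound on $Q$ is needed, and the upper constraint $Q\le\beta$ is never used. The small-ball estimate \eqref{eq:necessary-euclidean}--\eqref{eq:ball-volume} then gives
\[
\Pp\bigl[\|V_i\|_2\le\mathsf R_d\ \forall i\bigr]\le (2e\kappa^2)^{mD/2}.
\]
This is even stronger than \eqref{eq:joint-small-ball}, since $2e\kappa^2\le 2e\kappa^2/(1-\beta)\cdot 8\kappa^2$ for small $\kappa$. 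We can in any case bound it by $\exp(-mD\ellzero/2)$.

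Second, the counting bound \eqref{eq:tuple-count} only used the lower overlap bound. Indeed, $Q(\sigma^1,\sigma^i)>\beta-\eta$ places $\sigma^i$ in the Hamming ball of radius $p_{\rm ov}N$ around $\sigma^1$. So the number of admissible $m$-tuples is still at most $\exp\{N\log2+(m-1)NH(p_{\rm ov})\}$.

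Combining the two bounds, the expected number of feasible $m$-tuples with pairwise overlap above $\beta-\eta$ satisfies
\[
\log\E N\le N\log 2+m\bigl[NH(p_{\rm ov})-D\ellzero/2\bigr].
\]
By \eqref{eq:entropy-estimate}, \eqref{eq:ogp-density}, and the choice \eqref{eq:ogp-parameters-m} of $m$, this is at most $-c''N$, exactly as in the proof of \Cref{thm:ogp} (dropping $\gamma_\kappa$ only helps). Markov's inequality then concludes.

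There is no real obstacle here. The one point to verify is that nowhere in the chain of estimates is the upper bound $\beta$ or the cap on $|\mathcal I|$ used, beyond what is noted above. Alternatively, one can invoke \Cref{thm:ogp} directly via a pigeonhole argument on the overlap intervals, but the direct recomputation is cleaner.
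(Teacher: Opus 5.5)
Your proposal is correct and is essentially the paper's proof: the same tuple count \eqref{eq:tuple-count} (which uses only the lower overlap bound), the same small-ball estimate for the independent vectors $H_i\sigma^i/\sqrt N$, and the same first-moment plus Markov step. The paper keeps the sharper bound $(e\kappa^2)^{mD/2}$; your $(2e\kappa^2)^{mD/2}$ is valid but loose, and your displayed comparison should simply read $2e\kappa^2\le e/4$. You then weaken this to $e^{-mD\ellzero/2}$, which still suffices because \eqref{eq:ogp-density} and \eqref{eq:entropy-estimate} give $\tau H(p_{\rm ov})\le\ellzero/8$; your pigeonhole aside via \Cref{thm:ogp} would not work, however, since the overlaps may lie in $(\beta,1]$, outside the window that theorem forbids.
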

\begin{proof}
Note that the number of $m$-tuples satisfying~\eqref{eq:chaos} is still bounded by~\eqref{eq:tuple-count}. For each fixed tuple $(\bs^1,\dots,\bs^m)$,  $H_i\sigma^i/\sqrt N$ are independent standard Gaussian vectors in $\R^D$. By \eqref{eq:necessary-euclidean}--\eqref{eq:ball-volume}, we have
\[
\mathbb{P}\bigl[\sigma^i \in \cS_{d,N}(H_i;\kappa),\forall i\bigr]\le (e\kappa^2)^{mD/2}.
\]
Hence the logarithm of the expected number of tuples $(\sigma^1,\dots,\sigma^m)$ is at most
\[
 N\log2+m\left[NH(p_{\mathrm{ov}})+\frac D2\log(e\kappa^2)\right].
\]
For sufficiently small $\kappa$, $\log(e\kappa^2)=1-2L_\kappa\le-L_\kappa$. Using \eqref{eq:entropy-estimate} and the Gaussian density bound, we have
\[
\frac{NH(p_{\mathrm{ov}})+\frac D2\log(e\kappa^2)}{d^2}\le 
 \frac{\ellzero}{8}-\frac{L_\kappa}{4}+o(1)
 \le-\frac{L_\kappa}{8}
\]
for sufficiently small $\kappa$ and large $d$. The choice of $m$ again makes the negative term dominate $N\log2$, after which Markov's inequality establishes~\Cref{lem:chaos}.
\end{proof}
\subsection{Proof of hardness for stable algorithms}
We next provide the details of the interpolation/Ramsey argument, specialized to the present model.
\begin{lemma}[Hamming control of overlap]\label{lem:overlap-continuity}
For any $\sigma_1,\sigma_2,\sigma_1',\sigma_2'\in\{-1,1\}^N$,
\begin{equation}\label{eq:overlap-continuity}
 |Q(\sigma_1,\sigma_2)-Q(\sigma'_1,\sigma'_2)|
 \le\frac2N\big(\dist(\sigma_1,\sigma'_1)+\dist(\sigma_2,\sigma'_2)\big).
\end{equation}
\end{lemma}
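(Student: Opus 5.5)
The argument rests on the identity $Q(\sigma,\sigma')=1-\frac2N\dist(\sigma,\sigma')$ from~\eqref{eq:overlap-Q}, combined with the triangle inequality for the Hamming metric $\dist$ on $\{-1,1\}^N$. Subtracting the two instances of the identity gives
\[
 Q(\sigma_1,\sigma_2)-Q(\sigma_1',\sigma_2')=\frac2N\bigl(\dist(\sigma_1',\sigma_2')-\dist(\sigma_1,\sigma_2)\bigr),
\]
so it suffices to prove
\[
 \bigl|\dist(\sigma_1,\sigma_2)-\dist(\sigma_1',\sigma_2')\bigr|\le\dist(\sigma_1,\sigma_1')+\dist(\sigma_2,\sigma_2').
\]

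For the upper direction, we apply the triangle inequality twice:
\[
 \dist(\sigma_1,\sigma_2)\le\dist(\sigma_1,\sigma_1')+\dist(\sigma_1',\sigma_2')+\dist(\sigma_2',\sigma_2).
\]
We then use the symmetry of $\dist$ to rewrite $\dist(\sigma_2',\sigma_2)=\dist(\sigma_2,\sigma_2')$. For the lower direction, we swap the roles of the primed and unprimed signings and repeat the same two-step triangle inequality. Together the two bounds give the absolute-value estimate, and multiplying by $2/N$ yields~\eqref{eq:overlap-continuity}.

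An alternative route works directly with inner products. One writes
\[
 \ip{\sigma_1}{\sigma_2}-\ip{\sigma_1'}{\sigma_2'}=\ip{\sigma_1-\sigma_1'}{\sigma_2}+\ip{\sigma_1'}{\sigma_2-\sigma_2'}
\]
and then uses $\|\sigma_1-\sigma_1'\|_1=2\dist(\sigma_1,\sigma_1')$ together with $\|\sigma_2\|_\infty=1$, and likewise for the second term. This also gives the factor $\frac2N$.

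The argument consists only of elementary metric facts, and no step presents a real obstacle. The one point to handle with care is tracking the constant $\frac2N$ correctly through the overlap–Hamming identity.
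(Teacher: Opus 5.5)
Your proof is correct, and it matches the paper's in substance. The paper argues that each coordinate flip in either argument changes $\ip{\sigma_1}{\sigma_2}$ by at most $2$, then telescopes with the triangle inequality; this is your alternative inner-product decomposition, while your main route via $Q=1-\frac{2}{N}\dist$ and the reverse triangle inequality for $\dist$ is an equivalent repackaging that gives the same constant $\frac{2}{N}$.
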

\begin{proof}
Changing one coordinate in either argument changes the unnormalized inner product by at most $2$, from which the conclusion follows via triangle inequality.
\end{proof}
\paragraph{The finite interpolation and Ramsey construction}
\begin{proof}[Proof of~\Cref{thm:stable-lower}]
Assume a stable algorithm (in the sense of~\Cref{def:strong}) succeeds with probability $1-o_d(1)$ as $d\to\infty$. Fix $m,\beta,\eta$ as in \eqref{eq:ogp-parameters-m} and~\eqref{eq:ogp-parameters-beta-eta-gamma}. Apply stability with $\delta=\eta/16$, and let $\rho_0<1$ be the resulting threshold. Choose a fixed integer $Q\ge1$ such that
\[
 \rho_Q=\cos\left(\frac{\pi}{2Q}\right)\ge\rho_0,
 \qquad\theta_k=\frac{k\pi}{2Q},\quad0\le k\le Q.
\]
Let $T=R_{Q+1}(m)$ be a multicolor Ramsey number: every coloring of the edges of the complete graph on $T$ vertices with $Q+1$ colors contains a monochromatic $K_m$. It is known that $T$ is finite, and in fact $T\le (Q+1)^{(Q+1)m}$, see~\cite{conlon2015recent} for a discussion (and~\cite{erdos1935combinatorial} for the so-called neighborhood
chasing argument used for proving $T\le (Q+1)^{(Q+1)m}$). Note that all quantities selected thus far are fixed before taking $d\to\infty$.

Generate independent $G_0,G_1,\ldots,G_T$ and one independent common seed $\omega$ and construct the interpolation paths $G_i(\theta_k)$~\eqref{eq:interpolation} for $i\in[T]:=\{1,\dots,T\}$ For each path $i\in[T]$ and grid point $0\le k\le Q$ define
\[
\sigma_i(k):=\A\bigl(G_i(\theta_k),\omega\bigr)\in\{-1,1\}^N
\]
At $k=0$, all inputs and all seeds are identical, so the outputs are identical, $\sigma_i(0)=\sigma_j(0)$ for $1\le i<j\le T$. Next, for a fixed $i$, it holds that
\[
\E\bigl[G_i(\theta_k)_{ab}G_i(\theta_{k+1})_{ab}\bigr] = \cos(\theta_{k+1}-\theta_k)=\rho_Q
\]
for every $1\le a\le D$ and $1\le b\le N$. Since $T$ and $Q$ are fixed independently of $d$, the success and stability assumptions and a union
bound show that, with probability tending to one,
\begin{align}
 \sigma_i(k)&\in\cS_{d,N}(G_i(\theta_k);\kappa)&&\text{for all }i,k,\label{eq:all-feasible}\\
 \dist(\sigma_i(k+1),\sigma_i(k))&\le\delta N&&\text{for all }i,\ 0\le k<Q.\label{eq:all-stable}
\end{align}
Call the event in~\eqref{eq:all-feasible} as $\mathcal{E}_{\rm sat}$ and the event in~\eqref{eq:all-stable} as $\mathcal{E}_{\rm st}$. 
Applying~\Cref{thm:ogp} to each $S\subset [T]$ with $|S|=m$, we obtain that there is no ensemble-OGP configuration on the grid $I_Q = \{\theta_0,\dots,\theta_Q\}$. with high probability, by a union bound over all such subsets and using the fact $\binom{T}{m}$ is constant in $d$. Call this event $\mathcal{E}_{\rm ogp}$. Lastly,  applying~\Cref{lem:chaos} to each such subset at the independent endpoints $k=Q$, with probability tending to one no $m$ endpoint outputs that are all feasible have all
pairwise overlaps greater than $\beta-\eta$; we denote this event by $\mathcal{E}_{\rm chaos}$. 

By a union bound, the event $\mathcal{E}_{\rm sat}\cap \mathcal{E}_{\rm st}\cap \mathcal{E}_{\rm ogp}\cap \mathcal{E}_{\rm chaos}$ holds with probability $1-o_d(1)$. In the remainder of the proof, we work on this high probability event.

\paragraph{Ramsey step} Color each edge $(i,j)$ of the complete graph on $[T]$ as follows. If
\[
Q(\sigma_i(Q),\sigma_j(Q))>\beta-\eta,
\]
assign color $0$. Otherwise, we have $Q(\sigma_i(Q),\sigma_j(Q))\le\beta-\eta$ while $Q(\sigma_i(0),\sigma_j(0))=1$. Using stability, there exists a first $k\in\{1,\dots,Q\}$ for which $Q(\sigma_i(k),\sigma_j(k))\le\beta$; we color $(i,j)$ with $k$. By minimality of $k$, \Cref{lem:overlap-continuity}, and \eqref{eq:all-stable},
\[
 \beta\ge Q(\sigma_i(k),\sigma_j(k))>\beta-4\delta\ge\beta-\eta.
\]
Thus an edge of color $k\ge 1$ has overlap in $[\beta-\eta,\beta]$ at time step $\theta_k$.

Ramsey's theorem yields the existence of a monochromatic clique of size $m$. Note that if the clique is colored $0$, it contradicts chaos property (\Cref{lem:chaos}). A clique of color $k\ge 1$ supplies an $m$-tuple $\sigma^1(\theta_k),\dots,\sigma^m(\theta_k)$, all feasible with respect to a correlated instance, and $Q(\sigma^i(\theta_k),\sigma^j(\theta_k))\in[\beta-\eta,\beta]$ for every $i<j$, contradicting ensemble multi OGP (\cref{thm:ogp}).  This contradiction proves the theorem.
\end{proof}
\subsection{An alternative notion of stability}\label{sec:quantitative-stability}
For comparison with the literature, we record a finite-parameter version of stability. 
\begin{definition}\label{def:alt-stability}
An algorithm is $(\kappa,\rho,p_f,p_{\mathrm{st}},f,L)$-stable if it succeeds with probability at least $1-p_f$ and, for entrywise $\rho$-correlated Gaussian pair $(G,\widetilde{G})$ coupled with a common seed,
\begin{equation}\label{eq:parameterized}
 \Pp\left[\dist(\A(G),\A(\widetilde G))
 \le f+L\norm{G-\widetilde G}_{\mathrm F}\right]\ge 1-p_{\mathrm{st}}.
\end{equation}
\end{definition}
\Cref{def:alt-stability} is the matrix discrepancy analogue
of the finite-parameter stability notions considered
in~\cite{gamarnikKizildagPerkinsXu2022,
gamarnikKizildagPerkinsXu2023,liSchrammZhou2024}.  Let $\A_{D,N}$ denote~\Cref{alg:recenter}, and fix
$\kappa>0$ and an aspect ratio $\alpha=2\tau$ in the
success regime of~\Cref{thm:stable-upper}. By~\Cref{thm:algorithm-stability}, for every $\epsilon>0$
there exists $\rho_0=\rho_0(\epsilon,\alpha)<1$ such that
\[
 p_{\rm st}(d)
 := \sup_{\rho\in[\rho_0,1)} \Pp\left[
 \dist\bigl(\A_{D,N}(G,\omega),
            \A_{D,N}(G^\rho,\omega)\bigr)
 >\epsilon N
 \right]
 =o_d(1).
\]
Since $L\|G-G^\rho\|_F \ge 0$, this also bounds the failure
probability in~\eqref{eq:parameterized} with $f=\epsilon N$,
for every $\rho\in[\rho_0,1)$ and every $L\ge0$. Moreover,~\Cref{thm:stable-upper} yields that the algorithm is successful with probability $1-p_f(d)$, for some $p_f(d)=o_d(1)$. Consequently, $\A_{D,N}$ is
$(\kappa,\rho,p_f(d),p_{\rm st}(d),\epsilon N,L)$-stable
for these parameters.

Our hardness proof also applies under the weaker
finite-parameter assumptions of~\Cref{def:alt-stability},
as recorded in the following corollary.
\begin{corollary}\label{cor:quantitative}
Fix the parameters of~\Cref{thm:ogp} and a constant $\overline{L}>0$. Choose
\[
 Q\ge\max\{1,64\pi\overline L/\eta\},\qquad
 \rho=\cos\left(\frac\pi{2Q}\right),\qquad
 f=\frac{\eta N}{32},\qquad L=\overline L\sqrt{\frac ND}.
\]
There are positive constants $p_f,p_{\mathrm{st}}$, depending only on the fixed parameters, such that for all
sufficiently large $d$ no algorithm is stable with the parameter tuple $(\kappa,\rho,p_f,p_{\mathrm{st}},f,L)$.
\end{corollary}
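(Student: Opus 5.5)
The proof re-runs the interpolation/Ramsey argument of \Cref{thm:stable-lower}. The only change is that the uniform-in-$\rho$ stability of \Cref{def:strong} is replaced by the finite-parameter bound \eqref{eq:parameterized}, applied at the single correlation $\rho=\cos(\pi/(2Q))$. We argue by contradiction: suppose an algorithm is $(\kappa,\rho,p_f,p_{\rm st},f,L)$-stable with $p_f,p_{\rm st}$ small constants to be chosen. Generate $G_0,\dots,G_T$ with $T=R_{Q+1}(m)$, the paths $G_i(\theta_k)$ with $\theta_k=k\pi/(2Q)$, and a common seed $\omega$. Set $\sigma_i(k)=\A(G_i(\theta_k),\omega)$.

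\textbf{Step 1: convert the Frobenius term to a Hamming increment of order $\eta N$.} Consecutive grid points satisfy
\[
G_i(\theta_{k+1})-G_i(\theta_k)=(\cos\theta_{k+1}-\cos\theta_k)G_0+(\sin\theta_{k+1}-\sin\theta_k)G_i .
\]
This difference is Gaussian with iid entries of variance $2(1-\cos(\pi/2Q))\le (\pi/2Q)^2$. By $\chi^2$ concentration, with probability $1-e^{-cDN}$ we have $\|G_i(\theta_{k+1})-G_i(\theta_k)\|_F\le (1+o(1))\frac{\pi}{2Q}\sqrt{DN}$. Hence
\[
L\|\cdot\|_F\le \overline L\sqrt{N/D}\cdot\frac{\pi}{Q}\sqrt{DN}=\frac{\pi\overline L N}{Q}\le \frac{\eta N}{64}.
\]
Together with $f=\eta N/32$, each consecutive step moves at most $\delta N$ with $\delta:=3\eta/64<\eta/16$ in Hamming distance, on the stability event. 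Each pair $(G_i(\theta_k),G_i(\theta_{k+1}))$ is marginally an entrywise $\rho$-correlated pair with a common seed, so \eqref{eq:parameterized} applies to it.

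\textbf{Step 2: union bound with constant failure probabilities.} There are $T(Q+1)$ feasibility events and $TQ$ stability events, and $T,Q$ depend only on the fixed parameters. Take
\[
p_f=p_{\rm st}=\frac{1}{4T(Q+1)} .
\]
Then all of \eqref{eq:all-feasible} and \eqref{eq:all-stable} (with the $\delta$ above) hold simultaneously with probability at least $1/2-o(1)$. The events $\mathcal E_{\rm ogp}$, taken over all $m$-subsets and grid $\{\theta_k\}$ via \Cref{thm:ogp}, and $\mathcal E_{\rm chaos}$, via \Cref{lem:chaos}, each hold with probability $1-o(1)$. The grid has $|\mathcal I|=Q+1\le e^{\gamma_\kappa N}$ points and $\binom Tm$ is constant. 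So the intersection of all four events has probability at least $1/2-o(1)>0$ for large $d$.

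\textbf{Step 3: run the Ramsey coloring verbatim.} On this intersection, color edges exactly as in the proof of \Cref{thm:stable-lower}. At $k=0$ all outputs coincide. At the first $k$ with overlap $\le\beta$, \Cref{lem:overlap-continuity} and $4\delta\le\eta$ place the overlap in $[\beta-\eta,\beta]$. A monochromatic $K_m$ then contradicts either \Cref{thm:ogp} or \Cref{lem:chaos}.

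The main obstacle is Step 1: checking that the normalization $L=\overline L\sqrt{N/D}$ and the choice $Q\ge 64\pi\overline L/\eta$ make $L\|G-\widetilde G\|_F$ of order $\eta N$ rather than something growing with $d$. This requires the Frobenius norm of the increment to concentrate at $\sqrt{DN}\cdot\Theta(1/Q)$, which is routine. The remaining point is that constant (not vanishing) $p_f,p_{\rm st}$ suffice. This holds because only a bounded number of algorithm calls enters the union bound, and the geometric events hold with probability $1-o(1)$.
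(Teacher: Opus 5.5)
Your proposal is correct and matches the paper's proof essentially step for step. Both use the $\chi^2$ bound $\|G_i(\theta_{k+1})-G_i(\theta_k)\|_F\le\frac{\pi}{Q}\sqrt{DN}$ to get adjacent Hamming increments at most $3\eta N/64<\eta N/16$, then constant $p_f,p_{\rm st}$ of order $[T(Q+1)]^{-1}$ so the union bound leaves a positive-probability event, and finally the unchanged Ramsey coloring on that event; the only difference is your choice $1/(4T(Q+1))$ versus the paper's $<[10T(Q+1)]^{-1}$, which is immaterial.
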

\begin{proof}[Proof of~\Cref{cor:quantitative}]
Use the same fixed Ramsey number $T=R_{Q+1}(m)$ and the interpolation grid. For consecutive inputs on one interpolation path, $G_i(\theta_{k+1})-G_i(\theta_k)$ has iid centered Gaussian entries with variance $2(1-\rho)$. Since $(1-\rho)\le \pi^2/(8Q^2)$, chi-square concentration shows that 
\[
 \norm{G_i(\theta_{k+1})-G_i(\theta_k)}_{\mathrm F}
 \le\frac\pi Q\sqrt{DN}
\]
with probability tending to one  for every $i\in[T]$ and $0\le k\le Q-1$. Whenever \eqref{eq:parameterized} holds,  the Hamming distance of adjacent outputs is therefore at most
\[
 \frac{\eta N}{32}+\frac{\pi\overline L}{Q}N
 \le\frac{3\eta N}{64}<\frac{\eta N}{16}.
\]
Choose $p_f$ and $p_{\rm st}$ smaller than $[10T(Q+1)]^{-1}$. Taking a union bound for all success and stability events yields a probability bounded away from zero, while the geometric event and chi-square event above has failure probabilities tending to zero. The result follows by applying the same deterministic Ramsey construction on the resulting nonempty event.
\end{proof}
\par}

\section{\newedit{Hardness for Online Algorithms}}\label{sec:online-hardness}
We begin with an auxiliary probability estimate. Let $\mathbb{B}(\boldsymbol{a},R):=\{\boldsymbol{v}:\|\boldsymbol{v}-\boldsymbol{a}\|_2\le R\}$ denote the Euclidean ball of radius $R$, centered at $\boldsymbol{a}\in\R^D$,  where $D:=d(d+1)/2$. 
\begin{lemma}\label{lemma:AuxilProb}
Let $\boldsymbol{Z}\sim \cN(0,I_D)$. For any $\boldsymbol{a}\in\R^{D}$, $R>0$ and $d$ sufficiently large,
   \[
\mathbb{P}\Bigl[\|\boldsymbol{Z} - \boldsymbol{a}\|_2\le R\Bigr] \le \newedit{\left(\frac{eR^2}{D}\right)^{D/2}}.
   \]
\end{lemma}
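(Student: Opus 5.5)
The bound follows from the Gaussian-density-times-volume argument already used in \eqref{eq:ball-volume}--\eqref{eq:joint-small-ball}. The density of $\boldsymbol{Z}\sim\cN(0,I_D)$ is $(2\pi)^{-D/2}e^{-\|\boldsymbol{z}\|_2^2/2}$, which is at most $(2\pi)^{-D/2}$ everywhere. Hence, for any center $\boldsymbol{a}$,
\[
\Pp\bigl[\|\boldsymbol{Z}-\boldsymbol{a}\|_2\le R\bigr]
=\int_{\mathbb{B}(\boldsymbol{a},R)}(2\pi)^{-D/2}e^{-\|\boldsymbol{z}\|_2^2/2}\,d\boldsymbol{z}
\le(2\pi)^{-D/2}\operatorname{Vol}\bigl(B_2^D(R)\bigr),
\]
using translation invariance of Lebesgue measure. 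No information about $\boldsymbol{a}$ is needed, which is exactly why this shifted estimate will survive conditioning on an adversarial, data-dependent prefix in the online argument.

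Next we insert the volume formula $\operatorname{Vol}(B_2^D(R))=\pi^{D/2}R^D/\Gamma(D/2+1)$. It remains to show $\Gamma(D/2+1)\ge (D/(2e))^{D/2}$, which is the elementary Stirling-type lower bound already invoked in \eqref{eq:ball-volume}. This bound holds for every $D\ge1$, for instance via $\Gamma(x+1)\ge (x/e)^x$, which follows from $e^x=\sum_k x^k/k!\ge x^x/\Gamma(x+1)$ interpreted through the integral representation, or more simply by log-convexity. With it,
\[
(2\pi)^{-D/2}\frac{\pi^{D/2}R^D}{\Gamma(D/2+1)}\le (2\pi)^{-D/2}\pi^{D/2}R^D\Bigl(\frac{2e}{D}\Bigr)^{D/2}=\Bigl(\frac{eR^2}{D}\Bigr)^{D/2}.
\]

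The only step needing any care is the gamma-function lower bound. If I want to avoid citing it, I would prove $\Gamma(x+1)\ge (x/e)^x$ for real $x>0$ directly: $\Gamma(x+1)=\int_0^\infty s^xe^{-s}\,ds\ge\int_x^\infty s^xe^{-s}\,ds\ge x^x\int_x^\infty e^{-s}\,ds=x^xe^{-x}$. This also makes the ``$d$ sufficiently large'' hypothesis unnecessary, although it does no harm. Nothing else is an obstacle, since the bound is trivially true (the right side is at least $1$) whenever $R^2\ge D/e$.
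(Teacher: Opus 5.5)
Your proof is correct and uses the same density-times-volume argument as the paper: bound the $\cN(0,I_D)$ density by $(2\pi)^{-D/2}$, use translation invariance and the ball-volume formula, and lower-bound $\Gamma(D/2+1)$. The only difference is that the paper uses Stirling's asymptotics (hence its ``$d$ sufficiently large'' hypothesis), whereas your bound $\Gamma(x+1)\ge\int_x^\infty s^xe^{-s}\,ds\ge (x/e)^x$ holds for all real $x>0$ and so removes that hypothesis; your series remark only works for integer $x$, but the integral argument covers half-integer $D/2$ and is all you need.
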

\begin{proof}[Proof of Lemma~\ref{lemma:AuxilProb}]
    Note that the density of $\boldsymbol{Z}$ is given by
    \[
    f_{\boldsymbol{Z}}(\boldsymbol{z}) = (2\pi)^{-\frac{D}{2}} \exp\left(-\newedit{\frac12}\|\boldsymbol{z}\|_2^2\right) \le (2\pi)^{-\frac{D}{2}}.
    \]
    Thus, 
\begin{equation}\label{eq:Estimate1}
            \mathbb{P}\Bigl[\|\boldsymbol{Z} - \boldsymbol{a}\|_2\le R\Bigr] = \mathbb{P}\bigl[\boldsymbol{Z}\in \mathbb{B}(\boldsymbol{a},R)\bigr]\le (2\pi)^{-\frac{D}{2}}{\rm Vol}(\mathbb{B}(\boldsymbol{a},R)).
        \end{equation}
Since the volume of an Euclidean ball is invariant under translation, we have from standard results (see, e.g.,~\cite{lasserre2001quick}) that
\begin{equation}\label{eq:Estimate2}
    {\rm Vol}(\mathbb{B}(\boldsymbol{a},R)) =R^{D} {\rm Vol}(\mathbb{B}(0,1)) = \frac{(\pi R^2)^{D/2}}{\Gamma\left(\frac{D}{2}+1\right)},
\end{equation}
where $\Gamma(\cdot)$ is the Gamma function. Stirling's formula yields
\begin{equation}\label{eq:Estimate3}
    \Gamma\left(\frac{D}{2}+1\right)  = \sqrt{\pi D}\left(\frac{D}{2e}\right)^{\frac{D}{2}}\Bigl(1+o_d(1)
    \Bigr).
\end{equation}
Combining~\eqref{eq:Estimate1}-\eqref{eq:Estimate3} with the trivial bound $\sqrt{\frac{1}{\pi D}}(1+o_d(1))\le 1$ for $d$ large establishes Lemma~\ref{lemma:AuxilProb}.
\end{proof}
{\color{addedblue}
\begin{proof}[Proof of Theorem~\ref{thm:Online-LowerBd}]
Set $C=1/(10e)$ and $\kappa^*=(8e)^{-1/2}$. Fix $0<\kappa\le\kappa^*$ and $0<\tau\le C/\kappa^2$, and let $N/d^2\to\tau$. Choose
\[
 \Delta=4e\kappa^2,\qquad q=\lfloor\Delta N\rfloor,
 \qquad \Delta_N=q/N.
\]
Then $0<\Delta\le1/2$, $1\le q<N$ for all sufficiently large $d$, and $\Delta_N\to\Delta$.

Fix an arbitrary online algorithm $\A$ with independent random seed $\omega$, and write $\sigma=\A(A_1,\dots,A_N;\omega)$. Condition on
\[
 \cF=\sigma(\omega,A_1,\dots,A_{N-q}).
\]
By online measurability, the signed prefix
\[
 S=\sum_{j=1}^{N-q}\sigma_j A_j
\]
is $\cF$-measurable. The suffix matrices remain independent $\GOE(d)$ matrices, independent of $\cF$. For each fixed $u\in\{-1,1\}^q$, let
\[
 M_u=\sum_{j=1}^q u_j A_{N-q+j}.
\]
Then $M_u\distr\sqrt q\,W$ for $W\sim\GOE(d)$, and $M_u$ is independent of $\cF$. Using the Frobenius isometry~\eqref{eq:svec}, we have
\[
 Z_u:=\sqrt{\frac{d}{2q}}\,\svec(M_u)\sim\cN(0,I_D),
 \qquad D=\frac{d(d+1)}2.
\]
Since $\|B\|_F\le\sqrt d\,\|B\|_{\rm op}$ for every symmetric matrix $B$, Lemma~\ref{lemma:AuxilProb} gives, almost surely,
\begin{align*}
 &\Pp\bigl(\|S+M_u\|_{\rm op}\le\kappa\sqrt N\mid\cF\bigr)\\
 &\quad\le\Pp\left(\left\|Z_u+\sqrt{\frac{d}{2q}}\,\svec(S)\right\|_2
       \le\frac{\kappa d}{\sqrt{2\Delta_N}}\,\middle|\,\cF\right)
 \le\left(\frac{e\kappa^2d}{\Delta_N(d+1)}\right)^{D/2}.
\end{align*}
If the algorithm succeeds, its suffix signs equal one of the $2^q$ vectors $u\in\{-1,1\}^q$. A conditional union bound, followed by expectation, therefore yields
\begin{equation}\label{eq:online-prefix-bound}
 p_d:=\Pp(\A\text{ succeeds})
 \le 2^q\left(\frac{e\kappa^2d}{\Delta_N(d+1)}\right)^{D/2}.
\end{equation}
This bound holds for every online algorithm, regardless of how it chooses the suffix signs.

Taking logarithms in~\eqref{eq:online-prefix-bound} and using $N/d^2\to\tau$ gives
\begin{align*}
 \limsup_{d\to\infty}\frac{\log p_d}{d^2}
 &\le \tau\Delta\log2+\frac14\log\frac{e\kappa^2}{\Delta}\\
 &=\left(4e\tau\kappa^2-\frac12\right)\log2
 \le-\frac{\log2}{10}.
\end{align*}
In particular, for all sufficiently large $d$,
\[
 p_d\le\exp\left(-\frac{\log2}{20}d^2\right)
 \le\exp\left(-\frac{\log2}{40\tau}N\right),
\]
where the last inequality uses $N\le2\tau d^2$. Choosing
$c=(\log2)/(80\tau)>0$ gives $p_d<e^{-cN}$, proving the theorem with the stated strict conclusion.
\end{proof}
}

\section*{Use of AI Tools}
The authors used AI models, in particular GPT-5.6 and GPT-6, with Pro and Astra modes, during the preparation of this manuscript. The algorithmic lower bounds were established by the authors (later simplified by the AI), suggesting the location of the algorithmic phase transition points up to constants. Guided by these thresholds, the authors then engaged in extensive conversations with the AI model from which the matching algorithms, particularly the recenter-and-round algorithm (\Cref{alg:recenter}), and the key algorithmic idea of using rotational symmetry to convert small Euclidean residual to operator norm bounds emerged. The authors take full responsibility for the mathematical content and the correctness of the final text.

\bibliographystyle{amsalpha}
\makeatletter
\let\originalbibitem\bibitem
\def\bibitem{\@ifnextchar[\markedbibitem\markedbibitemplain}
\def\setbibcolor#1{\def\currentbibkey{#1}\def\oldkeya{unchangedentry}\def\oldkeyb{lasserre2001quick}%
 \ifx\currentbibkey\oldkeya\color{black}\else\ifx\currentbibkey\oldkeyb\color{black}\else\color{addedblue}\fi\fi}
\def\markedbibitem[#1]#2{\setbibcolor{#2}\originalbibitem[#1]{#2}}
\def\markedbibitemplain#1{\setbibcolor{#1}\originalbibitem{#1}}
\makeatother
\bibliography{bibliography}
\end{document}